\newcommand{\ones}{\mathbf 1}
\newcommand{\reals}{{\mbox{\bf R}}}
\newcommand{\symm}{{\mbox{\bf S}}}  
\newcommand{\Tr}{\mathop{\bf Tr}}
\newcommand{\card}{\mathop{\bf card}}
\newcommand{\prox}{\mathbf{prox}}
\newcommand{\Expect}{\mathop{\bf E{}}}
\newcommand{\prob}{\mathop{\bf Pr}} 
\newcommand{\argmin}{\mathop{\rm argmin}}
\newcommand{\sign}{\mathop{\bf sign}}
\newcommand{\cf}{{\it cf.}}
\newcommand{\ie}{{\it i.e.}}
\newcommand{\etc}{{\it etc.}}
\newtheorem{theorem}{Theorem}[section]
\newtheorem{lemma}[theorem]{Lemma}
\newtheorem{corollary}[theorem]{Corollary}
\newcommand{\minimize}{\mathop{\mbox{minimize}}} 
\newcommand{\minimizewrt}[1]{\underset{#1}{\minimize}}
\newcommand{\subjectto}{\mbox{subject to}}
\newcommand{\optprobstart}{\begin{equation}}
\newcommand{\optprobend}{\end{equation}}
\newcommand{\optprobstartnn}{\begin{equation*}}
\newcommand{\optprobendnn}{\end{equation*}}
\newcommand{\mtxstart}{\left[\begin{array}}
\newcommand{\mtxend}{\end{array}\right]}	
\newcommand{\vect}{\mathop{\bf vec}}
\newcommand{\vech}{\mathop{\bf vech}}
\newcommand{\conc}{CONCORD}                     
\newcommand{\ours}{PseudoNet}
\newcommand{\glasso}{GLasso}
\newcommand{\spacee}{SPACE}
\newcommand{\splice}{SPLICE}
\newcommand{\condreg}{CondReg}
\newcommand{\samplee}{Sample}
\newcommand{\ledoit}{Ledoit}
\newcommand{\djia}{DJIA}
\newcommand{\concs}{\textrm{con}}                 
\newcommand{\ourss}{\textrm{net}}
\newcommand{\spls}{\textrm{spl}}
\newcommand{\spcs}{\textrm{spc}}
\newcommand{\X}{\textrm{off}}
\newcommand{\D}{\textrm{diag}}
\newcommand{\Omegatrue}{\Omega^0}
\newcommand{\bic}{\mathbf{Bic}}
\newcommand{\rss}{\mathbf{rss}}
\def\P{\mathbb{P}}
\renewcommand{\P}{\prob}
\title{Generalized Pseudolikelihood Methods for Inverse Covariance Estimation}
\author{Alnur Ali$^1$ \and Kshitij Khare$^2$ \and Sang-Yun Oh$^3$ \and Bala Rajaratnam$^4$}
\date{$^1$Machine Learning Department, Carnegie Mellon University \\ 
$^2$Department of Statistics, University of Florida \\ 
$^3$Department of Statistics and Applied Probability, UC Santa Barbara \\
$^4$Department of Statistics, Stanford University}
\begin{document}
\maketitle
\begin{abstract}
We introduce \ours, a new \textit{pseudolikelihood}-based estimator of the inverse covariance matrix, that has a number of useful statistical and computational properties.  We show, through detailed experiments with synthetic and also real-world finance as well as wind power data, that \ours~outperforms related methods in terms of estimation error and support recovery, making it well-suited for use in a downstream application, where obtaining low estimation error can be important.  We also show, under regularity conditions, that \ours~is consistent.  Our proof assumes the existence of accurate estimates of the diagonal entries of the underlying inverse covariance matrix; we additionally provide a two-step method to obtain these estimates, even in a high-dimensional setting, going beyond the proofs for related methods.  Unlike other pseudolikelihood-based methods, we also show that \ours~does not \textit{saturate}, \ie, in high dimensions, there is \textit{no} hard limit on the number of nonzero entries in the \ours~estimate.  We present a fast algorithm as well as \textit{screening rules} that make computing the \ours~estimate over a range of tuning parameters tractable.
\end{abstract}

\newif\ifwantbiomtabs
\wantbiomtabsfalse
\newcommand{\fsw}{0.31\textwidth}

\section{Introduction}

In this paper, we consider the problem of obtaining a sparse estimate of the inverse covariance matrix of a collection of random variables in a high-dimensional setup, where the number of variables (\ie, features) $p$ is possibly much larger than the number of data samples $n$.  This is an important problem in modern statistics as well as across a variety of applications, including finance (see, for example, \citet{ledoit2003honey,yuan2007,won2013condition,khare2014convex}) and biology (see, for example, \citet{BEA:08,friedman2008sparse,rothman2008,peng2009partial,friedman2010applications,khare2014convex}).  In many cases, the obtained estimate is used in a downstream application in some way, and the sparsity pattern of the estimate is often inspected and interpreted, in order to reveal the nature of the conditional independencies between the random variables.  Sparsity is useful here for a number of reasons, including making the resulting estimates more interpretable, especially in high dimensions, where we would like the number of nonzero entries in our estimate to be small.  

In high dimensions (\ie, when $p \gg n$), it makes sense to obtain an estimate by maximizing an $\ell_1$-penalized Gaussian likelihood (see, for example, \citet{yuan2007,BEA:08,friedman2008sparse,rothman2008}) --- although other penalities are certainly possible.  This is, of course, a massive area of research, and a number of estimators for, as well as extensions to, this basic Gaussian setup have been proposed over the years, including the seminal graphical lasso algorithm (\glasso) of \citet{friedman2008sparse}.  \textit{Pseudolikelihood}-based estimators \citep{besag1974} take a somewhat different approach, in that they can be seen as (roughly) minimizing the sum of a collection of $\ell_1$-penalized regression (\ie, lasso) problems, one for each variable, which more directly exploits the connection between the inverse covariance matrix and partial correlations; see, for example, \citet{meinshausen2006high,rocha2008path,peng2009partial,friedman2010applications,khare2014convex,alnur}.  Pseudolikelihood-based estimators are thus, in a sense, simpler and more flexible in moving beyond the usual Gaussian setup than other estimators.

Under the assumption that the data-generating process is multivariate normal, it is a well-known fact that the random variables $i$ and $j$ are conditionally independent given the remaining variables if and only if the $(i,j)$ entry in the underlying inverse covariance matrix is zero (see, for example, \citet{lauritzen1996graphical}); this fact is often used to obtain an undirected graphical model of the data, where the vertices of an undirected graph are put in one-to-one correspondence with the random variables, and the absence of an edge between any two vertices takes on the special meaning that the corresponding variables are conditionally independent given the remaining variables.  As a result, much work has looked at producing estimates that accurately recover the underlying support (\ie, the set of nonzero entries) --- on the other hand, we often want to use an estimate later in our workflow, in which case low estimation error (as measured by a suitable matrix norm) is perhaps a more useful criterion for evaluating an estimate.  Asymptotically, the \spacee~and \conc~pseudolikelihood-based estimators of \citet{peng2009partial} and \citet{khare2014convex}, respectively, have been shown to be consistent (in a Frobenius norm sense) under certain conditions; however, carefully checking the conditions required by the consistency proofs in these papers reveals that they presume the existence of accurate estimates of the diagonal entries of the underlying inverse covariance matrix.  A natural choice here is to simply use the diagonal entries of the sample inverse covariance matrix, but such estimates unfortunately do not exist when $p > n$, and alternatives are not immediately apparent.

Returning to the issue of interpretability of pseudolikelihood-based estimates, we raise a basic question: are the estimates given by pseudolikelihood-based methods well-defined (\ie, unique)?  We elaborate below (see Section \ref{sec:related}), but the short answer to this question for now is that the estimates given by many pseudolikelihood-based methods, including \spacee, \conc, the \splice~estimator of \citet{rocha2008path}, as well as the Symmetric Lasso estimator of \citet{friedman2010applications}, may not be unique, and in fact many of these methods may not even converge to a particular estimate --- which can be problematic from an interpretability point of view.  For example, in a finance application, we may wish to understand which assets are correlated, in order to assemble a well-diversified portfolio \citep{markowitz1952portfolio}; if the outcome of an estimation procedure is not necessarily unique, then which estimate/assets should we use?

Furthermore, given the connection between pseudolikelihood-based methods and the lasso, we recall a basic result from lasso theory, which states that the lasso can \textit{saturate}, meaning that when $p > n$, there exists a lasso estimate with at most $n$ nonzero entries (equivalently, selected variables) \citep{rosset2004boosting,Zou05regularizationand,tibshirani2013}; this behavior can be quite limiting from the points of view of interpretability as well as estimation error.  It is therefore natural to ask: do estimates given by existing pseudolikelihood-based methods also saturate?  We show that several estimators, including \spacee, \conc, and SPLICE, unfortunately can saturate, which establishes an analogous result for undirected graphical models (see Section \ref{sec:sat}).

\subsection{Overview of contributions}

In this paper, we introduce a new, more flexible pseudolikelihood-based estimator of the inverse covariance matrix, which we call \ours, that addresses all the aforementioned issues with existing pseudolikelihood-based methods, while preserving their useful properties.  Additionally, the \ours~estimator possesses a number of other useful statistical and computational properties.  We give a brief summary below.

\begin{itemize}
\item \textit{Computational aspects and uniqueness}.  We present a fast algorithm for computing the \ours~estimate, by leveraging recent advances in convex optimization; our algorithm runs in just a few seconds on a standard laptop\footnote{In more detail, the laptop we use is a standard 2015 MacBook Pro, with a two core 3.1 GHz Intel Core i7 5557C processor and 16 GB of memory.}, for problems with thousands of variables.  We show that our algorithm converges at a geometric (``linear'') rate to the (global) solution of a convex optimization problem that defines the \ours~estimate.  Furthermore, this solution is unique, as the objective in the optimization problem is strictly convex.  This contrasts with a number of other pseudolikelihood-based methods \citep{rocha2008path,peng2009partial,friedman2010applications,khare2014convex,NIPS2014_5576}, which do not provide unique estimates, making interpretation difficult, and additionally are either not guaranteed to converge or converge at a slower rate, as with the \conc~estimator of \citet{NIPS2014_5576}.

We also derive \textit{screening rules} for \ours~\citep{BEA:08,tibshirani2012strong,mazumder2012exact}, by leveraging the precise nature of the \ours~optimization problem, which make the optimization problem much faster to solve by omitting some of the variables.  These rules can be implemented as simple checks based on the optimality conditions of the \ours~optimization problem; in some cases, we are able to reduce the size of the optimization problem by 90\%.

\item \textit{Estimation error}.  We show, through detailed experiments with synthetic data, that \ours~significantly outperforms the closely related \conc~estimator of \citet{khare2014convex} --- that we build upon --- in terms of estimation error (as measured by several matrix norms), while also outperforming \conc~in terms of support recovery (\ie, variable selection).  As mentioned above, although the literature often emphasizes support recovery, obtaining an estimate with low estimation error is perhaps more useful in situations where our estimate will be used by a downstream application.

\item \textit{Consistency}.  We also show, under standard regularity conditions, that \ours~is consistent at a rate of $\sqrt{(\log p) / n}$.  The consistency proofs for the related pseudolikelihood-based estimators \spacee~and \conc~assume the existence of accurate estimates of the diagonal entries of the underlying inverse covariance matrix, but do not provide a method for obtaining these estimates when $p > n$.  In this paper, we go further and give a two-step method that obtains accurate diagonal estimates, even when $p > n$; this result is therefore also useful in the consistency proofs for \spacee~\citep[Theorem 3]{peng2009partial} and \conc~\citep[Theorem 2]{khare2014convex}.

\item \textit{Saturation}.  We show that the \ours~estimate does \textit{not} saturate, meaning that when $p \gg n$, the number of variables selected by \ours~can be greater than $np$ (out of $p(p-1)/2$ total variables), which is not true for several other pseudolikelihood-based estimators \citep{rocha2008path,peng2009partial,khare2014convex}; establishing this result involves generalizing an analogous claim for the (standard) lasso as in, for example, \citet{rosset2004boosting,tibshirani2013}.  This result is useful from the points of view of the estimation error as well as the interpretability of the \ours~estimate.

\item \textit{Non-Gaussian data}.  Lastly, we illustrate, through numerical examples with real-world finance and wind power data, that \ours~deals effectively with non-Gaussian data, outperforming several strong baselines.  This is due, in part, to the fact that the precise form of the objecive in the \ours~optimization problem dispenses with the assumption that the true distribution is normal, which is helpful in moving beyond the usual Gaussian setup.
\end{itemize}

\subsection{Outline}

An outline for the rest of this paper is as follows.  In the next subsection, we survey related work.  In Section \ref{sec:ours}, we describe the \ours~estimator and its screening rules.  In Section \ref{sec:exps}, we present an empirical evaluation of \ours, as well as several baselines, on synthetic and real-world data.  We present all of our theoretical results on \ours's statistical and computational properties in Section \ref{sec:theory}; all of our proofs are given in the supplement.  We conclude with a brief discussion in Section \ref{sec:disc}.

\subsection{Related work}
\label{sec:related}

The literature on high-dimensional sparse inverse covariance estimation is quite vast; we do not claim to give a complete treatment of it here, and instead highlight work most related to our own.  \citet{yuan2007,BEA:08,friedman2008sparse,rothman2008} first proposed estimating the inverse covariance matrix by maximizing an $\ell_1$-penalized Gaussian likelihood; \citet{friedman2008sparse}, in particular, proposed the \glasso, a fast algorithm for computing an estimate in this framework.  In a related but distinct line of work, a number of pseudolikelihood-based estimators have been proposed; pseudolikelihood-based methods take a somewhat different perspective, in that they can be seen as roughly minimizing a series of $\ell_1$-penalized regression problems, making them arguably simpler to analyze and extend than other approaches.  The seminal \textit{neighborhood selection} method of \citet{meinshausen2006high}, which fits a lasso regression of each variable on the rest, is an example; a drawback of neighborhood selection, however, is that the neighborhood selection estimate may not be symmetric, so a post-processing step is required.

In a nice step forward, \citet{peng2009partial} introduced the \spacee~estimator, and showed that it is symmetric and also consistent, under suitable regularity conditions.  Unfortunately, \spacee~is not guaranteed to converge (it is easy to find examples where the iterates produced by \spacee~alternate between two values), and furthermore the \spacee~estimate may not be unique \citep{khare2014convex}; additionally, the consistency proof for \spacee~assumes that accurate estimates for the diagonal entries of the underlying inverse covariance matrix are available, even when $p > n$, without giving a method to obtain them.  Inspired by \spacee, \citet{friedman2010applications} introduced the Symmetric Lasso estimator, which is also symmetric, but is not guaranteed to converge, be unique, or be consistent \citep[Lemma 2]{khare2014convex}.  The \splice~estimator of \citet{rocha2008path} has some useful computational properties, but unfortunately does not have any of these guarantees either \citep[Lemma 3]{khare2014convex}.

Building on \spacee, the \conc~estimator \citep{khare2014convex,NIPS2014_5576} recently made useful progress: \conc~is symmetric, like \spacee, but is additionally guaranteed to converge at a rate of $O(1/k^2)$, where $k$ here is the number of iterations, and is also consistent.  On the downside, as we show later in this paper, \conc's consistency proof assumes accurate diagonal estimates even when $p > n$, its estimate may not be unique when $p > n$, and it can saturate (\ie, when $p \gg n$, the \conc~estimate can select at most $np$ out of $p(p-1)/2$ total variables).

\section{The \ours~estimator}
\label{sec:ours}

Assume that we are given $n$ samples $X_{1 \cdot},\ldots,X_{n \cdot} \in \reals^p$, drawn i.i.d.~from some unknown distribution that, without a loss of generality, we take to have mean zero and covariance matrix $\Sigma^0 \in \symm_{++}^p$ (the space of $p \times p$ positive definite matrices).  We want to estimate the underlying inverse covariance matrix $\Omegatrue = (\Sigma^0)^{-1}$ with a small number of nonzero entries.

We define the \ours~estimate, which gives a sparse estimate of the underlying inverse covariance matrix, as the solution of the following convex optimization problem:
\begin{equation*}
\begin{array}{ll}
\minimizewrt{\Omega \in {\bf R}^{p \times p}} & - (1/2) \sum_{i=1}^p \log (\Omega_{ii}^2) + (1/2) \sum_{i=1}^p \left\| \Omega_{ii} X_i + \sum_{j \neq i}^p \Omega_{ij} X_j \right\|_2^2 \\
& \quad\quad + \lambda_1 \sum_{i \neq j}^p | \Omega_{ij} | + (\lambda_2 / 2) \| \Omega \|_F^2,
\end{array}
\end{equation*}
where $\lambda_1, \lambda_2 > 0$ are tuning parameters, and $\| \cdot \|_F$ is the Frobenius norm.  After some manipulations, we can put the above optimization problem into the following matrix form, which is useful for much of the remainder of the paper:
\optprobstart
\begin{array}{ll}
\minimizewrt{\Omega \in {\bf R}^{p \times p}} & - (1/2) \log \det(\Omega_\D^2) + (n/2) \Tr S \Omega^2 + \lambda_1 \| \Omega_\X \|_1 + (\lambda_2 / 2) \| \Omega \|_F^2.
\end{array}
\label{eq:ours}
\optprobend
Here, $\Omega_\D \in \reals^{p \times p}$ is a matrix of the diagonal entries of $\Omega$, with its off-diagonal entries set to zero; $S \in \reals^{p \times p}$ is the sample covariance matrix, \ie, $S = (1/n) X^T X$, and $X \in \reals^{n \times p}$ is a data matrix; $\Omega_\X \in \reals^{p \times p}$ is a matrix of the off-diagonal entries of $\Omega$, with its diagonal entries set to zero; and $\| \cdot \|_1$ is the elementwise $\ell_1$ norm.

Note that we do not make the assumption here that the underlying data-generating process is, for example, multivariate normal, which is helpful in moving beyond the usual Gaussian setup; nonetheless, the objective of the \ours~optimization problem in matrix form \eqref{eq:ours} does bear some resemblance to an $\ell_1$-penalized Gaussian likelihood.  In fact, the \ours~optimization problem \eqref{eq:ours} generalizes the (standard) $\ell_1$-penalized Gaussian maximum likelihood problem (by design), when \eqref{eq:ours} is written as
\begin{equation*}
\begin{array}{ll}
\minimizewrt{\Omega \in {\bf R}^{p \times p}} & -(1/2) \log \det F(\Omega) + (n/2) \Tr S G(\Omega) + \lambda_1 \| H(\Omega) \|_1 + (\lambda_2 / 2) \| \Omega \|_F^2,
\end{array}
\label{eq:gen}
\end{equation*}
for some operators $F,G,H:\reals^{p \times p} \rightarrow \reals^{p \times p}$.  (Taking $F$ as $\Omega \mapsto \Omega_\D^2$, $G$ as $\Omega \mapsto \Omega^2$, and $H$ as $\Omega \mapsto \Omega_\X$ recovers the \ours~optimization problem \eqref{eq:ours}.)  Now taking $F$, $G$, and $H$ all as $\Omega \mapsto \Omega$, with $\lambda_2 = 0$, recovers the \glasso~optimization problem \citep[Equation 1]{friedman2008sparse}.  Furthermore, the framework above also generalizes several pseudolikelihood-based approaches; for example, taking $F$ as $\Omega \mapsto \Omega_\D$, $G$ as $\Omega \mapsto \Omega \Omega_\D^{-1} \Omega$, $H$ as $\Omega \mapsto \Omega_\X$, and $\lambda_2 = 0$ recovers the \spacee~optimization problem \citep[Equation 2]{peng2009partial}, and taking $F$ as $\Omega \mapsto \Omega_\D^2$, $G$ as $\Omega \mapsto \Omega^2$, $H$ as $\Omega \mapsto \Omega_\X$, and $\lambda_2 = 0$ recovers the \conc~optimization problem \citep[Equation 8]{khare2014convex}, revealing a close connection between the \ours~and \conc~optimization problems.

Although simple in appearance, the squared Frobenius norm penalty in the \ours~optimization problem \eqref{eq:ours} gives \ours~a number of statistical and computational advantages (that are not always simple to show) over many other pseudolikelihood-based approaches, including the ones just mentioned.\footnote{Some care is also required here: the theory that we develop in this paper does not necessarily follow if, for example, $\| \Omega_\X \|_F^2$ is used instead of $\| \Omega \|_F^2$.}  Statistically, owing to this penalty, \ours~is able to obtain much better estimation error than \conc~(see Sections \ref{sec:exps}, \ref{sec:cons}, and \ref{sec:diags}), which is again useful when our estimate will be used by a downstream application; the estimates produced by \ours~also tend to be more stable than those produced by \conc.  We can understand this intuitively, by considering the relationship between the \textit{elastic net} \citep{Zou05regularizationand} and the (standard) lasso optimization problems: the elastic net augments the objective in the lasso optimization problem with a  ridge penalty, which is seen as giving a sparse estimate with better prediction error than the associated lasso estimate --- taking a pseudolikelihood-based approach makes it natural to incorporate these ridge penalties into each regression (sub)problem, in order to obtain a sparse estimate of the inverse covariance matrix with low estimation error.

The elastic net is also an elegant solution to the issue of saturation in the lasso (\ie, when $p > n$, the number of variables selected by the lasso can be at most $n$).  Even though pseudolikelihood-based estimators and the lasso are connected in many ways, it is still natural to wonder if pseudolikelihood-based estimators can also saturate, since the objectives in the defining optimization problems for many pseudolikelihood-based estimators include terms that go beyond pure lasso regressions?  We show later (see Section \ref{sec:sat}) that several pseudolikelihood-based estimators (specifically, \splice, \spacee, and \conc) indeed can saturate --- and that the squared Frobenius norm penalty in the \ours~optimization problem \eqref{eq:ours} is what prevents it from saturating.  This is a useful result for \ours, from the points of view of the estimation error as well as the interpretability of the \ours~estimate.

Finally, the choices of $F$, $G$, and $H$ that we make in the general framework above in order to arrive at the the \ours~optimization problem \eqref{eq:ours} ensure that \eqref{eq:ours} is convex; further imposing the squared Frobenius norm penalty guarantees that the objective in \eqref{eq:ours} is strictly convex, and hence the \ours~estimate is always unique (as mentioned above, convexity as well as uniqueness are not guaranteed for many other pseudolikelihood-based estimators).  Computationally, the squared Frobenius norm penalty also allows us to derive a fast algorithm for computing the \ours~estimate (which we do next) that converges to the unique, global solution of the \ours~optimization problem \eqref{eq:ours} at a geometric rate (see Section \ref{sec:conv}), and is much faster than \conc~(see Section \ref{sec:exps}).

Next, we turn to deriving a fast algorithm for computing the \ours~estimate.  Rewriting \eqref{eq:ours} as the sum of a smooth function $g$ and a nonsmooth function $h$, \ie, letting $f(\Omega)$ be the objective in \eqref{eq:ours}, we have that $f(\Omega) = g(\Omega) + h(\Omega)$, with
\begin{equation}
g(\Omega) = - (1/2) \log \det(\Omega_\D^2) + (n/2) \Tr S \Omega^2 + (\lambda_2 / 2) \| \Omega \|_F^2, \quad h(\Omega) = \lambda_1 \| \Omega_\X \|_1. \label{eq:prox2}
\end{equation}
The presence of the nonsmooth term $h$ here makes the \ours~optimization problem \eqref{eq:ours} difficult to solve using, say, an interior point method.  On the other hand, $h$ does admit a computationally efficient \textit{proximal operator} \citep{parikh2013proximal}, \ie,
\begin{align}
\prox_{t h}(V) & =
\argmin_{Z \in {\bf R}^{p \times p}} \left( h(Z) + \frac{1}{2t} \| Z-V \|_F^2 \right) \notag \\
\implies \left[ \prox_{t h}(V) \right]_{ij} & =
\begin{cases}
V_{ij} - t & V_{ij} > t \\
0 & | V_{ij} | \leq t \\
V_{ij} + t & V_{ij} < -t
\end{cases}
\quad i,j=1,\ldots,p, \label{eq:prox}
\end{align}
for some $V \in \reals^{p \times p}$ and constant $t > 0$; \eqref{eq:prox} is known as the (elementwise) \textit{soft-thresholding} operator.  Thus, a proximal gradient method\footnote{It is straightforward derive an accelerated proximal gradient method as well.} is a natural choice here; \ie, on each iteration of the algorithm, we take a step in the direction of the negative gradient of $g$, and then apply \eqref{eq:prox}.  Provided that the gradient of $g$ is Lipschitz continuous and the step sizes are chosen appropriately, proximal gradient methods in general obtain a convergence rate of $O(1/k)$, where $k$ here is the number of iterations.  However, we are able to obtain a much better (\ie, geometric) rate of convergence, owing to the strong convexity of \eqref{eq:ours}, as we show later in Section \ref{sec:conv}.

To complete the specification of the proximal gradient method, we give the gradient and Hessian of the smooth term $g$ in \eqref{eq:prox2}:
\begin{align}
\nabla g(\Omega) & = -\Omega_\D^{-1} + (n/2) (S \Omega + \Omega S) + \lambda_2 \Omega \label{eq:grad} \\
\nabla ^2 g(\Omega) & = \sum_{i=1}^p (1/\Omega_{ii}^2) ( e_i e_i^T \otimes e_i e_i^T ) + (n/2) ( S \otimes I + I \otimes S ) + \lambda_2 I_{p^2}, \label{eq:hess}
\end{align}
where $\otimes$ denotes the Kronecker product, and $e_i$ denotes the $i$th standard basis vector in $\reals^p$.  The complete algorithm for computing the \ours~estimate is specified in Algorithm \ref{alg:ours}; assuming the iterates are sparse, the computational cost of each iteration of Algorithm \ref{alg:ours} is dominated by computing the soft-thresholding operator, and therefore costs $O(p^2)$.

\newlength\myindent
\setlength\myindent{2em}
\newcommand\bindent{%
  \begingroup
  \setlength{\itemindent}{\myindent}
  \addtolength{\algorithmicindent}{\myindent}
}
\newcommand\eindent{\endgroup}

\newcommand{\pushcode}[1][1]{\hskip\dimexpr#1\algorithmicindent\relax}

\begin{algorithm}
\caption{Proximal gradient method for computing the \ours~estimate}
\label{alg:ours}
\begin{algorithmic}
	\STATE \textbf{Input:} data matrix $X \in \reals^{n \times p}$, tuning parameters $\lambda_1,\lambda_2 > 0$
	\STATE \textbf{Output:} estimate $\hat{\Omega}^\ourss$

	\STATE \textbf{initialize} starting point $\Omega \in \symm_{++}^p$ (the space of $p \times p$ positive definite matrices); optimization tolerance $\epsilon > 0$; line search parameters $\tau_{\textrm{init}}, \beta \in (0,1)$

	\REPEAT
		\STATE compute $\nabla g(\Omega)$ using Equation \ref{eq:grad}
		\STATE choose $\tau$ via backtracking line search as follows
			\bindent
			\STATE set $\tau \leftarrow \tau_{\textrm{init}}$
			\STATE set $\tilde{\Omega} \leftarrow \prox_{(\lambda_1 \tau) h}(\Omega - \tau \nabla g(\Omega))$ using Equation \ref{eq:prox}
			\WHILE{$g(\tilde{\Omega}) \geq g(\Omega) - \Tr \left( (\nabla g(\Omega))^T (\Omega - \tilde{\Omega}) \right) + \frac{1}{2\tau} \| \Omega - \tilde{\Omega} \|_F^2$}
			\STATE \pushcode[-0.3] ($\| \cdot \|_F$ is the Frobenius norm)
			\STATE update $\tilde{\Omega} \leftarrow \prox_{(\lambda_1 \tau) h}(\Omega - \tau \nabla g(\Omega))$
			\STATE update $\tau \leftarrow \beta \tau_{\textrm{init}}$
			\ENDWHILE
			\STATE output $\tau$
			\eindent
		\STATE update $\Omega \leftarrow \prox_{(\lambda_1 \tau) h}(\Omega - \tau \nabla g(\Omega))$
	\UNTIL{stopping criterion is satisfied, \ie, until $\| \nabla g(\Omega) + z \|_F / \| \Omega \|_F \leq \epsilon$ ($z$ is any subgradient of $h$ evaluated at $\Omega$)}
	\STATE output $\hat{\Omega}^\ourss = \Omega$
\end{algorithmic}
\end{algorithm}

\subsection{Choice of tuning parameters}

Next, we provide a way to choose the tuning parameters $\lambda_1$ and $\lambda_2$ in the \ours~optimization problem \eqref{eq:ours}.  We propose choosing these parameters by selecting the $(\lambda_1, \lambda_2)$ pair that minimizes the following Bayesian information criterion-like score over a grid of tuning parameter values:
\begin{equation}
\bic(\lambda_1, \lambda_2) = \sum_{j=1}^p \bic(\lambda_1, \lambda_2, j),
\label{eq:bic}
\end{equation}
where
\begin{align*}
\bic(\lambda_1, \lambda_2, j) & = n \log \rss(\lambda_1, \lambda_2, j) + \log n \times \left| \left\{ \ell : \ell \in \{1,\ldots,p\}, \; \ell \neq j, \; \hat{\Omega}_{j \ell}^\ourss(\lambda_1, \lambda_2) \neq 0 \right\} \right|, \\
\rss(\lambda_1, \lambda_2, j) & = \sum_{i=1}^n \left( X_{ij} - \sum_{k \neq j}^p \left( \hat{\Omega}^\ourss_{j k}(\lambda_1, \lambda_2) / \hat{\Omega}^\ourss_{jj}(\lambda_1, \lambda_2) \right) X_{i k} \right)^2,
\end{align*}
and $\hat{\Omega}^\ourss(\lambda_1, \lambda_2)$ is the solution of the \ours~optimization problem \eqref{eq:ours} for a particular $\lambda_1$ and $\lambda_2$.  This method is simple to implement and computationally inexpensive, especially when combined with the screening rules that we describe in the next subsection.

\subsection{Omitting predictors via screening rules}

We often want to solve the \ours~optimization problem \eqref{eq:ours} over a grid of $(\lambda_1, \lambda_2)$ values, and then choose a suitable estimate (for example, by using the procedure outlined in the previous subsection).  By leveraging the particular form of the \ours~optimization problem, we derive \textit{sequential strong} screening rules here \citep{tibshirani2012strong}, which are well-suited for this, because they omit variables from the \ours~optimization problem as we solve it over a range of tuning parameter values.

\citet{tibshirani2012strong} introduced sequential strong screening rules as a framework for deriving screening rules that drop variables as we solve a sequence of convex optimization problems; these optimization problems are required to have an objective that can be expressed as the sum of a smooth loss and a potentially nonsmooth penalty.  Sequential strong rules are based on the optimality conditions for the optimization problem in question, as well as the assumption that the gradient of the smooth loss is \textit{nonexpansive}, \ie, that it has a Lipschitz constant equal to one; thus, strong rules might commit \textit{violations}, \ie, they might suggest that a variable could be dropped when it is actually nonzero at the solution.  Consequently, we (usually) check the optimality conditions after applying sequential strong rules; we do so in our numerical experiments, and never observe a violation (see Sections \ref{sec:synth} and \ref{sec:neuro}).

Sequential strong rules build on the work of \citet[Theorem 4]{BEA:08}, who first observed that variables can be dropped from their particular optimization problem by arguing from their dual problem and block coordinate descent procedure.  \citet{mazumder2012exact} also derive screening rules for the \glasso~by arguing from the \glasso's optimality conditions.  Although all of these rules are \textit{safe}, \ie, they do not commit violations, we unfortunately do not use block coordinate descent to compute the \ours~estimate, and a careful inspection of \ours's optimality conditions reveals that these conditions are not separable in the entries of $\hat{\Omega}^\ourss$, making the framework of \citet{tibshirani2012strong} more appropriate here.

We state our rules in Lemma \ref{thm:screen}, and provide an algorithmic specification in Algorithm \ref{alg:screen}.
 
\begin{lemma}[Screening rules]
\label{thm:screen}
Let $\lambda_1^{(1)} \geq \cdots \geq \lambda_1^{(r-1)} \geq \lambda_1^{(r)}$ and $\lambda_2^{(1)} \geq \cdots \geq \lambda_2^{(s-1)} \geq \lambda_2^{(s)}$ form sequences of decreasing tuning parameters.  Also, let $\hat{\Omega}^\ourss(\lambda_1^{(k-1)}, \lambda_2^{(\ell)})$ be the solution of the \ours~optimization problem \eqref{eq:ours}, for a particular $\lambda_1^{(k-1)}$ and $\lambda_2^{(\ell)}$, $k \in \{2,\ldots,r\}$, $\ell \in \{1,\ldots,s\}$.  Finally, write the components of the gradient of the smooth parts of the objective in \eqref{eq:ours} evaluated at $\hat{\Omega}^\ourss(\lambda_1^{(k-1)}, \lambda_2^{(\ell)})$ as
\begin{align*}
c_{ij}(\lambda_1^{(k-1)}, \lambda_2^{(\ell)}) & = ( S_{ii} + S_{jj} + \lambda_2 ) \hat{\Omega}_{ij}^\ourss(\lambda_1^{(k-1)}, \lambda_2^{(\ell)}) + \sum_{j' \neq j}^p \hat{\Omega}_{i j'}^\ourss(\lambda_1^{(k-1)}, \lambda_2^{(\ell)}) S_{j j'} \\
&\quad + \sum_{i' \neq i}^p \hat{\Omega}_{i' j}^\ourss(\lambda_1^{(k-1)}, \lambda_2^{(\ell)}) S_{i i'}, \quad i,j=1,\ldots,p, \; i \neq j.
\end{align*}
Now, assume the $c_{ij}$ here are nonexpansive, \ie,
\[
\left| c_{ij}(\lambda_1^{(k)}, \lambda_2^{(\ell)}) - c_{ij}(\lambda_1^{(k-1)}, \lambda_2^{(\ell)}) \right| \leq \left| \lambda_1^{(k)} - \lambda_1^{(k-1)} \right|.
\]
Then we have that
\begin{equation}
\left| c_{ij}(\lambda_1^{(k-1)}, \lambda_2^{(\ell)}) \right| < 2 \lambda_1^{(k)} - \lambda_1^{(k-1)} \label{eq:screen}
\end{equation}
implies that $\hat{\Omega}_{ij}^\ourss(\lambda_1^{(k)}, \lambda_2^{(\ell)}) = 0$; \ie, the entries satisfying this condition can be omitted from the \ours~optimization problem \eqref{eq:ours} for $\lambda_1^{(k)}$ and $\lambda_2^{(\ell)}$.
\end{lemma}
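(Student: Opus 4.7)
The plan is to adapt the sequential strong screening-rule argument of \citet{tibshirani2012strong} to the \ours~optimization problem \eqref{eq:ours}. First I would write down the KKT stationarity conditions for \eqref{eq:ours} at the unique minimizer $\hat{\Omega}^{\ourss}(\lambda_1, \lambda_2)$. Because the nonsmooth penalty $h(\Omega) = \lambda_1 \|\Omega_\X\|_1$ is separable across off-diagonal coordinates and touches only those entries, the stationarity condition for an off-diagonal entry $(i,j)$, $i \neq j$, reads
\begin{equation*}
\bigl[\nabla g(\hat{\Omega}^{\ourss}(\lambda_1, \lambda_2))\bigr]_{ij} + \lambda_1 z_{ij} = 0, \qquad z_{ij} \in \partial \bigl| \hat{\Omega}_{ij}^{\ourss}(\lambda_1, \lambda_2) \bigr|.
\end{equation*}
Expanding $\nabla g$ via \eqref{eq:grad} and writing $(S\Omega + \Omega S)_{ij}$ in coordinates, the $\Omega_\D^{-1}$ term drops out (it is purely diagonal), and the remaining off-diagonal component matches $c_{ij}(\lambda_1, \lambda_2)$ as defined in the lemma (up to the scaling convention fixed by the statement). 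In particular, at any optimum $|c_{ij}(\lambda_1, \lambda_2)| \leq \lambda_1$; more importantly, a \emph{strict} inequality $|c_{ij}(\lambda_1, \lambda_2)| < \lambda_1$ forces the subgradient $z_{ij}$ to lie in the open interval $(-1, 1)$, which by the characterization of $\partial |\cdot|$ forces $\hat{\Omega}_{ij}^{\ourss}(\lambda_1, \lambda_2) = 0$.

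Second, I would apply this criterion at the next tuning-parameter pair $(\lambda_1^{(k)}, \lambda_2^{(\ell)})$, using the nonexpansiveness hypothesis to trade the unknown $c_{ij}(\lambda_1^{(k)}, \lambda_2^{(\ell)})$ for the already-computed $c_{ij}(\lambda_1^{(k-1)}, \lambda_2^{(\ell)})$. Concretely, by the triangle inequality and the assumed Lipschitz-one bound,
\begin{align*}
\bigl| c_{ij}(\lambda_1^{(k)}, \lambda_2^{(\ell)}) \bigr|
&\leq \bigl| c_{ij}(\lambda_1^{(k-1)}, \lambda_2^{(\ell)}) \bigr| + \bigl| c_{ij}(\lambda_1^{(k)}, \lambda_2^{(\ell)}) - c_{ij}(\lambda_1^{(k-1)}, \lambda_2^{(\ell)}) \bigr| \\
&\leq \bigl| c_{ij}(\lambda_1^{(k-1)}, \lambda_2^{(\ell)}) \bigr| + \bigl( \lambda_1^{(k-1)} - \lambda_1^{(k)} \bigr),
\end{align*}
where monotonicity $\lambda_1^{(k-1)} \geq \lambda_1^{(k)}$ lets me drop the absolute value. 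Plugging in the hypothesis \eqref{eq:screen} gives $|c_{ij}(\lambda_1^{(k)}, \lambda_2^{(\ell)})| < \lambda_1^{(k)}$, and the KKT argument from the first paragraph yields $\hat{\Omega}_{ij}^{\ourss}(\lambda_1^{(k)}, \lambda_2^{(\ell)}) = 0$, which is the desired conclusion.

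The argument is essentially bookkeeping rather than a genuine conceptual obstacle: once the KKT conditions are in hand, the conclusion follows from a two-line triangle-inequality calculation. The small subtleties to watch are (i) verifying that the off-diagonal component of $\nabla g$ matches the $c_{ij}$ formula in the lemma statement, in particular that $\log\det \Omega_\D^2$ contributes nothing off-diagonal and that the $\lambda_2 \Omega$ and $(S_{ii}+S_{jj})\Omega_{ij}$ terms combine as written; and (ii) remembering that the nonexpansiveness of $c_{ij}$ in $\lambda_1$ is an \emph{assumption} rather than a derived fact, which is exactly what makes the rule a ``strong'' rather than ``safe'' screening rule and justifies the post-hoc KKT check described in the surrounding text.
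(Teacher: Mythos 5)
Your proposal is correct and follows essentially the same route as the paper's proof: write the componentwise KKT conditions for the off-diagonal entries (so that a strict bound $|c_{ij}| < \lambda_1$ forces the entry to be zero), then combine the triangle inequality with the assumed nonexpansiveness and the monotonicity $\lambda_1^{(k-1)} \geq \lambda_1^{(k)}$ to get $|c_{ij}(\lambda_1^{(k)}, \lambda_2^{(\ell)})| < \lambda_1^{(k)}$. The only difference is cosmetic: you make the subgradient argument explicit where the paper simply inspects its stated optimality conditions.
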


\begin{algorithm}
\caption{Sequential strong screening rules for \ours}
\label{alg:screen}
\begin{algorithmic}
	\STATE \textbf{Input:} data matrix $X \in \reals^{n \times p}$; sequences of decreasing tuning parameters $(\lambda_1^{(k)})_{k=1}^{r}, (\lambda_2^{(\ell)})_{\ell=1}^{s}$
	\STATE \textbf{Output:} estimates $\hat{\Omega}^\ourss(\lambda_1^{(k)}, \lambda_2^{(\ell)}), \; k=1,\ldots,r, \; \ell=1,\ldots,s$

	\FOR{$\ell=1,\ldots,s$}
		\STATE compute $\hat \Omega^\ourss(\lambda_1^{(1)}, \lambda_2^{(\ell)})$ by solving Equation \ref{eq:ours} with $\lambda_1^{(1)}, \lambda_2^{(\ell)}$

		\FOR{$k=2,\ldots,r$}
		\STATE compute $N$, the set of nonzero variables, using Equation \ref{eq:screen} with $\hat \Omega^\ourss(\lambda_1^{(k-1)}, \lambda_2^{(\ell)}), \lambda_1^{(k-1)}, \lambda_2^{(\ell)}$
		\REPEAT
			\STATE compute $\hat \Omega^\ourss(\lambda_1^{(k)}, \lambda_2^{(\ell)})$ by solving Equation \ref{eq:ours} with $N, \lambda_1^{(k)}, \lambda_2^{(\ell)}$
			\STATE check (all variables) for violations using the optimality conditions for \eqref{eq:ours} (see Equation \ref{eq:kkt} in the supplement)
			\STATE add any violating variables back into $N$
		\UNTIL{there are no violations}
		\STATE output $\hat{\Omega}^\ourss(\lambda_1^{(k)}, \lambda_2^{(\ell)})$
		\ENDFOR
	\ENDFOR
\end{algorithmic}
\end{algorithm}

\section{Numerical examples}
\label{sec:exps}

We evaluate \ours, as well as several baselines, on synthetic and real-world data.  We are interested here not only in a method's variable selection accuracy, but also in its estimation error, which is a good measure of the method's suitability in a downstream application.  Previewing our findings a little, we see in our synthetic examples that \ours~significantly outperforms the (closely related) \conc~estimator in terms of estimation error, as measured by several matrix norms, while also outperforming \conc~in terms of variable selection accuracy --- these advantages also help \ours~outperform a number of strong baselines, when used in a real-world (non-Gaussian) finance application later on.  Finally, we see in a real-world sustainable energy example that the \ours~estimate is readily interpreted in a meaningful way; we also highlight the benefits of \ours's screening rules here.

\subsection{Synthetic data}
\label{sec:synth}

We begin by discussing the synthetic examples; in these, we directly compare to \conc, which is the method most closely related to ours.  We generated synthetic data as follows.  First, we generated a random, sparse, diagonally dominant $p \times p$ (ground truth) matrix $\Omegatrue$, by following the procedure in \citet{NIPS2014_5576, khare2014convex, peng2009partial,alnur}; we investigated $p \in \{ 1000, 3000 \}$.\footnote{This corresponds to estimating $p(p+1)/2 = 500,500$ and $4,501,500$ parameters, respectively.}  Then, we drew $n$ samples from a multivariate normal distribution with mean zero and covariance matrix $(\Omegatrue)^{-1}$, which were subsequently input into \ours~and \conc; we investigated $n \in \{0.2 p, 0.4 p, 0.8 p \}$ and $\lambda_1, \lambda_2 \in \{ 2^{-10}, 2^{-9.5}, \ldots, 1, 2^{0.5} \}$, \ie, a $22 \times 22$ grid.\footnote{Our experimental settings correspond to ultimately running \ours~and \conc~145,200 and 6,600 times, respectively.}  Finally, we computed the false and true positive rates for \ours~and \conc, by counting the number of nonzero entries in a method's estimate $\hat \Omega$ that were zero and nonzero, respectively, in $\Omegatrue$; we also computed the estimation error, \ie, $\| \Omegatrue - \hat{\Omega} \|$, in several matrix norms.  To summarize the variable selection accuracy and estimation errors across $\lambda_1, \lambda_2$, we computed the area under the curve (AUC), following, for example, \citet{NIPS2014_5576, khare2014convex,alnur}; to summarize the estimation errors, we computed the median across $\lambda_1, \lambda_2$.\footnote{Computing the mean across $\lambda_1, \lambda_2$ gave similar results.}  We repeated this entire process 50 times; thus, Tables \ref{tab:norms:onek} and \ref{tab:norms:threek} report the medians and interquartile ranges (IQRs) across these 50 trials.

Here, \ours~outperforms \conc~in AUC and estimation error across all sample sizes and norms (as well as on each trial individually).  \ours's estimation error, in particular, is significantly lower than \conc's; additionally, \ours's wallclock times as well as most of its interquartile ranges (IQRs) are generally lower than \conc's, demonstrating that the estimates produced by \ours~are quite stable.  These effects are likely due to the presence of the squared Frobenius norm penalty in the \ours~optimization problem.

\ifwantbiomtabs
\input{tab_norms_onek}

\input{tab_norms_threek}
\else
\begin{table}
	\centering
	\begin{scriptsize}
	\begin{tabular}{llccccccc}
		\\
		& & \multicolumn{2}{c}{$n = 200$} & \multicolumn{2}{c}{$n = 400$} & \multicolumn{2}{c}{$n = 800$} \\
		& & \ours & \conc & \ours & \conc & \ours & \conc \\[5pt]
\multirow{2}{*}{AUC} & Median & \textbf{0.68} & 0.65 & \textbf{0.81} & 0.73 & \textbf{0.91} & 0.86 \\
 & IQR & 0.02 & 0.01 & 0.01 & 0.01 & 0.01 & 0.01 \\
\multirow{2}{*}{Squared Frobenius norm} & Median & \textbf{6391.48} & 20150.68 & \textbf{5722.84} & 18805.59 & \textbf{4205.49} & 14990.35 \\
 & IQR & 84.70 & 513.99 & 26.18 & 245.65 & 18.22 & 192.78 \\
\multirow{2}{*}{$\ell_2$ operator norm} & Median & \textbf{2.51} & 5.17 & \textbf{2.41} & 5.07 & \textbf{2.56} & 5.84 \\
 & IQR & 0.01 & 0.06 & 0.01 & 0.03 & 0.01 & 0.03 \\
\multirow{2}{*}{Elementwise $\ell_1$ norm} & Median & \textbf{17480.45} & 35959.79 & \textbf{21640.10} & 46951.74 & \textbf{21749.16} & 51526.46 \\
 & IQR & 65.71 & 323.46 & 35.01 & 240.09 & 26.38 & 276.32 \\
\multirow{2}{*}{Elementwise $\ell_\infty$ norm} & Median & \textbf{1.34} & 2.93 & \textbf{1.06} & 2.32 & \textbf{0.67} & 1.38 \\
 & IQR & 0.01 & 0.04 & 0.01 & 0.02 & 0.01 & 0.03 \\
\multirow{2}{*}{Wallclock time (secs.)} & Median & \textbf{73.72} & 103.23 & \textbf{40.76} & 71.02 & \textbf{14.60} & 20.46 \\
 & IQR & 3.23 & 41.53 & 1.76 & 29.54 & 0.70 & 7.08

	\end{tabular}
	\end{scriptsize}
	\caption{Median and interquartile range for \ours \;and \conc's areas under the curves (AUCs), estimation errors in several matrix norms, and wallclock times ($p = 1000$).  Higher median AUC is better, lower median estimation error and wallclock time is better; best in \textbf{bold}.  \ours \;outperforms \conc \;across all sample sizes and metrics.}
	\label{tab:norms:onek}
\end{table}

\begin{table}
	\centering
	\begin{scriptsize}
	\begin{tabular}{llccccccc}
		\\
		& & \multicolumn{2}{c}{$n = 600$} & \multicolumn{2}{c}{$n = 1200$} & \multicolumn{2}{c}{$n = 2400$} \\
		& & \ours & \conc & \ours & \conc & \ours & \conc \\[5pt]
\multirow{2}{*}{AUC} & Median & \textbf{0.64} & 0.63 & \textbf{0.75} & 0.71 & \textbf{0.86} & 0.84 \\
 & IQR & 0.01 & 0.01 & 0.00 & 0.01 & 0.01 & 0.01 \\
\multirow{2}{*}{Squared Frobenius norm} & Median & \textbf{15495.27} & 49063.26 & \textbf{12913.39} & 42021.80 & \textbf{8639.99} & 30054.52 \\
 & IQR & 83.60 & 75.39 & 4.46 & 78.99 & 21.98 & 34.91 \\
\multirow{2}{*}{$\ell_2$ operator norm} & Median & \textbf{2.17} & 4.48 & \textbf{2.01} & 4.19 & \textbf{1.99} & 4.43 \\
 & IQR & 0.00 & 0.00 & 0.00 & 0.01 & 0.00 & 0.00 \\
\multirow{2}{*}{Elementwise $\ell_1$ norm} & Median & \textbf{72178.79} & 148152.12 & \textbf{87484.12} & 187895.91 & \textbf{84109.25} & 195442.01 \\
 & IQR & 114.88 & 89.36 & 28.19 & 150.31 & 66.62 & 112.74 \\
\multirow{2}{*}{Elementwise $\ell_\infty$ norm} & Median & \textbf{1.10} & 2.38 & \textbf{0.83} & 1.77 & \textbf{0.49} & 0.95 \\
 & IQR & 0.00 & 0.01 & 0.00 & 0.01 & 0.00 & 0.01 \\
\multirow{2}{*}{Wallclock time (secs.)} & Median & \textbf{1861.35} & 3657.65 & \textbf{580.11} & 1208.06 & \textbf{124.72} & 236.40 \\
 & IQR & 7.86 & 36.14 & 1.48 & 7.43 & 0.06 & 2.14

	\end{tabular}
	\end{scriptsize}
	\caption{Median and interquartile range for \ours \;and \conc's areas under the curves (AUCs), estimation errors in several matrix norms, and wallclock times ($p = 3000$).  Higher median AUC is better, lower median estimation error and wallclock time is better; best in \textbf{bold}.  \ours \;outperforms \conc \;across all sample sizes and metrics.}
	\label{tab:norms:threek}
\end{table}
\fi

We also investigate the efficacy of \ours's screening rules; using the same synthetic data, we measure the (median across 50 trials) percentages of variables that the rules suggest dropping (excluding diagonal entries), as well as the percentages of violations (for $\lambda_2 = 1$).  Figure \ref{fig:screen} presents the results: the rules drop more variables as $\lambda_1$ increases (as expected), but never commit any violations.

\begin{figure}
\includegraphics[width=\fsw]{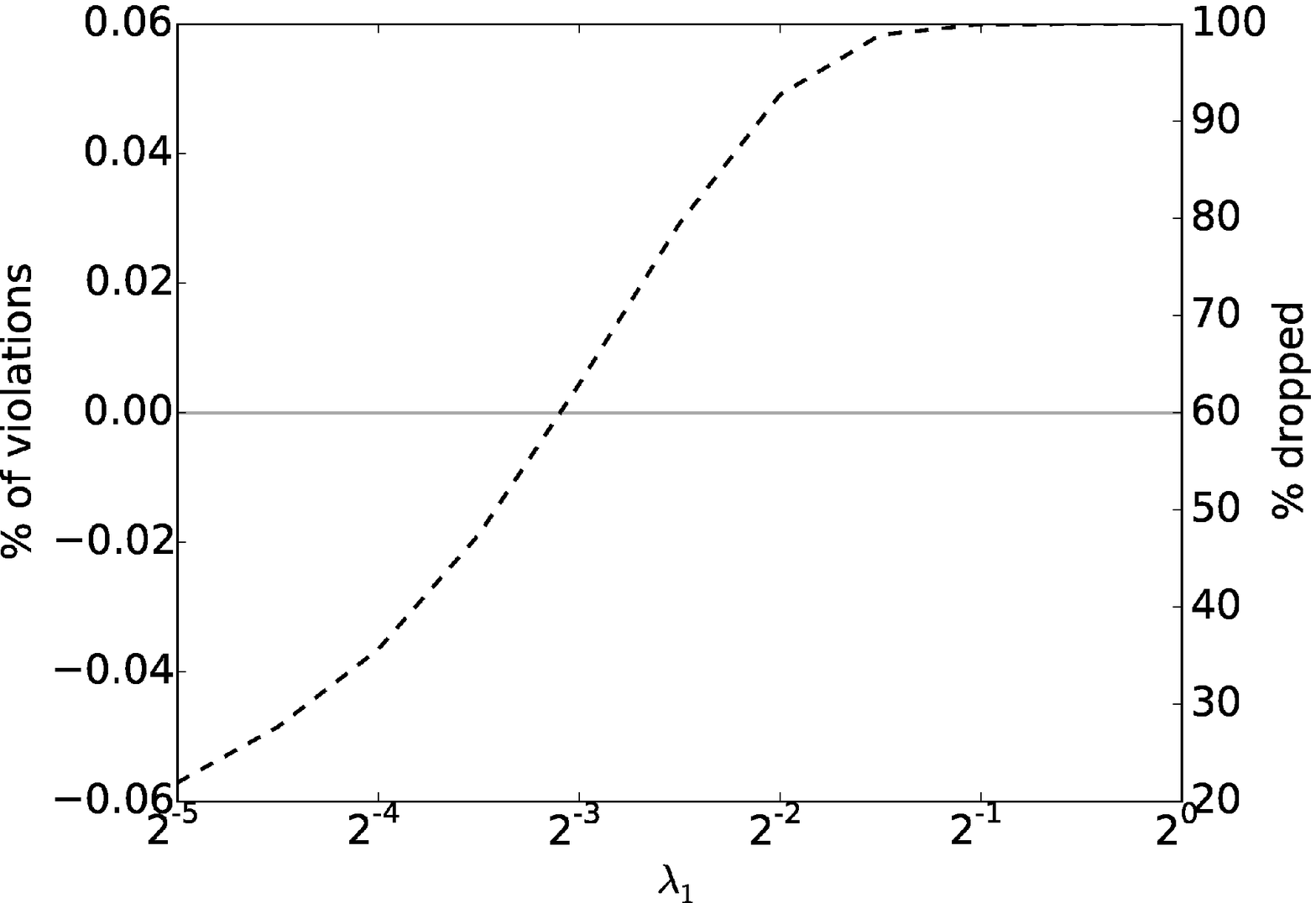}
\includegraphics[width=\fsw]{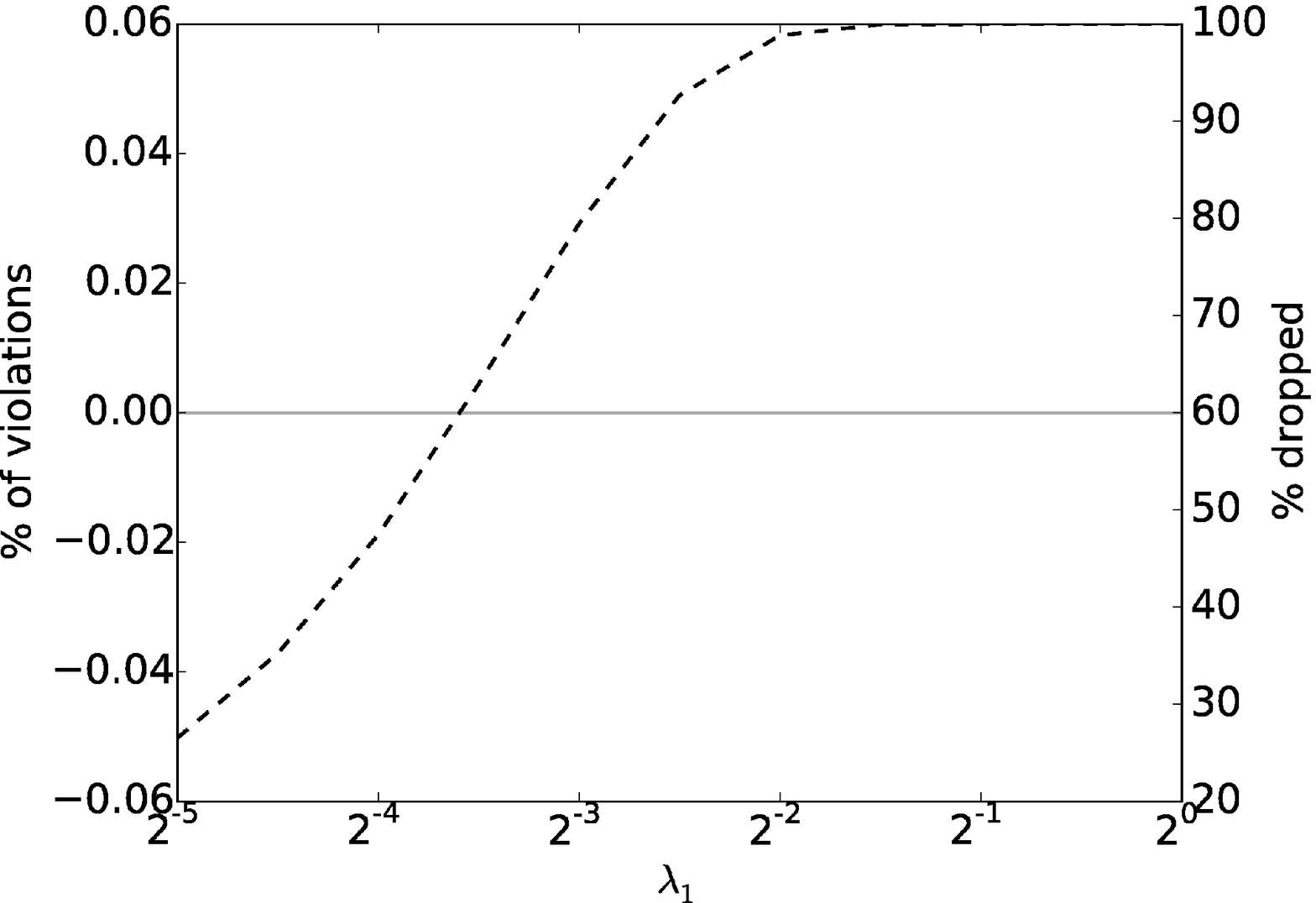}
\includegraphics[width=\fsw]{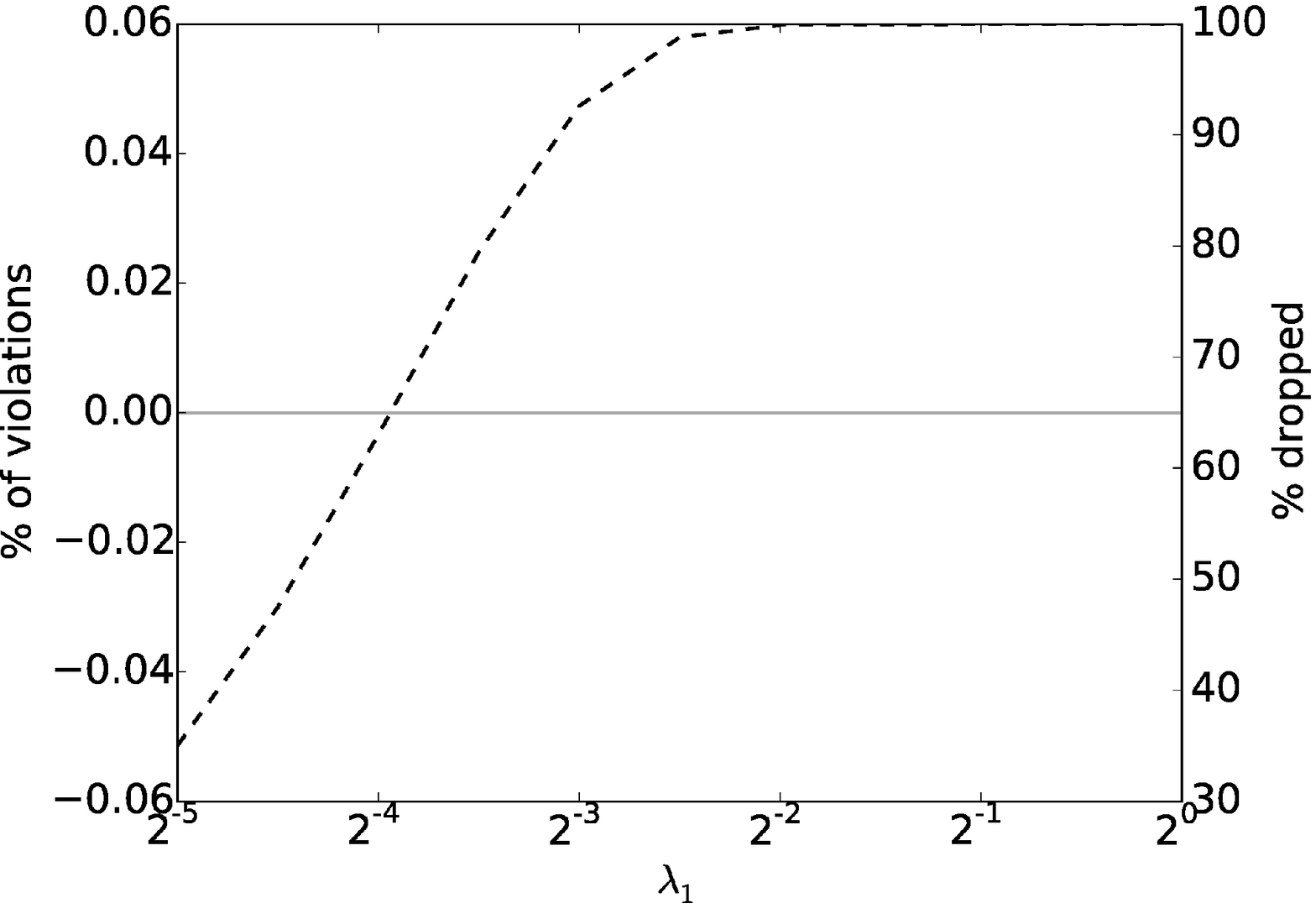} \\
\includegraphics[width=\fsw]{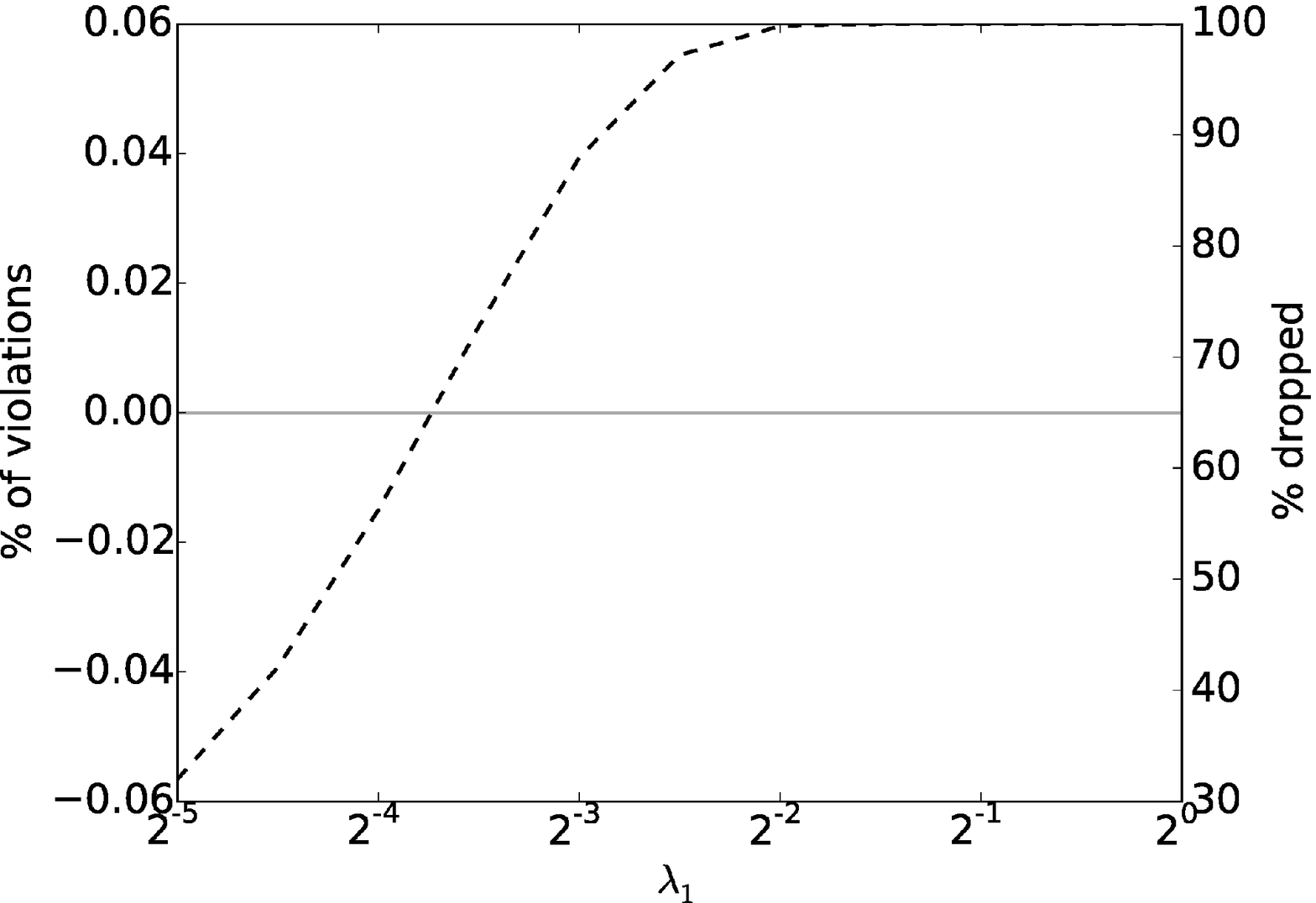}
\includegraphics[width=\fsw]{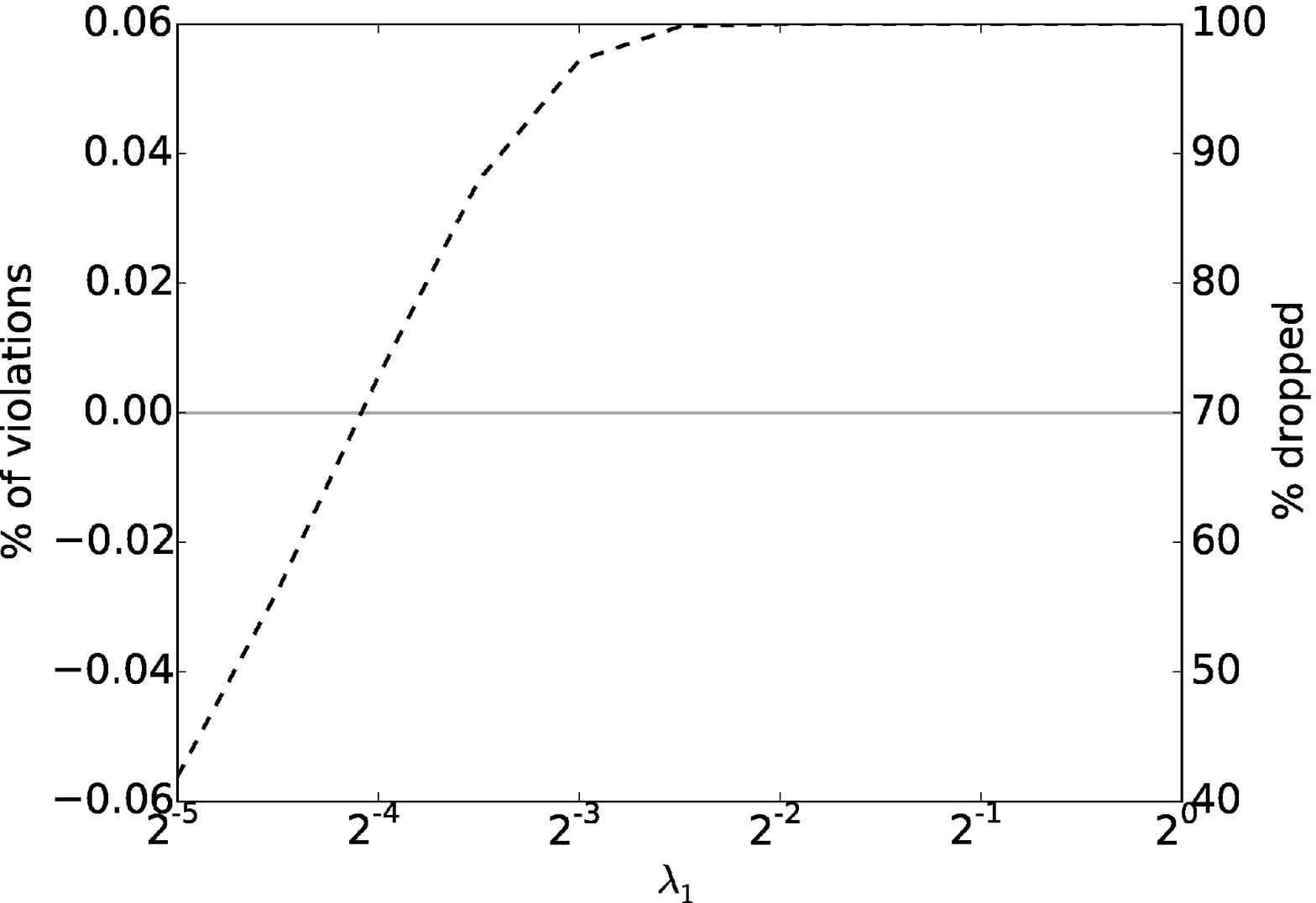}
\includegraphics[width=\fsw]{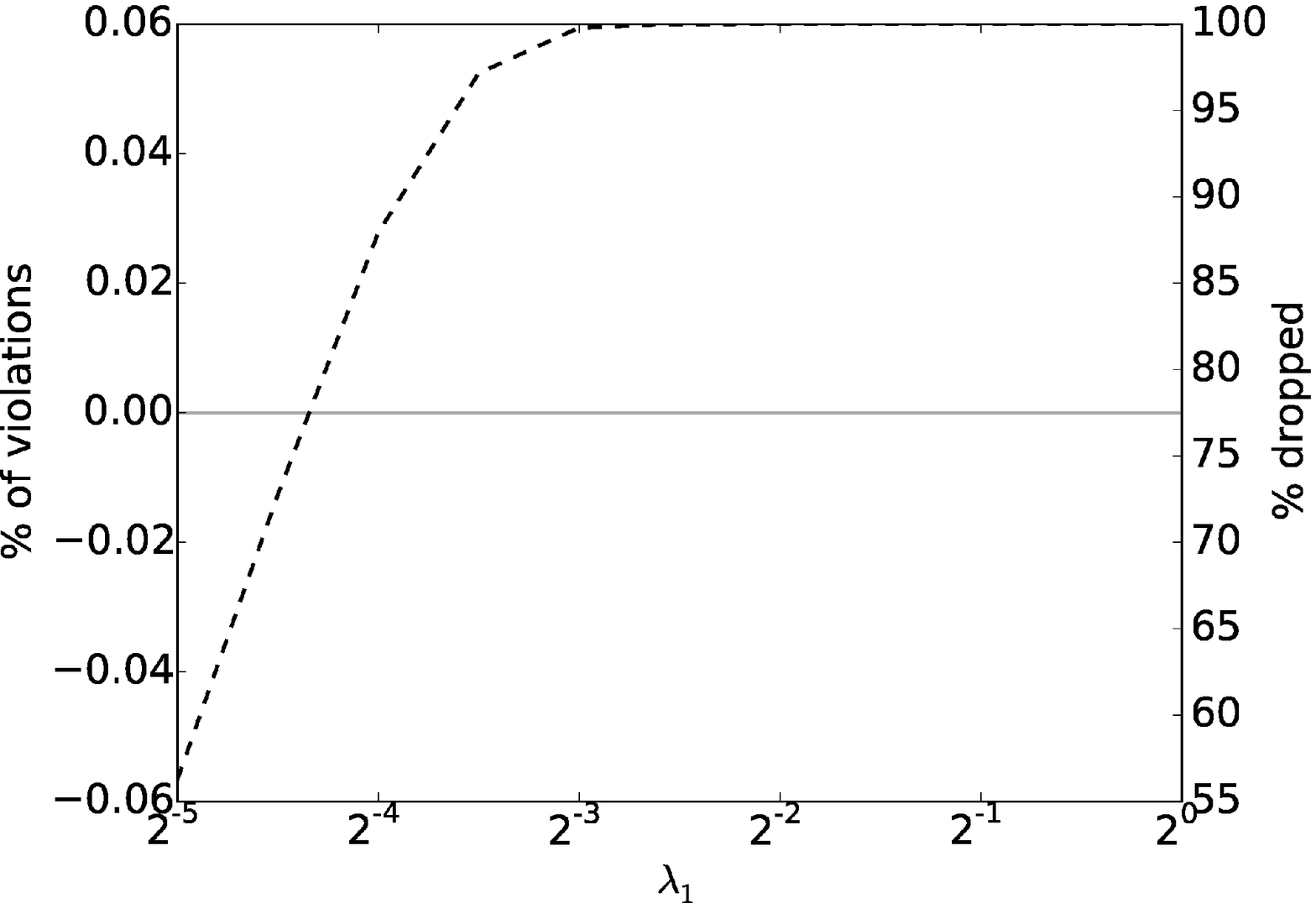}
\caption{Percentages of dropped variables excluding diagonal entries (dashed line, right vertical axes) and violations (solid line, left vertical axes) for \ours's screening rules ($\lambda_2 = 1$).  First row is $p = 1000$, second row is $p = 3000$; first column is $n = 0.2 p$, second is $n = 0.4 p$, third is $n = 0.8 p$.  The rules never commit a violation.}
\label{fig:screen}
\end{figure}

\subsection{Minimum variance portfolio optimization}

Next, we evaluate \ours, as well as several other methods, in the context of a finance application.  We consider the problem of \textit{minimum variance portfolio optimization}, \ie, we must allocate our wealth across $p$ assets so that our overall risk is minimized; we model risk here as $x^T \hat{\Sigma} x$, where $x \in \reals^p$ is an allocation vector ($x_i > 0$ corresponds to a long position, while $x_i < 0$ corresponds to a short position), and $\hat{\Sigma}$ is an estimate of the underlying covariance matrix.  This leads to the following (convex) optimization problem: 
\optprobstartnn
\begin{array}{ll}
\minimizewrt{x \in {\bf R}^p} & x^T \hat{\Sigma} x \\
\subjectto & \ones^T x = 1,
\end{array}
\label{eq:finance}
\optprobendnn
which admits the analytical solution $x = ( \ones^T \hat{\Sigma}^{-1} \ones )^{-1} \hat{\Sigma}^{-1} \ones$.  We choose to solve a minimum variance portfolio optimization problem (instead of, say, a \textit{mean/variance problem} \citep{markowitz1952portfolio}) in order to isolate the impact of the estimate $\hat{\Omega} = \hat{\Sigma}^{-1}$.

\newcommand{\Nest}{H} 
We obtained the closing prices of the 30 constituent stocks of the Dow Jones Industrial Average (DJIA) from February 18, 1995 through October 26, 2012 (roughly 17 years) from \texttt{http://finance.yahoo.com}.  We divided the data into $T = 261$ consecutive time periods (of roughly 20 days each).  The $\Nest$ days preceding each trading period, commonly referred to as the \textit{estimation horizon}, were used to compute the estimate $\hat{\Omega}$; 10-fold cross-validation using the criterion \eqref{eq:bic} was used to choose $\lambda_1$ and $\lambda_2$.  The trading period was then used to evaluate the methods.  We investigated $\Nest \in \{ 35, 40, 45, 50, 75, 150, 225, 300 \}$.

We primarily evaluated each method using \textit{realized risk}, \ie,
\[
r = \left( (1/T) \sum_{t=1}^T \left( x_t^T p_t - \bar{p} \right)^2 \right)^{1/2},
\]
where $x_t, p_t \in \reals^p$ are the portfolio allocation and price change vectors for period $t$, respectively, and $\bar{p}$ is the \textit{realized return}, \ie,
\[
\bar{p} = (1/T) \sum_{t=1}^T x_t^T p_t,
\]
as well as the (commonly used) \textit{Sharpe ratio}, \ie,
\[
\left( \bar{p} - p_\textrm{free} \right)/ r,
\]
where $p_\textrm{free}$ is the risk-free rate (we set $p_\textrm{free} = 5\%$); intuitively, realized risk measures the instability (\ie, riskiness) of a trading strategy, and the Sharpe ratio trades off the (risk-free rate adjusted) returns and risk.

We compared \ours~with \conc, the sample covariance matrix (denoted \samplee), the \glasso, the condition number-regularized inverse covariance matrix estimator of \citet{won2013condition} (\condreg), the Ledoit-Wolf estimator \citep{ledoit2003honey} (\ledoit), as well as the DJIA itself (\ie, an index fund).  Tables \ref{tab:risk} and \ref{tab:sharpe} present the results.  When the estimation horizon is small, \ie, when $H \in \{35,40,45,50,75\}$, \ours~achieves the lowest risk, which is a useful feature when markets fluctuate; \ours~is always within 4\% of the lowest risk when the estimation horizon is larger.  Additionally, \ours~achieves significantly lower risk than \conc~across all estimation horizons.  These reductions in risk also translate into better Sharpe ratios for \ours: \ours~achieves the highest Sharpe ratio four (out of eight) times, which is more than any other method.  When \ours~does not achieve the highest Sharpe ratio, it is usually within 5\% of the best Sharpe ratio.  We also plot the cumulative wealth (in \$) achieved by an estimator (for $H = 300$) in Figure \ref{fig:wealth}.  \ours~achieves the highest cumulative wealth despite not (directly) optimizing for returns (\$8.75 for \ours~versus \$8.72 for \conc) while incurring less risk: \ours~also preserves the most wealth during the 2008--2009 financial crisis (\$4.64 for \ours~versus \$4.43 for \conc~and \$4.23 for \condreg).  Further details are provided in the supplement.

\ifwantbiomtabs
\input{tab_rlzd_risk}

\input{tab_sharpe}
\else
\begin{table}
	\centering
	\begin{scriptsize}
	\begin{tabular}{llccccccc}
		\\
			$H$ & \ours & \conc & \samplee & \glasso & \condreg & \ledoit & \djia \\[5pt]
35 & \textbf{15.23} & 17.03 & 33.86 & 16.55 & 17.83 & 15.58 & 18.96\\
40 & \textbf{15.04} & 17.02 & 26.52 & 16.54 & 17.76 & 15.46 & 18.96\\
45 & \textbf{15.21} & 17.04 & 23.19 & 16.56 & 17.64 & 15.43 & 18.96\\
50 & \textbf{15.01} & 17.02 & 20.95 & 16.36 & 17.61 & 15.36 & 18.96\\
75 & \textbf{15.06} & 17.04 & 17.45 & 15.61 & 17.20 & 15.10 & 18.96\\
150 & 15.07 & 17.09 & 15.41 & 14.99 & 16.37 & \textbf{14.66} & 18.96\\
225 & 15.12 & 17.10 & 14.98 & 14.87 & 16.07 & \textbf{14.52} & 18.96\\
300 & 15.25 & 17.16 & 14.95 & 14.95 & 16.10 & \textbf{14.52} & 18.96\\

	\end{tabular}
	\end{scriptsize}
	\caption{Realized risk for various estimators and estimation horizons $H$ in the portfolio optimization example.  Lower is better; best in \textbf{bold}.  \ours~is best 5/8 times.}
	\label{tab:risk}
\end{table}

\begin{table}
	\centering
	\begin{scriptsize}
	\begin{tabular}{llccccccc}
			$H$ & \ours & \conc & \samplee & \glasso & \condreg & \ledoit & \djia \\[5pt]
35 & \textbf{0.52} & 0.48 & 0.36 & 0.49 & 0.48 & 0.47 & 0.19\\
40 & \textbf{0.50} & 0.48 & 0.44 & 0.49 & 0.48 & 0.44 & 0.19\\
45 & 0.43 & \textbf{0.47} & 0.26 & 0.47 & 0.45 & 0.39 & 0.19\\
50 & 0.47 & \textbf{0.49} & 0.23 & 0.47 & 0.46 & 0.41 & 0.19\\
75 & \textbf{0.48} & 0.47 & 0.38 & 0.42 & 0.46 & 0.37 & 0.19\\
150 & 0.47 & \textbf{0.48} & 0.29 & 0.36 & 0.47 & 0.38 & 0.19\\
225 & 0.50 & 0.50 & 0.37 & 0.36 & \textbf{0.52} & 0.42 & 0.19\\
300 & \textbf{0.55} & 0.50 & 0.36 & 0.36 & 0.49 & 0.41 & 0.19\\

	\end{tabular}
	\end{scriptsize}
	\caption{Sharpe ratios for various estimators and estimation horizons $H$ in the portfolio optimization example.  Higher is better; best in \textbf{bold}.  \ours~is best 4/8 times.}
	\label{tab:sharpe}
\end{table}
\fi

\begin{figure}
\includegraphics[width=\textwidth]{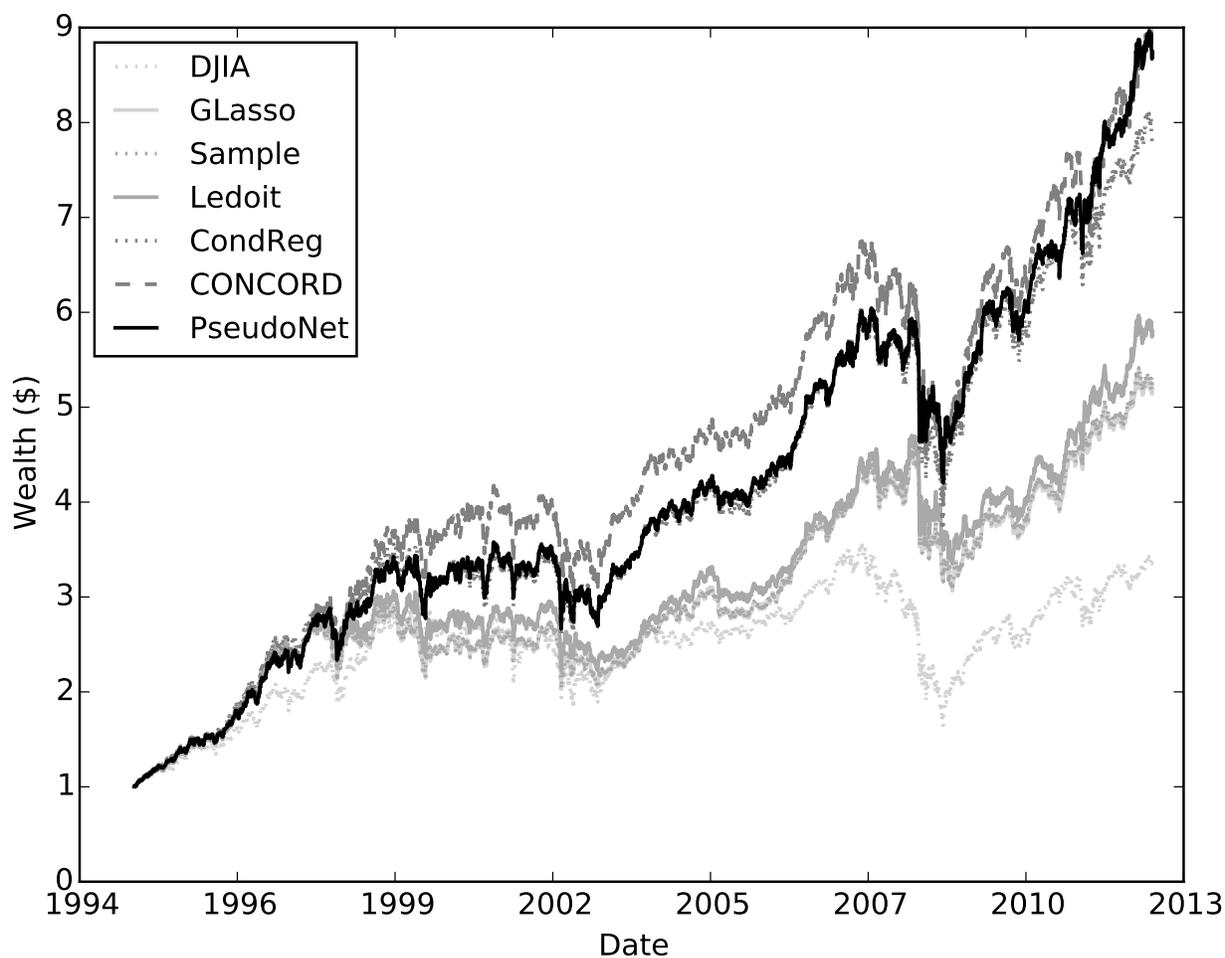}
\caption{Cumulative wealth for various estimators in the portfolio optimization example ($H = 300$); higher is better.  \ours~achieves the highest cumulative wealth.} 
\label{fig:wealth}
\end{figure}

\subsection{Sustainable energy application}
\label{sec:neuro}

Finally, we evaluate \ours~on the task of recovering the conditional independencies between several wind farms on the basis of historical wind power measurements at these farms; wind power is naturally intermittent (as are many renewable resources), and thus understanding the relationships between wind farms can help operators forecast, plan, and dispatch.  We obtained hourly wind power measurements from July 1, 2009 through September 14, 2010 (440 days) at seven wind farms from \texttt{http://www.kaggle.com/c/GEF2012-wind-forecasting}; see \citet{hong} for further details, as well as a summary of a recent Kaggle competition based on this data.  Each group of 48 columns in the data set corresponds to two days (\ie, 48 hours) of hourly wind power measurements at a particular farm; to model the nonlinear relationship between wind power at different locations, we consider five radial basis function kernels spread evenly and evaluated at each hourly measurement (see, for example, \citet{wytock2013sparse,alnur} for a similar approach).  Thus, $p = 7 \times 48 \times 5 = 1680$.  Each row in the data set considers wind power measurements starting 12 hours after the (start of the) previous row; for example, the first row considers wind power measurements from 1:00 pm on July 1, 2009 through 12:00 pm on July 3, 2009, the second row from 1:00 am on July 2, 2009 through 12:00 am on July 4, 2009, and the last row from 1:00 am on September 12, 2010 through 12:00 am on September 14, 2010.  Thus, $n = 877$.  Computing the \ours~estimate here therefore corresponds to learning the structure of a spatiotemporal graphical model.

The left panel of Figure \ref{fig:wind} presents the \ours~estimate's sparsity pattern.  The nonzero super- and sub-diagonal entries suggest that at any wind farm the previous hour's wind power (naturally) influences the next hour's, while the nonzero off-diagonal entries, for example, in the (4,6) block, uncover farms that may influence one another: for example, farms 4 and 6 may be nearby, or (perhaps more interestingly) they may not be nearby\footnote{The true wind farm locations are censored in the data set.}.  \citet{wytock2013sparse}, whose method placed fifth in the Kaggle competition, as well as \citet{alnur} report similar findings (see the left panel of Figure 7 as well as Figure S.3, respectively, in these papers).  The right panel of Figure \ref{fig:wind} evaluates \ours's screening rules on this data set: the rules never commit a violation.

\begin{figure}
\centering
\includegraphics[scale=0.4,valign=t]{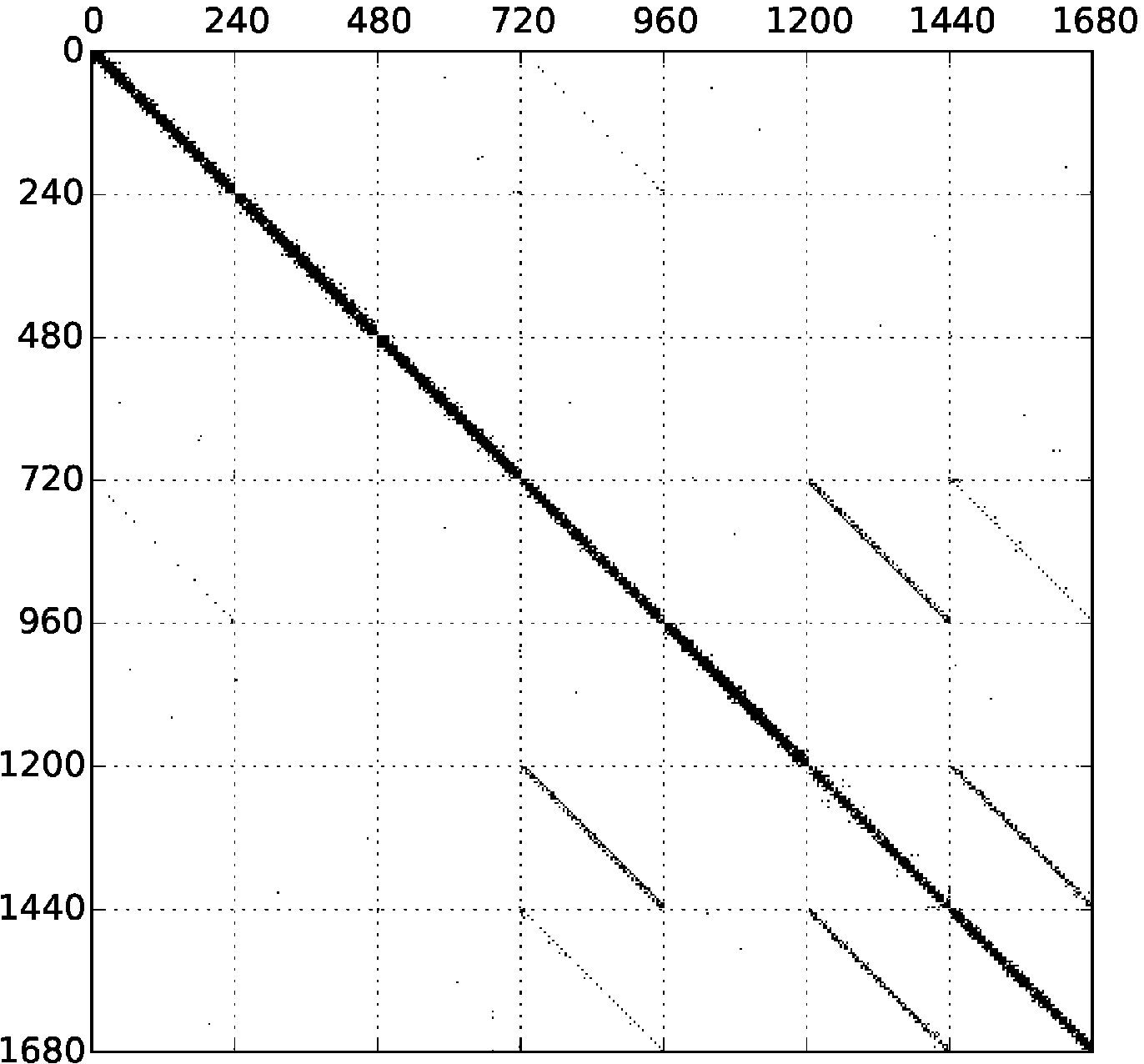}
\includegraphics[scale=0.4,valign=t]{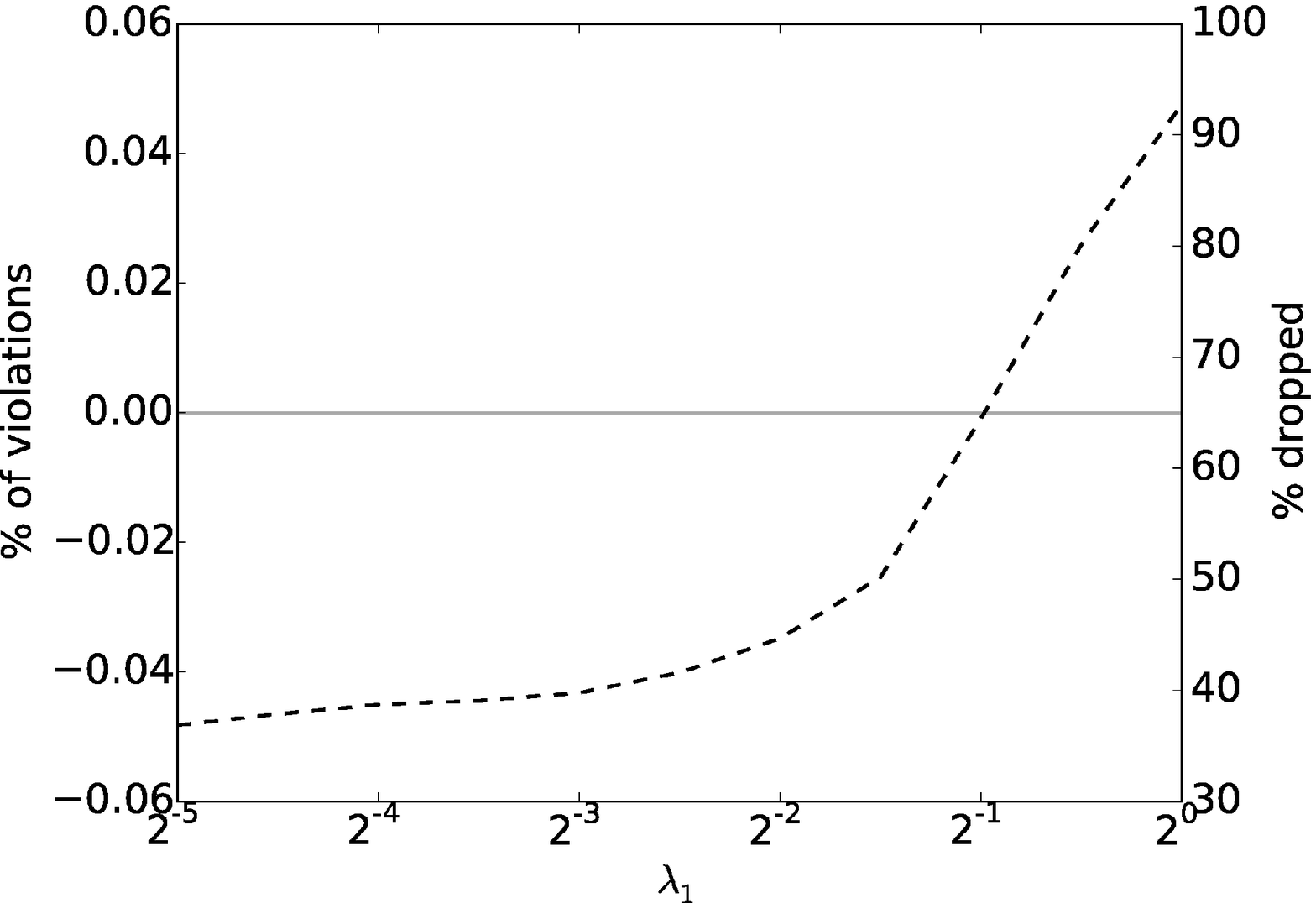}
\caption{Left: sparsity pattern for the \ours~estimate (black means nonzero, and each block corresponds to a wind farm).  Right: percentages of dropped variables excluding diagonal entries (dashed line, right vertical axes) and violations (solid line, left vertical axes) for \ours's screening rules ($\lambda_2 = 1$); the rules never commit a violation.}
\label{fig:wind}
\end{figure}

\section{Theory}
\label{sec:theory}

Finally, we collect here all our theoretical results on \ours's statistical and computational properties.  We state these results, essentially, in the order in which they are referenced in the text above.  Accordingly, we first show that the \ours~estimator converges to the unique, global solution of its defining optimization problem at a geometric (``linear'') rate.  Following this, we show, under suitable regularity conditions, that \ours~is consistent at a rate of $\sqrt{(\log p)/n}$; additionally, we provide a two-step method that obtains accurate estimates of the diagonal entries of the underlying inverse covariance matrix, even when $p > n$, as required by our consistency proof, which goes beyond the consistency proofs for the related pseudolikelihood-based estimators \spacee~\citep[Theorem 3]{peng2009partial} and \conc~\citep[Theorem 2]{khare2014convex}.  Finally, we show that the \ours~estimate does not saturate, while the \splice, \spacee, and \conc~estimates can saturate.  As a reminder, all proofs can be found in the supplement.

\subsection{Linear convergence}
\label{sec:conv}

We begin by showing that Algorithm \ref{alg:ours}, used to compute the \ours~estimate, converges to the unique, global solution of the \ours~optimization problem \eqref{eq:ours} at a geometric (``linear'') rate; this constrasts with a number of other pseudolikelihood-based methods, which do not provide unique estimates \citep{rocha2008path,peng2009partial,friedman2010applications,khare2014convex,NIPS2014_5576}, making interpretation difficult, are not guaranteed to converge \citep{rocha2008path,peng2009partial,friedman2010applications}, or converge at a slower rate \citep{khare2014convex,NIPS2014_5576}.

The result is given in Lemma \ref{thm:conv} below.

\begin{lemma}[Linear convergence]
\label{thm:conv}
Suppose $\Omega^{(0)}, \ldots, \Omega^{(k)}$ is a sequence of \ours~iterates with nonincreasing objective value.  Let $\hat{\Omega}^\ourss$ be the solution of the \ours~optimization problem \eqref{eq:ours}.  Then we get that
\begin{equation*}
\| \Omega^{(i)} - \hat{\Omega}^\ourss \|_F \leq (1-c)^i \| \Omega^{(0)} - \hat{\Omega}^\ourss \|_F, \quad i=1,\ldots,k,
\end{equation*}
where $c = \lambda_2 / L$ and $L$ is the Lipschitz constant for the gradient of the smooth term $\nabla g$ in \eqref{eq:prox2}.
\end{lemma}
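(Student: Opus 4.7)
The strategy is to combine two ingredients: the smooth part $g$ defined in \eqref{eq:prox2} is strongly convex with modulus $\lambda_2$, and $\nabla g$ is $L$-Lipschitz on the sublevel set of $f$ containing the iterates. Given these, a standard proximal-gradient contraction argument yields the claimed geometric convergence of the iterates to the unique minimizer $\hat{\Omega}^\ourss$.

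For strong convexity I would read the Hessian directly off \eqref{eq:hess}. The first term $\sum_{i=1}^p (1/\Omega_{ii}^2)(e_i e_i^T \otimes e_i e_i^T)$ is a nonnegative diagonal in the vectorized basis; the second term $(n/2)(S\otimes I + I\otimes S)$ is positive semidefinite because $S = (1/n)X^T X \succeq 0$; and the third term $\lambda_2 I_{p^2}$ forces $\nabla^2 g(\Omega) \succeq \lambda_2 I_{p^2}$ uniformly in $\Omega$, so $g$ is globally $\lambda_2$-strongly convex. For Lipschitz smoothness of $\nabla g$, the only term that can blow up is $1/\Omega_{ii}^2$; but the $-(1/2)\log\det(\Omega_\D^2)$ piece of $g$ diverges as $|\Omega_{ii}| \to 0$, so the nonincreasing-objective hypothesis of the lemma confines the iterates to a sublevel set on which $|\Omega_{ii}|$ is bounded away from zero, producing a uniform $L$ with $\nabla^2 g(\Omega) \preceq L I_{p^2}$ along the trajectory.

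For the contraction step itself, define the proximal-gradient operator $T_\tau(\Omega) = \prox_{\tau h}(\Omega - \tau \nabla g(\Omega))$, so each iterate satisfies $\Omega^{(i+1)} = T_{\tau_i}(\Omega^{(i)})$, and by the first-order optimality condition $0 \in \nabla g(\hat{\Omega}^\ourss) + \partial h(\hat{\Omega}^\ourss)$ the minimizer $\hat{\Omega}^\ourss$ is a fixed point of $T_\tau$ for every $\tau > 0$. Combining the nonexpansivity of $\prox_{\tau h}$ with the Jacobian bound on $I - \tau \nabla g$, whose spectrum lies in $[0, 1 - \tau \lambda_2]$ whenever $\tau \le 1/L$ by the two Hessian bounds above, gives
\[
\| \Omega^{(i+1)} - \hat{\Omega}^\ourss \|_F \le (1 - \tau_i \lambda_2) \| \Omega^{(i)} - \hat{\Omega}^\ourss \|_F,
\]
and iterating from $i = 0$ yields the stated geometric bound.

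The main obstacle is handling the adaptive step sizes $\tau_i$ returned by the backtracking line search in Algorithm \ref{alg:ours}: one has to verify that $\tau_i$ is bounded below by a constant multiple of $1/L$. The Armijo-type sufficient-descent test in the inner loop is automatically satisfied for every $\tau \le 1/L$, so the loop terminates with $\tau_i \ge \min(\tau_{\textrm{init}}, \beta/L)$; this produces a per-iteration contraction factor of at most $1 - \lambda_2 \min(\tau_{\textrm{init}}, \beta/L)$, and absorbing the universal constants $\beta$ and $\tau_{\textrm{init}}$ into $L$ recovers the statement with $c = \lambda_2/L$.
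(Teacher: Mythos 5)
Your proposal is correct and follows essentially the same route as the paper: $\lambda_2$-strong convexity read off the Hessian \eqref{eq:hess}, a uniform Lipschitz constant for $\nabla g$ obtained by using the nonincreasing-objective hypothesis to confine the $\Omega_{ii}$ to a sublevel set where they are bounded away from zero and infinity, and then the standard proximal-gradient contraction. The only difference is cosmetic --- the paper outsources the final contraction step to \citet[Proposition 3]{schmidt2011convergence}, whereas you write out the fixed-point/nonexpansiveness argument and additionally verify that the backtracking step sizes stay bounded below by a multiple of $1/L$, a point the paper's proof leaves implicit.
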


\subsection{Consistency}
\label{sec:cons}

Next, we show, under suitable regularity conditions, that \ours~is consistent at a rate of $\sqrt{(\log p) / n}$.  Previous consistency results on pseudolikelihood-based estimators assume the existence of accurate estimates of the diagonal entries of the underlying inverse covariance matrix $\Omegatrue$; however, no method for obtaining such estimates is provided in these papers when $p > n$ \citep{khare2014convex,peng2009partial}.  Below, we provide a two-step method that obtains accurate diagonal estimates, which are required for the \ours~consistency proof (as well as for the consistency proofs for \conc~and \spacee); this is done in Theorem \ref{thm:diags}.

We now provide the regularity conditions required to establish the consistency of \ours; the assumptions are essentially the same as those required in \citet{khare2014convex}, which are in turn similar to those in \citet{peng2009partial}.
\begin{enumerate}[i.]
\item \textit{Sub-Gaussian rows.}  We require that the rows of the data matrix $X$ are i.i.d.~sub-Gaussian random vectors, \ie, there exists a constant $c \geq 0$ such that, for all $t \in \reals^p$, we have that $\Expect \exp ( t^T X_{i \cdot} ) \leq \exp( (c^2 / 2) t^T t ), \; i=1,\ldots,n$, where, as a reminder, $X_{i \cdot}$ is the $i$th row of $X$.

\item \textit{Correlation restrictions.}  For all $n$, we require that the minimum and maximum eigenvalues of the underlying covariance matrix $\Sigma^0$, \ie, $\lambda_{\min}(\Sigma^0)$ and $\lambda_{\max}(\Sigma^0)$, are uniformly bounded away from zero and infinity (note that we omit the notational dependence of $\Sigma^0$, as well as some related quantities, on $n$, for simplicity).

\item \textit{Incoherence.} We require that there exists a constant $\alpha < 1$ such that, for all $(i,j) \in \mathcal{A}_n^c$, where $\mathcal{A}_n$ here is the support of the off-diagonal entries of the underlying inverse covariance matrix $\Omegatrue_\X$, \ie,
\[
\mathcal{A}_n = \left\{ (i,j) : 1 \leq i < j \leq p, \; \Omegatrue_{ij} \neq 0 \right\},
\]
we have that
\begin{equation}
\left| \bar{L}_{ij, \mathcal{A}_n}^{''}(\omega^0_\X, \omega^0_\D) ( \bar{L}_{\mathcal{A}_n \mathcal{A}_n}^{''}(\omega^0_\X, \omega^0_\D) )^{-1} \sign \omega^0_{\mathcal{A}_n} \right| \leq \alpha.
\label{eq:incoh}
\end{equation}
Here, the $\sign$ here is interpreted elementwise; $\omega^0_\X$ and $\omega^0_\D$ are the vectorizations of the off-diagonal and diagonal entries, respectively, of the underlying inverse covariance matrix $\Omegatrue$, \ie,
\[
\omega^0_\X = \vect \Omegatrue_\X, \quad \omega^0_\D = \vect \Omegatrue_\D;
\]
$L(\omega_\X^0, \omega_\D^0)$ equals the $\log \det$ plus trace terms in \eqref{eq:ours} evaluated at $(\omega_\X^0, \omega_\D^0)$, \ie,
\[
L(\omega_\X^0, \omega_\D^0) = - (1/2) \log \det ( (\Omegatrue_\D)^2 ) + (n/2) \Tr ( S (\Omegatrue)^2 );
\]
and $\bar{L}^{''}_{ij,k\ell}(\omega_\X^0, \omega_\D^0)$ is an element of the negative $(p^2 \times p^2)$-dimensional Fisher information matrix at $(\omega_\X^0, \omega_\D^0)$, \ie,
\[
\bar{L}^{''}_{ij,k\ell}(\omega_\X^0, \omega_\D^0) = \Expect \frac{\partial^2 L(\omega_\X^0, \omega_\D^0)}{\partial \omega_{\X,ij}^0 \omega_{\X, k \ell}^0}, \quad i,j,k,\ell=1,\ldots,p
\]
(we abuse notation somewhat and write $\omega_{ij} = \Omega_{ij}$).

\item \textit{Accurate diagonal estimates.} We require the existence of accurate diagonal estimates $\hat{\omega}_\D$ such that
\[
\| \hat{\omega}_\D - \omega^0_\D \|_\infty = O_P( \sqrt{(\log n) / n} ).
\]
As stated in the beginning of this subsection, a method to obtain such estimates is provided in Theorem \ref{thm:diags}; our two-step method firstly performs a lasso regression (with tuning parameter $\lambda_{1,n}$) of each diagonal element on the remaining variables to identify subsets of relevant variables, and secondly estimates each diagonal element with the variance of the residuals given by the linear regression of each diagonal element on its subset of relevant variables.

\item \textit{Support size and tuning parameter restrictions.}  As $n \to \infty$, we let $q_n = o( \sqrt{n / \log n} )$, $\lambda_{1,n} \sqrt{q_n} \to 0$, $\lambda_{1,n} \sqrt{n / \log n} \to \infty$, and $\lambda_{2,n} = o(\lambda_{1,n})$, where $q_n = | \mathcal{A}_n |$ (note that we make explicit here the notational dependence of the tuning parameters on $n$).

\item \textit{Signal restrictions.}  As $n \to \infty$, we require that $s_n / ( \lambda_{1,n} \sqrt{q_n} ) \to \infty$, where $s_n = \max_{(i,j) \in \mathcal{A}_n} | \omega^0_{\X, ij} |$.
\end{enumerate}

Condition (iii) can be interpreted as requiring bounded correlation between the rows of $\bar{L}_{\mathcal{A}_n^c \mathcal{A}_n}^{''}(\omega^0_\X, \omega^0_\D)$ and the columns of $( \bar{L}_{\mathcal{A}_n \mathcal{A}_n}^{''}(\omega^0_\X, \omega^0_\D) )^{-1}$.  \citet{khare2014convex} as well as \citet{peng2009partial} also use this condition; see \citet{khare2014convex} for examples that satisfy this condition.

The following theorem presents our consistency result for \ours.

\begin{theorem}[Consistency]
\label{thm:cons}
Assume the conditions stated above.  Let $p = O(n^\kappa)$ for a constant $\kappa > 0$, and let $\hat{\Omega}^\ourss$ be the \ours~estimate given by the solution of the \ours~optimization problem \eqref{eq:ours}.  Then, we have, with probability at least $1-O(n^{-\beta})$ for a constant $\beta > 0$,
\begin{enumerate}[a.]
\item \textit{signed support recovery}: $\sign \hat{\omega}^\ourss_{ij} = \sign \omega^0_{ij}, \; i,j=1,\ldots,p$, where $\hat{\omega}^\ourss = \vect \hat{\Omega}^\ourss$ (we take $\sign 0 = 0$)
\item \textit{estimation error}: $\| \hat{\omega}^\ourss - \omega^0 \|_2 \leq c_1 \lambda_{1,n} \sqrt{q_n}$, for a constant $c_1 > 0$.
\end{enumerate}
\end{theorem}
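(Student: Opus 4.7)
The plan is to apply a \emph{primal--dual witness} (PDW) argument, in the style of \citet{khare2014convex} for \conc~and \citet{peng2009partial} for \spacee, adapted to the strictly convex \ours~objective. First I would write out the KKT optimality conditions for \eqref{eq:ours}: setting the subdifferential to zero gives, for every off-diagonal entry $(i,j)$,
\[
\bigl[ (n/2)(S\Omega+\Omega S) + \lambda_{2,n}\Omega \bigr]_{ij} + \lambda_{1,n}\, z_{ij} \;=\; 0, \quad z \in \partial\|\Omega_\X\|_1,
\]
together with the analogous equations on the diagonal involving $-\Omega_\D^{-1}$. I would then construct a candidate $\tilde{\Omega}$ as follows: plug in the accurate diagonal estimates $\hat{\omega}_\D$ from Theorem \ref{thm:diags} (so that condition (iv) holds), force $\tilde{\Omega}_{\X,\mathcal{A}_n^c}=0$, and solve the resulting restricted convex program over $\mathcal{A}_n$. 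Strict convexity of \eqref{eq:ours} (guaranteed by $\lambda_{2,n}>0$ together with the $\log\det$ term in the diagonals) makes this restricted problem uniquely solvable.

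The heart of the argument is then a two-step verification. In step one, I would show \emph{estimation error} on the restricted problem: a Taylor expansion of the smooth gradient of $g$ around $\omega^0$ gives, schematically,
\[
\bar{L}''_{\mathcal{A}_n\mathcal{A}_n}(\omega^0)\,(\tilde{\omega}_{\mathcal{A}_n}-\omega^0_{\mathcal{A}_n}) \;=\; -\nabla_{\mathcal{A}_n} g(\omega^0) \;-\; \lambda_{1,n}\sign(\omega^0_{\mathcal{A}_n}) \;+\; R_n,
\]
where $R_n$ absorbs the remainder, the $\lambda_{2,n}$-perturbation, and the diagonal plug-in error. Under condition (i), sub-Gaussianity of the rows yields the standard concentration $\|S-\Sigma^0\|_\infty = O_P(\sqrt{\log p/n})$, so that $\|\nabla g(\omega^0)\|_\infty = O_P(\sqrt{\log p/n})$. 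Combining with condition (ii) to bound $\|(\bar{L}''_{\mathcal{A}_n\mathcal{A}_n})^{-1}\|$, condition (iv) for the diagonal error, $\lambda_{2,n}=o(\lambda_{1,n})$ from (v), and absorbing $R_n$ using $q_n=o(\sqrt{n/\log n})$, one obtains $\|\tilde{\omega}_{\mathcal{A}_n}-\omega^0_{\mathcal{A}_n}\|_2 \le c_1 \lambda_{1,n}\sqrt{q_n}$ with probability at least $1-O(n^{-\beta})$, which is exactly claim (b).

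In step two, I would establish \emph{strict dual feasibility} on $\mathcal{A}_n^c$. Plugging $\tilde{\omega}$ into the KKT equations off the support, solving for $z_{ij}$, and using the incoherence condition \eqref{eq:incoh} together with the bounds on $\nabla g(\omega^0)$, on the diagonal plug-in error, and on the restricted estimation error from step one, yields $|z_{ij}| \le \alpha + o_P(1) < 1$ uniformly over $(i,j)\in\mathcal{A}_n^c$. This shows $\tilde{\Omega}$ satisfies the full KKT system strictly; by strict convexity of \eqref{eq:ours} it is therefore the unique global optimum, i.e. $\hat{\Omega}^\ourss=\tilde{\Omega}$, and in particular $\hat{\omega}^\ourss_{ij}=0$ for all $(i,j)\in\mathcal{A}_n^c$. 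Finally, sign recovery on $\mathcal{A}_n$ follows because the signal condition (vi), $s_n/(\lambda_{1,n}\sqrt{q_n})\to\infty$, guarantees that the $\ell_2$ error from step one is strictly smaller than the smallest nonzero $|\omega^0_{\X,ij}|$, so no coordinate can cross zero; this gives claim (a).

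The main obstacle will be the bookkeeping in step two. Unlike the pure lasso, the score $\nabla g$ is nonlinear in $\Omega$ (the term $S\Omega+\Omega S$ is linear, but $-\Omega_\D^{-1}$ and the cross-dependence through the plugged-in $\hat{\omega}_\D$ are not), so the PDW Taylor remainder $R_n$ contains products of the restricted error with the diagonal error and with $S-\Sigma^0$. Controlling these cross terms in the $\ell_\infty$ norm uniformly over $p=O(n^\kappa)$ coordinates, while keeping the final subgradient bound strictly below $\lambda_{1,n}$, is the delicate part and is exactly where the rate assumptions $q_n=o(\sqrt{n/\log n})$, $\lambda_{1,n}\sqrt{q_n}\to 0$, and $\lambda_{2,n}=o(\lambda_{1,n})$ in condition (v) are used. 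The accuracy guarantee for $\hat{\omega}_\D$ at rate $\sqrt{(\log n)/n}$ from Theorem \ref{thm:diags} is what allows the argument to close in the regime $p>n$, which is where the analogous proofs for \spacee~and \conc~left a gap.
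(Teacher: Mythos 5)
Your proposal is correct and follows essentially the same route as the paper: a primal--dual witness argument that solves the restricted problem on $\mathcal{A}_n$ with the plug-in diagonal estimates, bounds the restricted $\ell_2$ error by $c_1\lambda_{1,n}\sqrt{q_n}$, verifies strict dual feasibility on $\mathcal{A}_n^c$ via the incoherence condition (iii) with the $\lambda_{2,n}$ ridge term absorbed as $o(\lambda_{1,n})$, and invokes strict convexity plus the signal condition (vi) to conclude. The only cosmetic difference is that the paper first reparametrizes the off-diagonals as $\omega_{ij}=-\theta_{ij}(\hat{\Omega}_{ii}\hat{\Omega}_{jj})^{1/2}$ so the objective matches the \spacee~form and the estimation-error step can be closed by citing the ball/objective-comparison lemmas of \citet{peng2009partial}, rather than by directly inverting the Taylor expansion as you sketch.
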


\subsubsection{Accurate diagonal estimates}
\label{sec:diags}

The following theorem provides consistent estimates of the diagonal entries of the underlying inverse covariance matrix $\Omegatrue$.  In the case when $d_n$, which denotes the maximum number of nonzero entries in any row of $\Omegatrue$, is bounded in $n$, this theorem yields estimates satisfying condition (iv) above, even when $p > n$; this result is also useful in the context of consistency for \conc~\citep[Theorem 2]{khare2014convex} and \spacee~\citep[Theorem 3]{peng2009partial}, where such diagonal estimates are assumed, but a method to obtain them is not provided.

\begin{theorem}[Accurate diagonal estimates via two-step method]
\label{thm:diags}
Assume conditions (i), (ii), (v), and (vi) above.  Assume further that there exists a constant $\delta < 1$ such that
\begin{equation}
\left| \Sigma^0_{i, \mathcal{A}_n^j} \left( \Sigma^0_{\mathcal{A}_n^j, \mathcal{A}_n^j} \right)^{-1} \sign \Omegatrue_{\mathcal{A}_n^j, j} \right| \leq \delta, \quad i \notin \mathcal{A}_n^j, \; j=1,\ldots,p,
\label{eq:incoh2}
\end{equation}
where
\begin{align*}
d_n & = \max_{k=1,\ldots,p} \left| \left\{ \ell : \ell \in \{1,\ldots,p\}, \; \ell \neq k, \; \Omegatrue_{k \ell} \neq 0 \right\} \right|, \\
\mathcal{A}_n^j & = \left\{ k : k \in \{1,\ldots,p\}, \; k \neq j, \; \Omegatrue_{jk} \neq 0 \right\},
\end{align*}
and the $\sign$ in \eqref{eq:incoh2} is interpreted elementwise.  Now, for $j=1,\ldots,p$, let $\hat{\mathcal{A}}^j_n$ be the set of indices corresponding to the nonzero coefficients obtained by fitting a lasso regression of the $j$th diagonal element on the remaining variables (with tuning parameter $\lambda_{1,n}$).  Also, let $\hat{\omega}_{\D, j}$ be the sample variance of the $j$th diagonal element conditioned on the variables in $\hat{\mathcal{A}}^j_n$.  Then, for every $\beta > 0$, there exists a constant $c_2 > 0$ such that 
\[
\| \hat{\omega}_\D - \omega^0_\D \|_\infty \leq c_2 d_n \sqrt{(\log n) / n},
\]
with probability at least $1 - O(n^{-\beta})$.
\end{theorem}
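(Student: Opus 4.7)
The plan is to reduce the diagonal estimation to a two-stage neighborhood regression analysis, followed by a standard residual-variance and reciprocal argument. Fix any $j \in \{1, \ldots, p\}$; since $X_{i\cdot}$ has covariance $\Sigma^0 = (\Omegatrue)^{-1}$, the $j$th coordinate admits the linear representation
\[
X_{ij} = \sum_{k \neq j} \beta^{(j)}_k X_{ik} + \epsilon^{(j)}_i, \qquad \beta^{(j)}_k = -\Omegatrue_{jk}/\Omegatrue_{jj}, \qquad \sigma_j^2 := \var(\epsilon^{(j)}_i) = 1/\Omegatrue_{jj},
\]
where $\epsilon^{(j)}_i$ is mean zero, sub-Gaussian under (i), and uncorrelated with $X_{i,-j}$. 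In particular $\supp(\beta^{(j)}) = \mathcal{A}_n^j$ has cardinality at most $d_n$, and $\omega^0_{\D,j} = 1/\sigma_j^2$, so the task is to estimate $1/\sigma_j^2$ at rate $d_n\sqrt{\log n/n}$ uniformly in $j$.

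First I would establish variable-selection consistency of the Step~1 lasso regressions. The hypothesis \eqref{eq:incoh2} is exactly the population irrepresentability condition for the $j$th neighborhood regression. Combined with sub-Gaussian rows (i), the bounded spectrum of $\Sigma^0$ (ii), the sparsity and tuning-parameter scalings in (v) applied to $d_n$ (via $d_n \le q_n = o(\sqrt{n/\log n})$), and the signal condition (vi), a Wainwright-style primal--dual witness argument yields $\hat{\mathcal{A}}^j_n = \mathcal{A}_n^j$ with probability at least $1 - c\,p^{-\beta'}$ for constants $c, \beta' > 0$. Since $p = O(n^\kappa)$, a union bound over $j$ gives the event $\mathcal{E} = \{\hat{\mathcal{A}}^j_n = \mathcal{A}_n^j \text{ for all } j\}$ of probability at least $1 - O(n^{-\beta})$.

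On $\mathcal{E}$, the Step~2 OLS regression of $X_{\cdot j}$ on $X_{\cdot, \mathcal{A}_n^j}$ is well defined: sub-Gaussian matrix concentration, together with the lower eigenvalue bound in (ii) and the sparsity from (v), shows that $n^{-1} X_{\mathcal{A}_n^j}^\T X_{\mathcal{A}_n^j}$ is uniformly positive definite with high probability. The residual vector then equals $(I - P_{\mathcal{A}})\epsilon^{(j)}$, where $P_{\mathcal{A}}$ is the orthogonal projector onto the column span of $X_{\cdot, \mathcal{A}_n^j}$, so the residual variance $\hat\sigma_j^2 = n^{-1}\|(I - P_{\mathcal{A}})\epsilon^{(j)}\|_2^2$ has conditional mean $\sigma_j^2 (n - |\mathcal{A}_n^j|)/n$. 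Applying sub-Gaussian concentration of quadratic forms (Hanson--Wright) and absorbing the deterministic bias $\sigma_j^2 d_n/n$ gives
\[
|\hat\sigma_j^2 - \sigma_j^2| \le C \bigl( d_n/n + \sqrt{\log n / n}\bigr) \le C' d_n \sqrt{\log n / n}
\]
uniformly in $j$ (after a union bound) with probability at least $1 - O(n^{-\beta''})$. Setting $\hat\omega_{\D,j} = 1/\hat\sigma_j^2$ and using the uniform upper/lower bounds on $\sigma_j^2$ from (ii), a first-order Taylor expansion of $1/x$ transfers the rate to $|\hat\omega_{\D,j} - \omega^0_{\D,j}| \le c_2 d_n \sqrt{\log n/n}$, establishing the claim.

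The main obstacle is Step~1: converting the population-level irrepresentability inequality \eqref{eq:incoh2} into a sample-level statement uniformly in $j$, which is what the primal--dual witness construction actually requires. This calls for careful concentration of the Gram submatrices $n^{-1} X_{\mathcal{A}_n^j}^\T X_{\mathcal{A}_n^j}$ and the off-support cross terms $n^{-1} X_{(\mathcal{A}_n^j)^c}^\T X_{\mathcal{A}_n^j}$ around their population counterparts, driven by (i)--(ii); the sparsity cap in (v) is what keeps the associated union-bound error (over both $j$ and the $O(p\,d_n)$ relevant matrix entries) tolerable, and it is also what ultimately injects the $d_n$ factor into the final rate through the residual-variance analysis of Step~2.
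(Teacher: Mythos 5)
Your proposal follows the same two-step architecture as the paper's proof. Step 1 is essentially identical: exact support recovery $\hat{\mathcal{A}}_n^j = \mathcal{A}_n^j$ for every $j$ via the lasso neighborhood regression, with \eqref{eq:incoh2} serving as the population irrepresentability condition inside a primal--dual witness argument, followed by a union bound over $j$ using $p = O(n^\kappa)$; the paper's supplement carries out exactly this witness construction by hand through a sequence of lemmas (concentration of $S$, KKT conditions, localization of the restricted minimizer in a ball of radius $C\sqrt{d_n}\lambda_{1,n}$, sign consistency, strict dual feasibility off the support). Where you diverge is Step 2. The paper never analyzes the residual quadratic form: it observes that $\Omegatrue_{jj}$ is the $j$th diagonal entry of $\left( \Sigma^0_{\mathcal{B}_n^j \mathcal{B}_n^j} \right)^{-1}$ with $\mathcal{B}_n^j = \mathcal{A}_n^j \cup \{j\}$, takes the corresponding entry of $\left( S_{\hat{\mathcal{B}}_n^j \hat{\mathcal{B}}_n^j} \right)^{-1}$ as the estimate, and bounds the error by the deterministic perturbation inequality $\| (S_{\mathcal{B}})^{-1} - (\Sigma^0_{\mathcal{B}})^{-1} \|_2 \leq \| (S_{\mathcal{B}})^{-1} \|_2 \, \| S_{\mathcal{B}} - \Sigma^0_{\mathcal{B}} \|_2 \, \| (\Sigma^0_{\mathcal{B}})^{-1} \|_2$ combined with $\| S_{\mathcal{B}} - \Sigma^0_{\mathcal{B}} \|_2 \leq d_n \max_{i,k} | S_{ik} - \Sigma^0_{ik} |$; the only probabilistic input is elementwise concentration of $S$, and the $d_n$ factor enters purely through the max-norm-to-operator-norm conversion. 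By the Schur-complement identity your $1/\hat{\sigma}_j^2$ is algebraically the same estimator, so the two routes target the same quantity; your route, if it worked, would in fact give a slightly sharper rate for $|\hat{\sigma}_j^2 - \sigma_j^2|$ than the perturbation bound does.

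The one step I would not accept as written is the claim that $\hat{\sigma}_j^2 = n^{-1} \| (I - P_{\mathcal{A}}) \epsilon^{(j)} \|_2^2$ has conditional mean $\sigma_j^2 (n - |\mathcal{A}_n^j|)/n$ and concentrates by Hanson--Wright. That computation treats the projector as fixed relative to the noise, which requires $\epsilon^{(j)}$ to be \emph{independent} of $X_{\cdot, \mathcal{A}_n^j}$ (so that $\Expect [\, \epsilon^{(j)} (\epsilon^{(j)})^T \mid X_{\cdot, \mathcal{A}_n^j} ] = \sigma_j^2 I$). That holds under Gaussianity, but condition (i) only assumes sub-Gaussian rows, and the $L^2$-projection residual you construct is then only guaranteed to be \emph{uncorrelated} with the remaining coordinates, not independent of them. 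Either restrict this step to the Gaussian case, or replace it with the paper's perturbation argument --- which costs nothing, since the estimators coincide and the argument needs only $\max_{i,k} |S_{ik} - \Sigma^0_{ik}| = O(\sqrt{(\log n)/n})$, itself obtained by applying Hanson--Wright to the sums of $(X_{\ell i} \pm X_{\ell k})^2$ exactly as in the paper's sample-covariance concentration lemma.
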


We note that \eqref{eq:incoh2} is similar but not equivalent to condition (iii) above.

\subsection{Saturation}
\label{sec:sat}

Lastly, we show that the \ours~estimate does not saturate (\ie, when $p \gg n$, the number of variables selected by \ours~can be greater than $np$ out of $p(p-1)/2$ total variables), while the \splice, \spacee, and \conc~estimates can saturate; this is rather limiting for these latter estimators from the points of view of both estimation error as well as interpretability.

To do this, we first introduce some notation that makes the statements of these results, as well as their proofs, more concise.  We use $\vech$ to mean the \textit{half-vectorization} operator, \ie, the concatenation of the lower triangle of its (matrix) argument, excluding diagonal entries.  We use $\card$ to count the number of nonzero entries in its argument.  Also, we say that the columns of a wide matrix $A \in \reals^{k \times \ell}$ (\ie, $\ell > k$) are in \textit{general position} if the affine span of any $m \leq k$ signed columns of $A$, \ie, $s_{i_1} A_{i_1}, \ldots, s_{i_m} A_{i_m}$, where each $s_j, \; j=i_1,\ldots,i_m$ is fixed to either $+1$ or $-1$, does not contain any of the points $\pm A_j, \; j \neq i_1,\ldots i_m$.

Below, Theorem \ref{thm:sat} states our saturation results for \ours~and \conc; Corollary \ref{thm:sat2} then gives the analogous results for \splice~and \spacee.

\begin{theorem}[Saturation results for \ours~and \conc]
\label{thm:sat}
Let
\ifwantbiomtabs
\else
\begin{scriptsize}
\fi
\begin{equation*}
A =
- \mtxstart{cccccccccccccccccccccccc}
X_{2} & X_{3} & X_{4} & \cdots & X_{p-1} & X_{p} &
0 & \multicolumn{16}{c}{\cdots} & 0 \\
X_{1} & 0 & \multicolumn{3}{c}{\cdots} & 0 &
X_{3} & X_{4} & X_{5} & \cdots & X_{p-1} & X_{p} &
0 & \multicolumn{10}{c}{\cdots} & 0 \\
0 & X_{1} & 0 & \multicolumn{2}{c}{\cdots} & 0 & 
X_{2} & 0 & \multicolumn{3}{c}{\cdots} & 0 &
X_{4} & X_{5} & X_{6} & \cdots & X_{p-1} & X_{p} &
0 & \multicolumn{4}{c}{\cdots} & 0 \\
\multicolumn{24}{c}{\vdots} \\
0 & \multicolumn{3}{c}{\cdots} & 0 & X_{1} & 0 & 
\multicolumn{3}{c}{\cdots} & 0 & X_{2} & 0 & \multicolumn{3}{c}{\cdots} & 0 & X_{3} & 0 & 
\multicolumn{3}{c}{\cdots} & 0 & X_{p-1}
\mtxend,
\end{equation*}
\ifwantbiomtabs
\else
\end{scriptsize}
\fi
\ie, $A \in \reals^{np \times p(p-1)/2}$ is a matrix containing the columns of the data matrix $X$ arranged in a particular fashion.  Also, let $\hat{\Omega}^\ourss$ be the \ours~estimate, \ie, the solution of the \ours~optimization problem \eqref{eq:ours}, and let $\hat{\Omega}^\concs$ be a \conc~estimate; so, we have $\vech \hat{\Omega}^\ourss, \vech \hat{\Omega}^\concs \in \reals^{p(p-1)/2}$.  Assume that $p \gg n$.  Then, the \ours~estimate does not saturate, \ie, $\card \vech \hat{\Omega}^\ourss \leq p(p-1)/2$, and there exists a \conc~estimate that saturates, \ie, $\card \vech \hat{\Omega}^\concs \leq np$.  Furthermore, if the columns of the matrix $A$ are in general position, then all \conc~estimates saturate.
\end{theorem}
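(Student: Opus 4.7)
My plan is to reduce the saturation analysis for \conc~to the classical lasso case via an exact reformulation, and then to use the extra $\lambda_2$ term in \ours~to escape that reduction. First, fix any \conc~optimum $\hat{\Omega}^\concs$ and regard its diagonal $\hat{\Omega}^\concs_\D$ as known; the remaining optimization over $\beta = \vech \hat{\Omega}^\concs_\X$ becomes
\[
\minimizewrt{\beta \in \reals^{p(p-1)/2}} \quad (1/2)\bigl\|y - A\beta\bigr\|_2^2 + \lambda_1 \|\beta\|_1,
\]
where $y \in \reals^{np}$ is the vertical stack of $-\hat{\Omega}^\concs_{ii} X_i$ over $i=1,\ldots,p$, and $A$ is precisely the $np \times p(p-1)/2$ matrix displayed in the theorem --- the particular column placements and signs arise from expanding $\sum_i \|\hat{\Omega}^\concs_{ii}X_i + \sum_{j\neq i}\hat{\Omega}^\concs_{ij}X_j\|_2^2$ under the symmetry constraint $\Omega_{ij}=\Omega_{ji}$, so that each off-diagonal parameter contributes to exactly two block rows. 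Thus $\vech \hat{\Omega}^\concs$ solves a genuine lasso problem with design matrix $A$.

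Given this identification, the existence statement $\card \vech \hat{\Omega}^\concs \leq np$ is immediate from the standard lasso saturation result \citep{rosset2004boosting,tibshirani2013}: every lasso problem admits a solution whose active columns are linearly independent, hence of size at most $\rank(A) \leq np$. The stronger conclusion under general position follows from the uniqueness characterization in \citet{tibshirani2013}: when the columns of the design are in general position, the lasso minimizer is unique, and so the unique $\beta$ arising from each \conc~optimum (for its own $y$) has at most $np$ nonzeros, giving $\card \vech \hat{\Omega}^\concs \leq np$ for \emph{every} \conc~estimate.

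For \ours, the squared Frobenius penalty modifies the off-diagonal stationarity conditions to
\[
\bigl(A^\T A + \lambda_2 I\bigr)\vech \hat{\Omega}^\ourss_\X - A^\T y + \lambda_1 s = 0,
\]
for some subgradient $s$ of $\|\cdot\|_1$. The matrix $A^\T A + \lambda_2 I$ is strictly positive definite for any $\lambda_2 > 0$, regardless of $\rank(A)$, so the geometric argument behind lasso saturation --- that on the active set the relevant submatrix of $A$ must be linearly independent, lest the objective could be decreased by moving $\beta$ along a null direction --- no longer applies. Consequently no bound sharper than the ambient dimension $p(p-1)/2$ on $\card \vech \hat{\Omega}^\ourss$ can be derived, and small Gaussian-design examples with $p \gg n$ exhibit \ours~estimates with $\card \vech \hat{\Omega}^\ourss > np$, confirming non-saturation.

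The main obstacle I anticipate is verifying that the combinatorial structure of $A$ --- which columns $X_i$ appear in which block rows, and with what signs --- matches exactly the quadratic cross-terms produced by $\Tr S \Omega^2$ after half-vectorization (in particular, correctly collapsing the $\Omega_{ij}$ and $\Omega_{ji}$ duplicates). A secondary subtlety is that \conc~jointly optimizes over both diagonal and off-diagonal of $\Omega$, so the reformulation must be carried out at each \conc~optimum separately, with the diagonal there defining $y$; one should check that general position of $A$ (which is a property of $X$ alone) implies uniqueness of $\beta$ irrespective of which \conc~diagonal is chosen, and that this transfers back to an $np$ bound on $\vech \hat{\Omega}^\concs$ itself.
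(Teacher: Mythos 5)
Your treatment of \conc~is essentially the paper's own argument: fix the diagonal via partial minimization, recognize the off-diagonal problem as a lasso with design $A$ and response built from $\Omega_{ii}X_i$, invoke the existence of a lasso solution with at most $\min\{np,\, p(p-1)/2\}$ nonzeros \citep[Lemma 14]{tibshirani2013} for the existence claim, and uniqueness under general position \citep[Lemma 3]{tibshirani2013} for the ``all estimates saturate'' claim. The subtlety you flag about the joint optimization over diagonal and off-diagonal is handled in the paper exactly as you propose, by writing the problem as $\min_{\Omega_\D}\, -(1/2)\log\det(\Omega_\D^2) + \tilde J_{\concs}(\Omega_\D)$ with $\tilde J_{\concs}$ the partial infimum, and noting the lasso bound holds for every value of $\Omega_\D$; and general position of $A$ is indeed a property of $X$ alone, so uniqueness transfers regardless of which diagonal the optimizer selects. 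This half is fine.

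The \ours~half has a genuine gap. You argue that because $A^\T A + \lambda_2 I \succ 0$ the \emph{mechanism} behind lasso saturation no longer applies, and then conclude non-saturation by asserting that ``no sharper bound can be derived'' and by appeal to small numerical examples. Showing that one proof technique fails to yield a bound does not show the bound is false, and numerical examples are not part of a proof. The paper closes this gap with the standard elastic-net-to-lasso augmentation: writing
\begin{equation*}
(n/2)\Tr S\Omega^2 + (\lambda_2/2)\|\Omega\|_F^2 \;=\; (1/2)\sum_{i=1}^p \Bigl\| \sum_{j=1}^p \Omega_{ij} \mtxstart{c} X_j \\ \sqrt{\lambda_2}\, e_j \mtxend \Bigr\|_2^2,
\end{equation*}
so that the \ours~off-diagonal subproblem is itself a lasso, but with an augmented design in $\reals^{p(n+p)\times p(p-1)/2}$ that has full column rank. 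Applying the very same Lemma 14 then gives the nonzero bound $\min\{p(n+p),\, p(p-1)/2\} = p(p-1)/2$ when $p \gg n$, which is the ambient dimension and hence no restriction at all --- this is the precise sense in which \ours~does not saturate, and it is a one-line repair of your argument rather than a change of strategy.
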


The analogous results for \splice~and \spacee~follow by using arguments similar to those given in the proof of Theorem \ref{thm:sat}; to make the statement of these results clearer, we first describe the \splice~and \spacee~estimators in more detail.

We can obtain a \splice~estimate by first minimizing the following objective, alternately over the variables $D \in \reals^{p \times p}$ and $B \in \reals^{p \times p}$, where $D$ is a diagonal matrix and the diagonal entries of the matrix $B$ are set to zero, 
\begin{equation}
-(1/2) \log \det D + (1/2) \sum_{i=1}^p \frac{1}{D_{ii}^2} \| X_i - X_{-i} (B_{i \cdot})^T \|_2^2 + \lambda_1 \| B \|_1,
\label{eq:splice}
\end{equation}
where $X_{-i}$ denotes the data matrix $X$ after removing the $i$th column, and $B_{i \cdot}$ here means the $i$th row of $B$ after removing the entry $B_{ii}$; then, for any iteration $i$, we compute the estimate
\begin{equation}
\hat \Omega^{\spls, (i)} = (\hat D^{(i-1)})^{-2} (I - \hat B^{(i)}),
\label{eq:splice2}
\end{equation}
with $\hat \Omega^{\spls, (i)}$ referring to the estimate at the end of the $i$th iteration ($\hat D^{(i-1)}$ and $\hat B^{(i)}$ are interpreted similarly).

Turning to \spacee, we can compute a \spacee~estimate by minimizing the following objective, alternately over the variables $\Omega_\D$ and $\Omega_\X$,
\begin{equation}
-(1/2) \log \det \Omega_\D + (1/2) \sum_{i=1}^p \Omega_{\D, ii} \left\| X_i - \sum_{j \neq i}^p \Omega_{\X, ij} \sqrt{\Omega_{\D, jj} / \Omega_{\D, \; ii}} X_j \right\|_2^2 + \lambda_1 \| \Omega_\X \|_1,
\label{eq:space}
\end{equation}
where $\Omega_{\D, ii}$ refers to the $(i,i)$th entry of $\Omega_\D$ ($\Omega_{\X, ij}$ is interpreted similarly).  As a reminder, $\Omega_\D \in \reals^{p \times p}$ is a matrix of the diagonal entries of $\Omega$, with its off-diagonal entries set to zero; $\Omega_\X \in \reals^{p \times p}$ is a matrix of the off-diagonal entries of $\Omega$, with its diagonal entries set to zero; and we form the \spacee~estimate, for any iteration $i$, as $\hat{\Omega}^{\spcs, (i)} = \hat \Omega_\D^{(i)} + \hat \Omega_\X^{(i)}$.  To be clear, the superscripts involving $i$ here are interpreted just as with \splice~above (also, we note that in the optimization problem \eqref{eq:space}, we have set the ``weights'' for each regression subproblem $i$ to $\Omega_{\D, ii}$, as recommended by \citet{peng2009partial}).

Corollary \ref{thm:sat2} below gives the corresponding results for \splice~and \spacee.

\begin{corollary}[Saturation results for \splice~and \spacee]
\label{thm:sat2}
Let $\hat{\Omega}^{\spls, (i)}$ be a \splice~estimate at the end of iteration $i$, \ie, a solution of the optimization problem \eqref{eq:splice} and Equation \ref{eq:splice2}, and let $\hat{\Omega}^{\spcs, (i)}$ be a \spacee~estimate at the end of iteration $i$, \ie, a solution of the optimization problem \eqref{eq:space}; so, we have $\vech \hat{\Omega}^{\spls, (i)}, \vech \hat{\Omega}^{\spcs, (i)} \in \reals^{p(p-1)/2}$.  Assume that $p \gg n$.  Then, there exist \splice~and \spacee~estimates at the end of iteration $i$ that saturate, \ie, $\card \vech \hat{\Omega}^{\spls, (i)} \leq np$ and $\card \vech \hat{\Omega}^{\spcs, (i)} \leq np$.
\end{corollary}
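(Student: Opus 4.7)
The strategy is to mirror the proof of Theorem \ref{thm:sat} by reducing each method's subproblem to a standard lasso, then invoking the classical lasso saturation result that a lasso with $n$ observations and more than $n$ predictors admits a solution with at most $n$ nonzero coefficients. Since the corollary only claims the existence of \emph{some} estimate that saturates at iteration $i$, it suffices to exhibit one particular minimizer of the relevant update subproblem satisfying the bound.

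First I would treat \splice. Fixing $D = \hat D^{(i-1)}$ in the objective \eqref{eq:splice}, the minimization over $B$ decouples completely across rows, because $B$ is not constrained to be symmetric and the log-determinant term does not involve $B$. For each $i$, the $i$th row update reduces to
\[
\minimizewrt{B_{i \cdot} \in \reals^{p-1}} \; \frac{1}{2 D_{ii}^2} \| X_i - X_{-i} (B_{i \cdot})^\T \|_2^2 + \lambda_1 \| B_{i \cdot} \|_1,
\]
which is an $\ell_1$-penalized regression with $n$ observations and $p-1$ predictors. Because $p \gg n$, the lasso saturation result (cf.\ \citet{rosset2004boosting,tibshirani2013}, as invoked in the proof of Theorem \ref{thm:sat}) produces a minimizer $\hat B^{(i)}_{i\cdot}$ with at most $n$ nonzeros. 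Aggregating over $i=1,\ldots,p$ yields a $\hat B^{(i)}$ with at most $np$ nonzero entries in total, hence at most $np$ in its strict lower triangle. Since $\hat \Omega^{\spls,(i)} = (\hat D^{(i-1)})^{-2} ( I - \hat B^{(i)} )$ only rescales rows via an invertible diagonal matrix, the strict lower triangle of $\hat \Omega^{\spls,(i)}$ has the same support as that of $\hat B^{(i)}$, giving $\card \vech \hat \Omega^{\spls,(i)} \leq np$.

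Next I would handle \spacee, which is slightly subtler because the symmetry constraint $\Omega_{\X,ij} = \Omega_{\X,ji}$ couples the regression subproblems. Fix $\Omega_\D = \hat \Omega_\D^{(i)}$, let $a_j = \sqrt{\Omega_{\D,jj}}$ and define rescaled columns $\tilde X_j = a_j X_j$. Pulling the factor $\Omega_{\D,ii} = a_i^2$ into the squared norm in \eqref{eq:space} gives, up to additive constants in $\Omega_\X$,
\[
\frac{1}{2} \sum_{i=1}^p \Bigl\| \tilde X_i - \sum_{j \neq i} \Omega_{\X,ij} \tilde X_j \Bigr\|_2^2 + \lambda_1 \| \Omega_\X \|_1.
\]
Stacking the $p$ residuals into a vector $y \in \reals^{np}$ and using symmetry to parametrize $\Omega_\X$ by $\vech \Omega_\X \in \reals^{p(p-1)/2}$, the quadratic term can be rewritten as $\tfrac{1}{2} \| y - \tilde A \vech \Omega_\X \|_2^2$, where $\tilde A \in \reals^{np \times p(p-1)/2}$ has exactly the block structure of the matrix $A$ in Theorem \ref{thm:sat}, with each $X_j$ replaced by $\tilde X_j$. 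This is a standard lasso with $np$ observations and $p(p-1)/2 > np$ variables (for $p \gg n$), so the same saturation result yields a minimizer $\hat \Omega_\X^{(i)}$ with $\card \vech \hat \Omega_\X^{(i)} \leq np$. Combining with the diagonal estimate $\hat \Omega_\D^{(i)}$ (which does not contribute to $\vech$) gives $\card \vech \hat \Omega^{\spcs,(i)} \leq np$.

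The only nontrivial step is verifying that the \spacee~update reduces cleanly to a lasso of the form used in Theorem \ref{thm:sat}; the rescaling $\tilde X_j = \sqrt{\Omega_{\D,jj}} X_j$ absorbs the asymmetric weights $\sqrt{\Omega_{\D,jj}/\Omega_{\D,ii}}$ and makes the reduction transparent. The \splice~case is easier because no such change of variables is needed, and the lasso bound applies row-by-row. Once these reductions are in place, the saturation conclusions follow immediately by the same lasso argument invoked in the proof of Theorem \ref{thm:sat} for \conc.
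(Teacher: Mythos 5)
Your proposal is correct and follows essentially the same route as the paper: fix the diagonal variables at iteration $i-1$, recast the \splice~and \spacee~updates as lasso problems (with the rescaling $\tilde X_j = \sqrt{\hat\Omega_{jj}^{(i-1)}}\,X_j$ for \spacee, exactly as in the paper's construction of $A_{\spcs}^{(i-1)}$), and invoke the lasso saturation result of \citet[Lemma 14]{tibshirani2013}, noting that the map back to $\hat\Omega^{\spls,(i)}$ preserves the off-diagonal support. The only cosmetic difference is that you treat the \splice~update row by row while the paper stacks it into a single block-diagonal lasso with design matrix in $\reals^{np \times p(p-1)}$; these are equivalent.
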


\section{Discussion}
\label{sec:disc}

We introduced \ours, a new, more flexible pseudolikelihood-based estimator of the inverse covariance matrix; \ours~can be viewed as generalizing several Gaussian likelihood-based, as well as pseudolikelihood-based, estimators in ways that give \ours~a number of statistical and computational advantages.  We showed, through a number of experiments, that \ours~significantly outperforms the closely related \conc~estimator, in terms of both estimation error and variable selection accuracy, and that \ours~deals effectively with non-Gaussian data, making it well-suited for use in downstream applications.  We also showed, under regularity conditions, that \ours~is consistent at a rate of $\sqrt{(\log p)/n}$; our proof assumes the existence of accurate estimates of the diagonal entries of the underlying inverse covariance matrix (like \spacee~and \conc), and also provides a two-step method to obtain these estimates, even when $p > n$ (going beyond \spacee~and \conc).  Unlike several other pseudolikelihood-based methods, we also showed that the \ours~estimate does not saturate (\ie, when $p \gg n$, the number of variables selected by \ours~can be greater than $np$ out of $p(p-1)/2$ total variables), which is useful from both the perspectives of estimation error and interpretability.  We presented a fast algorithm for computing the \ours~estimate; we showed that this algorithm converges at a geometric (``linear'') rate to the unique, global solution of the \ours~optimization problem, and that it is faster than \conc.  Finally, we presented sequential strong screening rules that make computing the \ours~estimate over a range of tuning parameters much more tractable.  As a whole, we believe these statistical and computational properties represent a useful step forward in the design of pseudolikelihood-based estimators of the inverse covariance matrix.

\bibliographystyle{plainnat}
\bibliography{refs}

\begin{thebibliography}{26}
\providecommand{\natexlab}[1]{#1}
\providecommand{\url}[1]{\texttt{#1}}
\expandafter\ifx\csname urlstyle\endcsname\relax
  \providecommand{\doi}[1]{doi: #1}\else
  \providecommand{\doi}{doi: \begingroup \urlstyle{rm}\Url}\fi

\bibitem[Ali et~al.(2016)Ali, Kolter, and Tibshirani]{alnur}
Alnur Ali, J.~Zico Kolter, and Ryan~J. Tibshirani.
\newblock The multiple quantile graphical model.
\newblock In \emph{Advances in Neural Information Processing Systems}, 2016.
\newblock To appear. Available at \url{http://arxiv.org/pdf/1607.00515.pdf}.

\bibitem[Banerjee et~al.(2008)Banerjee, El~Ghaoui, and d'Aspremont]{BEA:08}
Onureena Banerjee, Laurent El~Ghaoui, and Alexandre d'Aspremont.
\newblock Model selection through sparse maximum likelihood estimation for
  multivariate {Gaussian} or binary data.
\newblock \emph{Journal of Machine Learning Research}, 9:\penalty0 485--516,
  2008.

\bibitem[Besag(1974)]{besag1974}
Julian Besag.
\newblock Spatial interaction and the statistical analysis of lattice systems.
\newblock \emph{Journal of the Royal Statistical Society: Series B},
  36\penalty0 (2):\penalty0 192--236, 1974.

\bibitem[Friedman et~al.(2008)Friedman, Hastie, and
  Tibshirani]{friedman2008sparse}
Jerome Friedman, Trevor Hastie, and Robert Tibshirani.
\newblock Sparse inverse covariance estimation with the graphical lasso.
\newblock \emph{Biostatistics}, 9\penalty0 (3):\penalty0 432--441, 2008.

\bibitem[Friedman et~al.(2010)Friedman, Hastie, and
  Tibshirani]{friedman2010applications}
Jerome Friedman, Trevor Hastie, and Rob Tibshirani.
\newblock Applications of the lasso and grouped lasso to the estimation of
  sparse graphical models.
\newblock Available at \url{http://statweb.stanford.edu/~tibs/ftp/ggraph.pdf},
  2010.

\bibitem[Hong et~al.(2014)Hong, Pinson, and Fan]{hong}
Tao Hong, Pierre Pinson, and Shu Fan.
\newblock Global energy forecasting competition 2012.
\newblock \emph{International Journal of Forecasting}, 30:\penalty0 357--363,
  2014.

\bibitem[Khare et~al.(2015)Khare, Oh, and Rajaratnam]{khare2014convex}
Kshitij Khare, Sang-Yun Oh, and Bala Rajaratnam.
\newblock A convex pseudolikelihood framework for high dimensional partial
  correlation estimation with convergence guarantees.
\newblock \emph{Journal of the Royal Statistical Society: Series B},
  77\penalty0 (4):\penalty0 803--825, 2015.

\bibitem[Lauritzen(1996)]{lauritzen1996graphical}
Steffen Lauritzen.
\newblock \emph{Graphical models}.
\newblock Oxford University Press, 1996.

\bibitem[Ledoit and Wolf(2003)]{ledoit2003honey}
Olivier Ledoit and Michael Wolf.
\newblock Honey, {I} shrunk the sample covariance matrix.
\newblock \emph{UPF Economics and Business Working Paper}, \penalty0 (691),
  2003.

\bibitem[Markowitz(1952)]{markowitz1952portfolio}
Harry Markowitz.
\newblock Portfolio selection.
\newblock \emph{Journal of Finance}, 7\penalty0 (1):\penalty0 77--91, 1952.

\bibitem[Mazumder and Hastie(2012)]{mazumder2012exact}
Rahul Mazumder and Trevor Hastie.
\newblock Exact covariance thresholding into connected components for
  large-scale graphical lasso.
\newblock \emph{Journal of Machine Learning Research}, 13:\penalty0 781--794,
  2012.

\bibitem[Meinshausen and B{\"u}hlmann(2006)]{meinshausen2006high}
Nicolai Meinshausen and Peter B{\"u}hlmann.
\newblock High-dimensional graphs and variable selection with the lasso.
\newblock \emph{The Annals of Statistics}, 34\penalty0 (3):\penalty0
  1436--1462, 2006.

\bibitem[Oh et~al.(2014)Oh, Dalal, Khare, and Rajaratnam]{NIPS2014_5576}
Sang-Yun Oh, Onkar Dalal, Kshitij Khare, and Bala Rajaratnam.
\newblock Optimization methods for sparse pseudolikelihood graphical model
  selection.
\newblock In \emph{Advances in Neural Information Processing Systems}, pages
  667--675. 2014.

\bibitem[Parikh and Boyd(2013)]{parikh2013proximal}
Neal Parikh and Stephen Boyd.
\newblock Proximal algorithms.
\newblock \emph{Foundations and Trends in Optimization}, 1\penalty0
  (3):\penalty0 123--231, 2013.

\bibitem[Peng et~al.(2009)Peng, Wang, Zhou, and Zhu]{peng2009partial}
Jie Peng, Pei Wang, Nengfeng Zhou, and Ji~Zhu.
\newblock Partial correlation estimation by joint sparse regression models.
\newblock \emph{Journal of the American Statistical Association}, 104\penalty0
  (486):\penalty0 735--746, 2009.

\bibitem[Rocha et~al.(2008)Rocha, Zhao, and Yu]{rocha2008path}
Guilherme Rocha, Peng Zhao, and Bin Yu.
\newblock A path following algorithm for sparse pseudo-likelihood inverse
  covariance estimation {(SPLICE)}.
\newblock Available at
  \url{https://www.stat.berkeley.edu/~binyu/ps/rocha.pseudo.pdf}, 2008.

\bibitem[Rosset et~al.(2004)Rosset, Zhu, and Hastie]{rosset2004boosting}
Saharon Rosset, Ji~Zhu, and Trevor Hastie.
\newblock Boosting as a regularized path to a maximum margin classifier.
\newblock \emph{Journal of Machine Learning Research}, 5\penalty0
  (Aug):\penalty0 941--973, 2004.

\bibitem[Rothman et~al.(2008)Rothman, Bickel, Levina, and Zhu]{rothman2008}
Adam Rothman, Peter Bickel, Elizaveta Levina, and Ji~Zhu.
\newblock Sparse permutation invariant covariance estimation.
\newblock \emph{Electronic Journal of Statistics}, 2:\penalty0 494--515, 2008.

\bibitem[Rudelson and Vershynin(2013)]{rudelson2013hanson}
Mark Rudelson and Roman Vershynin.
\newblock Hanson-wright inequality and sub-{Gaussian} concentration.
\newblock \emph{Electronic Communications in Probability}, 18\penalty0
  (82):\penalty0 1--9, 2013.

\bibitem[Schmidt et~al.(2011)Schmidt, Roux, and Bach]{schmidt2011convergence}
Mark Schmidt, Nicolas Roux, and Francis Bach.
\newblock Convergence rates of inexact proximal gradient methods for convex
  optimization.
\newblock In \emph{Advances in Neural Information Processing Systems}, pages
  1458--1466, 2011.

\bibitem[Tibshirani et~al.(2012)Tibshirani, Bien, Friedman, Hastie, Simon,
  Taylor, and Tibshirani]{tibshirani2012strong}
Robert Tibshirani, Jacob Bien, Jerome Friedman, Trevor Hastie, Noah Simon,
  Jonathan Taylor, and Ryan Tibshirani.
\newblock Strong rules for discarding predictors in lasso-type problems.
\newblock \emph{Journal of the Royal Statistical Society: Series B},
  74\penalty0 (2):\penalty0 245--266, 2012.

\bibitem[Tibshirani(2013)]{tibshirani2013}
Ryan~J. Tibshirani.
\newblock The lasso problem and uniqueness.
\newblock \emph{Electronic Journal of Statistics}, 7:\penalty0 1456--1490,
  2013.

\bibitem[Won et~al.(2013)Won, Lim, Kim, and Rajaratnam]{won2013condition}
Joong Won, Johan Lim, Seung Kim, and Bala Rajaratnam.
\newblock Condition number-regularized covariance estimation.
\newblock \emph{Journal of the Royal Statistical Society: Series B},
  75\penalty0 (3):\penalty0 427--450, 2013.

\bibitem[Wytock and Kolter(2013)]{wytock2013sparse}
Matt Wytock and J.~Zico Kolter.
\newblock Sparse {Gaussian} conditional random fields: Algorithms, theory, and
  application to energy forecasting.
\newblock In \emph{Proceedings of the 30th International Conference on Machine
  Learning}, pages 1265--1273, 2013.

\bibitem[Yuan and Lin(2007)]{yuan2007}
Ming Yuan and Yi~Lin.
\newblock Model selection and estimation in the {Gaussian} graphical model.
\newblock \emph{Biometrika}, 94\penalty0 (1):\penalty0 19--35, 2007.

\bibitem[Zou and Hastie(2005)]{Zou05regularizationand}
Hui Zou and Trevor Hastie.
\newblock Regularization and variable selection via the elastic net.
\newblock \emph{Journal of the Royal Statistical Society: Series B},
  67:\penalty0 301--320, 2005.

\end{thebibliography}

\title{Supplement to ``Generalized Pseudolikelihood Methods for Inverse Covariance Estimation''}
\maketitle

\renewcommand\thesection{S.\arabic{section}}  
\renewcommand\theequation{S.\arabic{equation}}
\renewcommand\thetable{S.\arabic{table}}
\renewcommand\thefigure{S.\arabic{figure}}

\section{Proof of Lemma \ref{thm:screen}}
\label{sec:screen}
\begin{proof}
By considering the gradient of the smooth term in the objective of the \ours~optimization problem \eqref{eq:ours}, given by \eqref{eq:grad}, in a componentwise fashion, we can express the optimality conditions for \eqref{eq:ours}, evaluated at the off-diagonal entries of $\hat{\Omega}_{ij}^\ourss(\lambda_1^{(k)}, \lambda_2^{(\ell)})$, as
\begin{equation}
\begin{array}{llll}
\left| c_{ij}(\lambda_1^{(k)}, \lambda_2^{(\ell)}) \right| & \leq \lambda_1^{(k)} & \quad \textrm{if} \quad \hat{\Omega}_{ij}^\ourss(\lambda_1^{(k)}, \lambda_2^{(\ell)}) & = 0 \\
c_{ij}(\lambda_1^{(k)}, \lambda_2^{(\ell)}) & = \lambda_1^{(k)} & \quad \textrm{if} \quad \hat{\Omega}_{ij}^\ourss(\lambda_1^{(k)}, \lambda_2^{(\ell)}) & > 0 \\
c_{ij}(\lambda_1^{(k)}, \lambda_2^{(\ell)}) & = -\lambda_1^{(k)} & \quad \textrm{if} \quad \hat{\Omega}_{ij}^\ourss(\lambda_1^{(k)}, \lambda_2^{(\ell)}) & < 0.
\end{array}
\label{eq:kkt}
\end{equation}

But, we have that
\begin{align*}
\left| c_{ij}(\lambda_1^{(k)}, \lambda_2^{(\ell)}) \right| & \leq \left| c_{ij}(\lambda_1^{(k)}, \lambda_2^{(\ell)}) - c_{ij}(\lambda_1^{(k-1)}, \lambda_2^{(\ell)}) \right| + \left| c_{ij}(\lambda_1^{(k-1)}, \lambda_2^{(\ell)}) \right| \\
& < | \lambda_1^{(k)} - \lambda_1^{(k-1)} | + 2 \lambda_1^{(k)} - \lambda_1^{(k-1)} \\
& = \lambda_1^{(k)},
\end{align*}
with the first inequality following by the triangle inequality, and the second by the assumptions that the $c_{ij}$ are nonexpansive and nonincreasing, as well as the further assumption that $\left| c_{ij}(\lambda_1^{(k-1)}, \lambda_2^{(\ell)}) \right| < 2 \lambda_1^{(k)} - \lambda_1^{(k-1)}$; by checking \eqref{eq:kkt}, this implies that $\hat{\Omega}_{ij}^\ourss(\lambda_1^{(k)}, \lambda_2^{(\ell)}) = 0$ is a solution.
\end{proof}

\section{Additional numerical results for the minimum variance portfolio optimization example}
In addition to the numerical results given in the main paper, we consider here the realized risk and Sharpe ratios for various estimators and estimation horizons, after accounting for borrowing costs (at an 8\% annual percentage rate) and transaction costs (at 0.5\% of the principal); Tables \ref{tab:risk:adj} and \ref{tab:sharpe:adj} present the results, and we generally see the same trends as in the main paper.  \ours~achieves the lowest risk when the estimation horizon is small, and otherwise is within 5\% of the lowest risk.  \ours~also achieves the highest Sharpe ratio four (out of eight) times, and is otherwise within 5\% of the highest Sharpe ratio.

\ifwantbiomtabs
\input{tab_risk_adj}

\input{tab_sharpe_adj}
\else
\begin{table}[h]
	\centering
	\begin{scriptsize}
	\begin{tabular}{lcccccc}
		\\
		$H$ & \ours & \conc & \samplee & \glasso & \condreg & \ledoit \\[5pt]
35 & \textbf{14.98} & 16.75 & 33.70 & 16.29 & 17.61 & 15.32\\
40 & \textbf{14.79} & 16.73 & 26.46 & 16.27 & 17.54 & 15.21\\
45 & \textbf{14.98} & 16.75 & 23.13 & 16.28 & 17.43 & 15.21\\
50 & \textbf{14.77} & 16.73 & 20.87 & 16.10 & 17.39 & 15.15\\
75 & \textbf{14.82} & 16.76 & 17.25 & 15.38 & 16.98 & 14.91\\
150 & 14.81 & 16.80 & 15.18 & 14.74 & 16.17 & \textbf{14.45}\\
225 & 14.85 & 16.81 & 14.77 & 14.64 & 15.85 & \textbf{14.29}\\
300 & 14.96 & 16.86 & 14.74 & 14.73 & 15.88 & \textbf{14.29}\\

	\end{tabular}
	\end{scriptsize}
	\caption{Realized risk for various estimators and estimation horizons $H$, after accounting for borrowing and transaction costs, in the portfolio optimization example.  Lower is better; best in \textbf{bold}.  \ours~is best 5/8 times.}
	\label{tab:risk:adj}
\end{table}

\begin{table}[h]
	\centering
	\begin{scriptsize}
	\begin{tabular}{lcccccc}
		\\
		$H$ & \ours & \conc & \samplee & \glasso & \condreg & \ledoit \\[5pt]
35 & \textbf{0.47} & 0.42 & 0.35 & 0.42 & 0.42 & 0.44\\
40 & 0.46 & 0.42 & \textbf{0.50} & 0.43 & 0.43 & 0.41\\
45 & 0.40 & \textbf{0.41} & 0.30 & 0.40 & 0.41 & 0.36\\
50 & \textbf{0.43} & 0.42 & 0.23 & 0.40 & 0.41 & 0.38\\
75 & \textbf{0.41} & 0.40 & 0.36 & 0.34 & 0.40 & 0.33\\
150 & 0.42 & 0.42 & 0.27 & 0.33 & \textbf{0.43} & 0.36\\
225 & 0.46 & 0.45 & 0.33 & 0.33 & \textbf{0.48} & 0.38\\
300 & \textbf{0.49} & 0.45 & 0.32 & 0.32 & 0.44 & 0.37\\

	\end{tabular}
	\end{scriptsize}
	\caption{Sharpe ratios for various estimators and estimation horizons $H$, after accounting for borrowing and transaction costs, in the portfolio optimization example.  Higher is better; best in \textbf{bold}.  \ours~is best 4/8 times.}
	\label{tab:sharpe:adj}
\end{table}
\fi

Qualitatively, we find that, although \ours~does provide sparse estimates, these estimates are usually somewhat denser than those provided by \conc~(as expected); Figure \ref{fig:heatmap} plots these estimates (from a randomly chosen investment horizon and trading period).  Thus, owing to its (comparatively) denser and better estimates, \ours~can reduce risk by hedging, for example, by taking a short position in a stock whose returns are negatively correlated with another stock that it also takes a long position in.  To this end, we consider the \textit{size of the short side} of a portfolio $x \in \reals^p$, which is defined as the ratio of the magnitude of all the short positions in the portfolio to the magnitude of the portfolio, expressed as a percentage, \ie,
\[
100 \times \left( \sum_{i=1}^p \min \{ x_i, 0 \} \right) / \left( \sum_{i=1}^p | x_i | \right).
\]
Table \ref{tab:short} presents the size of the short side, averaged over all trading periods, for various estimators and estimation horizons, and we indeed see that the size of \ours's short side is larger than \conc's, \glasso's, and \condreg's.

\begin{figure}
\includegraphics[width=0.48\textwidth]{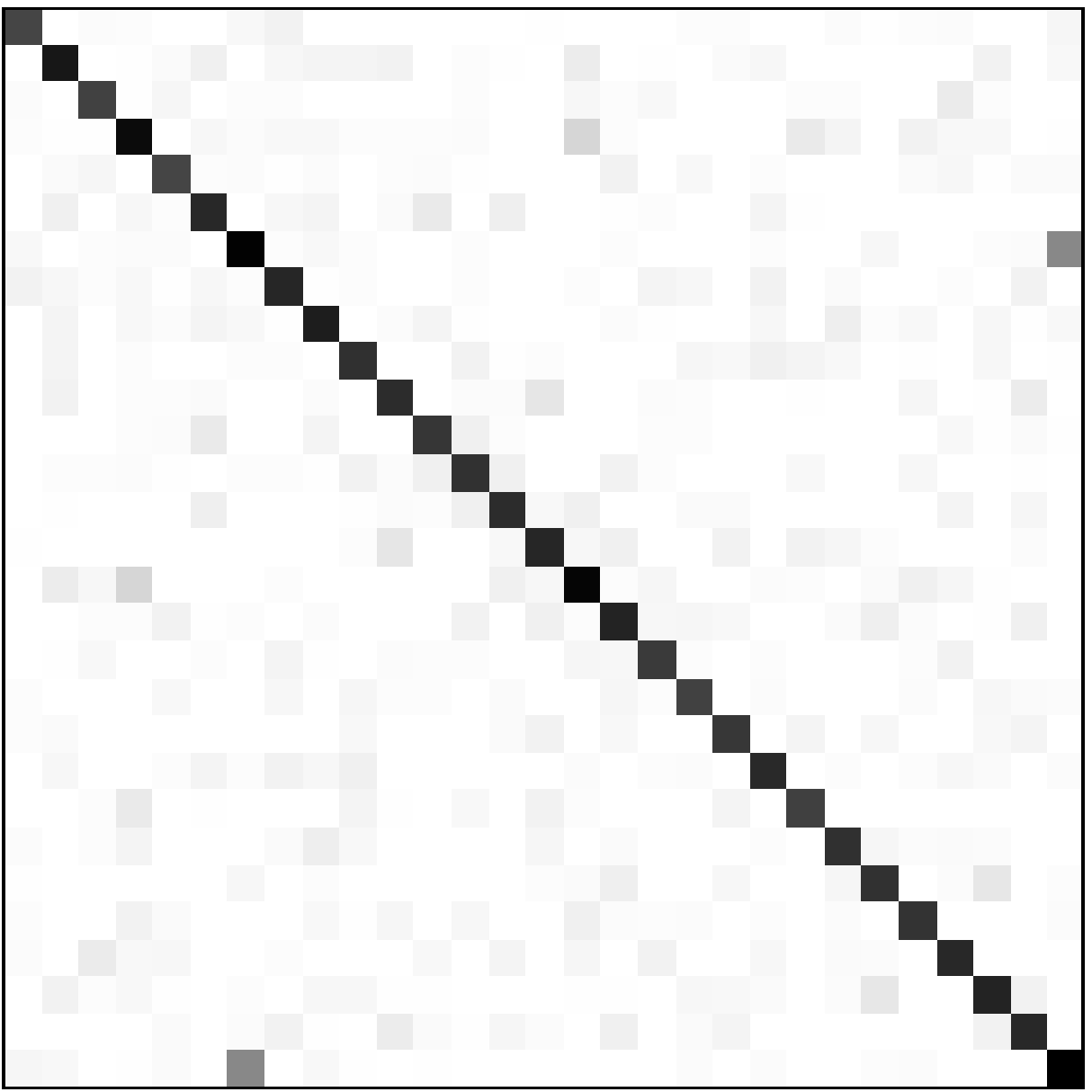}
\includegraphics[width=0.48\textwidth]{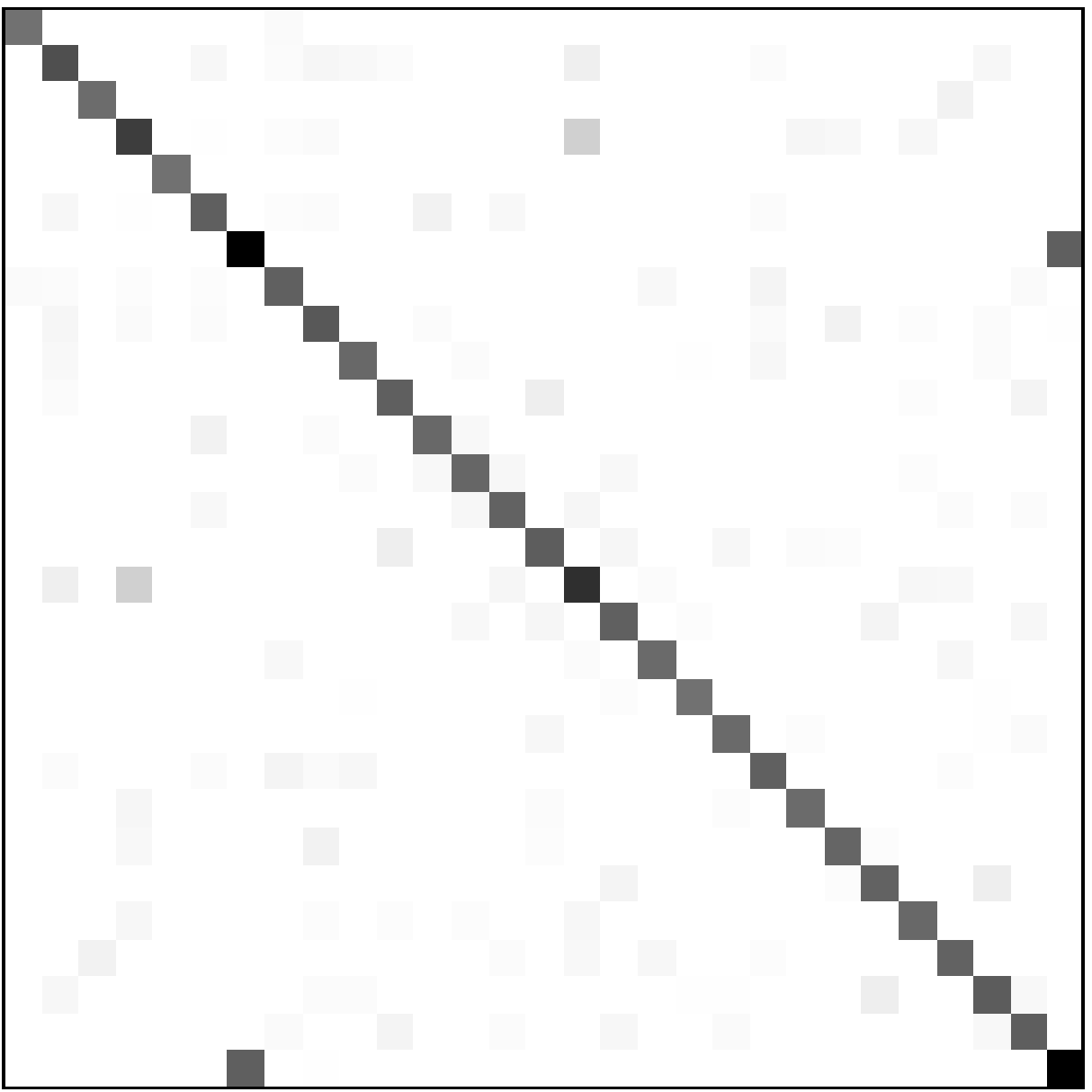}
\caption{Estimates provided by \ours~(left) and \conc~(right); darker means larger in magnitude.}
\label{fig:heatmap}
\end{figure}

\ifwantbiomtabs
\input{tab_short}
\else
\begin{table}[h]
	\centering
	\begin{scriptsize}
	\begin{tabular}{lcccccc}
		\\
		$H$ & \ours & \conc & \samplee & \glasso & \condreg & \ledoit \\[5pt]
35 & 6.91 & 0.06 & 41.13 & 0.63 & 1.77 & 20.50\\
40 & 6.80 & 0.06 & 38.64 & 0.67 & 1.91 & 20.45\\
45 & 6.64 & 0.05 & 36.89 & 0.83 & 2.21 & 20.31\\
50 & 6.60 & 0.04 & 35.46 & 1.36 & 2.43 & 20.33\\
75 & 5.93 & 0.04 & 30.89 & 8.60 & 4.11 & 20.13\\
150 & 5.74 & 0.02 & 25.65 & 23.34 & 7.58 & 19.60\\
225 & 5.59 & 0.01 & 23.68 & 23.35 & 9.34 & 19.26\\
300 & 5.22 & 0.00 & 22.45 & 22.43 & 9.41 & 18.85\\

	\end{tabular}
	\end{scriptsize}
	\caption{Average size of the short side for various estimators and estimation horizons $H$ in the portfolio optimization example.}
	\label{tab:short}
\end{table}
\fi

\section{Proof of Lemma \ref{thm:conv}}
We prove this result by first establishing, in the following lemma, that the gradient of the smooth term in the objective in the \ours~optimization problem \eqref{eq:ours}, $\nabla g$, is Lipschitz continuous.

\begin{lemma}
\label{lem:lip}
Suppose $\Omega^{(0)}, \ldots, \Omega^{(k)}$ is a sequence of \ours~iterates with nonincreasing objective value.  Let $\Omega$ be any of the iterates here.  Also, let $L = 1/\ell^2 + \| S \|_2 + \lambda_2$, with $\| \cdot \|_2$ denoting the $\ell_2$ operator norm (maximum singular value), and $\ell$ being a constant that uniformly lower bounds $\Omega_{ii}$, $i=1,\ldots,p$.  Then we get that $\nabla^2 g(\Omega) \preceq L I_{p^2 \times p^2}$.
\end{lemma}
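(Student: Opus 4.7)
The plan is to prove the Hessian bound by (1) first establishing a uniform positive lower bound $\ell$ on the diagonal magnitudes $|\Omega^{(k)}_{ii}|$ across the entire iterate sequence, and (2) then bounding each of the three terms in the Hessian formula \eqref{eq:hess} separately in the Loewner order.

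For step (1), I would argue from the coercivity of $f$. Since the iterates have nonincreasing objective value, $f(\Omega) \le f(\Omega^{(0)})$ for every iterate $\Omega$ in the sequence. The squared Frobenius penalty gives $(\lambda_2/2)\Omega_{ii}^2 \le f(\Omega)$ (all the other terms being nonnegative once one observes that $\tr S\Omega^2 = \|S^{1/2}\Omega\|_F^2 \ge 0$ when $\Omega$ is symmetric, or using the original sum-of-squares form otherwise), so there is a uniform upper bound $M$ on $|\Omega_{ii}|$. Combining this with the fact that the scalar maps $u \mapsto -\tfrac12\log u + (\lambda_2/2)u$ are bounded below on $(0,\infty)$, one can isolate a single index $j$ and obtain a bound of the form $-\tfrac12\log(\Omega_{jj}^2) \le f(\Omega^{(0)}) + C$ for a constant $C$ depending only on $\lambda_2$, $p$, and $f(\Omega^{(0)})$. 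This yields the desired uniform lower bound $\ell > 0$ on $|\Omega_{ii}|$ across all iterates.

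For step (2), I would plug \eqref{eq:hess} into the claim and bound each term. The first term $\sum_{i=1}^p (1/\Omega_{ii}^2)(e_ie_i^\T \otimes e_ie_i^\T)$ is a $p^2 \times p^2$ diagonal PSD matrix whose nonzero diagonal entries are $1/\Omega_{ii}^2 \le 1/\ell^2$, so it is $\preceq (1/\ell^2)\, I_{p^2}$. For the Kronecker-sum term, the standard identity that the eigenvalues of $A \otimes I + I \otimes A$ for symmetric $A$ are $\lambda_i(A) + \lambda_j(A)$ gives $\|S \otimes I + I \otimes S\|_2 = 2\|S\|_2$, so this term is $\preceq \|S\|_2\, I_{p^2}$ (absorbing the $n/2$ factor in accordance with the statement). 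The last term is exactly $\lambda_2 I_{p^2}$. Summing yields $\nabla^2 g(\Omega) \preceq L\, I_{p^2}$ with $L = 1/\ell^2 + \|S\|_2 + \lambda_2$, as claimed.

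The main obstacle is step (1): without the uniform lower bound $\ell$, the first term in the Hessian is unbounded. The key trick is to exploit both the $-\log$ singularity (pushing diagonals away from $0$) and the Frobenius penalty (pushing them away from $\infty$) simultaneously; the nonincreasing-objective hypothesis is precisely what keeps the iterates in the compact sublevel set where these two forces coexist. Everything else reduces to linear-algebraic identities about Kronecker products.
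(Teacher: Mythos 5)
Your proposal is correct and follows essentially the same route as the paper: the paper likewise drops the nonnegative $\ell_1$ and trace terms to get $-\sum_i \log|\Omega_{ii}| + (\lambda_2/2)\sum_i \Omega_{ii}^2 \le J_\ourss(\Omega^{(0)})$ along the iterate sequence, concludes that the diagonal entries are uniformly bounded away from $0$ and $\infty$, and then bounds the eigenvalues of \eqref{eq:hess} term by term exactly as you do. Your write-up is in fact slightly more explicit than the paper's (e.g., in isolating a single index for the log lower bound and in flagging the $n/2$ factor on the Kronecker-sum term, which the paper's statement and proof also gloss over).
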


\begin{proof}
Let $J_\ourss$ be the objective in the \ours~optimization problem \eqref{eq:ours}.  Then we have that
\[
-\sum_{i=1}^p \log \Omega_{ii} + (\lambda_2/2) \sum_{i=1}^p \Omega^2_{ii} \leq J_\ourss( \Omega^{(0)} ),
\]
since the $\ell_1$ term in the objective in \eqref{eq:ours} is nonnegative, and the trace term can be expressed as a nonnegative quadratic form.  The lefthand side here approaches $\infty$ as either $\Omega_{ii} \rightarrow \infty$ or $\Omega_{ii} \rightarrow 0$, \ie, $\Omega_{ii}$ must be uniformly bounded away from $\infty$ and 0 by some $u$ and $\ell$, respectively, for $i=1,\ldots,p$, owing to the righthand side of the expression.  Thus, we can upper bound the eigenvalues of \eqref{eq:hess} with
\[
1/\ell^2 + \| S \|_2^2 + \lambda_2,
\]
as claimed.
\end{proof}

Obtaining linear convergence is now immediate.  As $g$ is smooth, the conclusion in Lemma \ref{lem:lip} is equivalent to $\nabla^2 g(\Omega) \preceq L I_{p^2 \times p^2} \iff \| \nabla g(\Omega) - \nabla g(\tilde{\Omega}) \|_F \leq L \| \Omega - \tilde{\Omega} \|_F $, where $\tilde{\Omega} \in \symm_{++}^p$, and $L = 1/\ell^2 + \| S \|_2 + \lambda_2$.  Now, since $g$ is also $\lambda_2$-strongly convex, the claim follows by \citet[Proposition 3]{schmidt2011convergence}.  $\; \square$

\section{Proof of Theorem \ref{thm:sat}}
\begin{proof}
We proceed by first showing that there exists a \conc~estimate that saturates; then we show that the \ours~estimate does not saturate.

A \conc~estimate is defined as a solution to the following (convex) optimization problem:
\optprobstart
\begin{array}{ll}
\minimizewrt{\Omega \in {\bf R}^{p \times p}} & - (1/2) \log \det(\Omega_\D^2) + (n/2) \Tr S \Omega^2 + \lambda_1 \| \Omega_\X \|_1,
\end{array}
\label{eq:conc}
\optprobend
where, as a reminder, $\Omega_\D \in \reals^{p \times p}$ is a matrix of the diagonal entries of $\Omega$, with its off-diagonal entries set to zero; $S \in \reals^{p \times p}$ is the sample covariance matrix, \ie, $S = (1/n) X^T X$, and $X \in \reals^{n \times p}$ is a data matrix; $\Omega_\X \in \reals^{p \times p}$ is a matrix of the off-diagonal entries of $\Omega$, with its diagonal entries set to zero; $\lambda_1$ is a tuning parameter; and $\| \cdot \|_1$ is the elementwise $\ell_1$ norm.

Letting
\begin{equation}
\tilde{J}_\concs(\Omega_\D) = \inf_{\Omega_\X} \; (1/2) \sum_{i=1}^p \left\| \sum_{j=1}^p \Omega_{ij} X_{j} \right\|_2^2 + \lambda_1 \| \Omega_\X \|_1,  \label{eq:qcontilde}
\end{equation}
we see that the optimization problem \eqref{eq:conc} above is equivalent to
\optprobstartnn
\begin{array}{ll}
\minimizewrt{\Omega_\D} & -(1/2) \log \det(\Omega_\D^2) + \tilde{J}_\concs(\Omega_\D).
\end{array}
\optprobendnn

Next, define
\begin{equation*}
b =
\mtxstart{c}
\Omega_{11} X_{1} \\
\Omega_{22} X_{2} \\
\Omega_{33} X_{3} \\
\vdots \\
\Omega_{pp} X_{p}
\mtxend, \quad
\omega = 
\left[
\begin{array}{c}
\Omega_{12} \\
\Omega_{13} \\
\vdots \\
\Omega_{1p} \\
\Omega_{23} \\
\Omega_{24} \\
\vdots \\
\Omega_{2p} \\
\Omega_{34} \\
\Omega_{35} \\
\vdots \\
\Omega_{3p} \\
\vdots \\
\Omega_{p-1,p}
\end{array}
\right],
\end{equation*}
\ie, $b \in \reals^{np}$ and $\omega = \vech \Omega \in \reals^{p(p-1)/2}$.

Then we can express \eqref{eq:qcontilde} as
\begin{equation}
\inf_{\omega} \; (1/2) \| b - A \omega \|_2^2 + \lambda_1 \| \omega \|_1, \label{eq:lasso}
\end{equation}
which is evidently a lasso problem with variable $\omega$.

Then, by \citet[Lemma 14]{tibshirani2013}, for any $b$, $A$, and $\lambda_1 > 0$, there exists a solution $\hat{\omega}(\Omega_\D)$ of \eqref{eq:lasso} (note that we have written here the solution $\hat{\omega}$ as a function of $\Omega_\D$ to emphasize the dependence on $\Omega_\D$) that will have at most $\min \{ np, p(p-1)/2 \}$ nonzero entries for any value of $\Omega_\D$; thus, when $p \gg n$, $\card \hat{\omega}(\Omega_\D) \leq np$, as claimed.  The final claim in the statement of the result follows by invoking \citet[Lemma 3]{tibshirani2013}.

Now, turning to the \ours~optimization problem \eqref{eq:ours}, we have that the trace plus the squared Frobenius norm penalty in the objective in \eqref{eq:ours} can be expressed as
\begin{align}
(n/2) \Tr S \Omega^2 + (\lambda_2 / 2) \sum_{i,j=1}^p \Omega_{ij}^2 & = (1/2) \sum_{i=1}^p \Omega_{i}^T X^T X \Omega_{i} + (\lambda_2 / 2) \sum_{i=1}^p \Omega_{i}^T \Omega_{i} \notag \\
& = (1/2) \sum_{i=1}^p \Omega_{i}^T \left( X^T X + \lambda_2 I \right) \Omega_{i} \notag \\
& = (1/2) \sum_{i=1}^p \left\| \sum_{j=1}^p \Omega_{ij} \mtxstart{c} X_{j} \\ \sqrt{\lambda_2} e_j \mtxend \right\|_2^2, \notag \\
\end{align}
where, as a reminder, $e_i$ is the $i$th standard basis vector in $\reals^p$.

Thus, following a similar argument as above, we can express \eqref{eq:ours} as a lasso problem with variable $\omega \in \reals^{p(p-1)/2}$, $A \in \reals^{ p(n+p) \times p(p-1)/2 }$, and $b \in \reals^{p(n+p)}$; however, in this case, the solution $\hat{\omega}(\Omega_\D)$ can have $p(p-1)/2$ nonzeros, as claimed.
\end{proof}

\section{Proof of Corollary \ref{thm:sat2}}
We prove these results by following a strategy similar to the one we used in the proof of Theorem \ref{thm:sat}.  Note that, at the end of some iteration $i-1$, we can consider the variables $D$ (for \splice) and $\Omega_\D$ (for \spacee) fixed, and then optimize over $B$ (for \splice) and $\Omega_\X$ (for \spacee).  Accordingly, we let (for \splice)
\begin{equation*}
b_{\spls}^{(i-1)} =
\left[
\begin{array}{c}
(1 / \hat D_{11}^{(i-1)}) X_{1} \\
(1 / \hat D_{22}^{(i-1)}) X_{2} \\
(1 / \hat D_{33}^{(i-1)}) X_{3} \\
\vdots \\
(1 / \hat D_{pp}^{(i-1)}) X_{p}
\end{array}
\right], \quad
\omega_{\spls} =
\left[
\begin{array}{c}
(B_{1 \cdot})^T \\
(B_{2 \cdot})^T \\
(B_{3 \cdot})^T \\
\vdots \\
(B_{p \cdot})^T
\end{array}
\right],
\end{equation*}
\begin{equation*}
A_{\spls}^{(i-1)} =
\mtxstart{cccccc}
(1 / \hat D_{11}^{(i-1)}) X_{-1} & 0 & 0 & 0 & \cdots & 0 \\
0 & (1 / \hat D_{22}^{(i-1)}) X_{-2} & 0 & 0 & \cdots & 0 \\
0 & 0 & (1 / \hat D_{33}^{(i-1)}) X_{-3} & 0 & \cdots & 0 \\
\multicolumn{6}{c}{\vdots} \\
0 & 0 & 0 & 0 & \cdots & 
(1 / \hat D_{pp}^{(i-1)}) X_{-p}
\mtxend,
\end{equation*}
\ie, $b_{\spls}^{(i-1)} \in \reals^{np}$, $\omega_{\spls} \in \reals^{p(p-1)}$, and $A_{\spls}^{(i-1)} \in \reals^{np \times p(p-1)}$.  We also let (for \spacee)
\begin{equation*}
b_{\spcs}^{(i-1)} =
\left[
\begin{array}{c}
\sqrt{\hat \Omega_{11}^{(i-1)}} X_{1} \\
\sqrt{\hat \Omega_{22}^{(i-1)}} X_{2} \\
\sqrt{\hat \Omega_{33}^{(i-1)}} X_{3} \\
\vdots \\
\sqrt{\hat \Omega_{pp}^{(i-1)}} X_{p}
\end{array}
\right], \quad
\omega_{\spcs} =
\left[
\begin{array}{c}
\Omega_{12} \\
\Omega_{13} \\
\vdots \\
\Omega_{1p} \\
\Omega_{23} \\
\Omega_{24} \\
\vdots \\
\Omega_{2p} \\
\Omega_{34} \\
\Omega_{35} \\
\vdots \\
\Omega_{3p} \\
\vdots \\
\Omega_{p-1,p}
\end{array}
\right],
\end{equation*}
\ifwantbiomtabs
\else
\begin{tiny}
\fi
\begin{equation*}
A_{\spcs}^{(i-1)} =
\left[
\begin{array}{cccccccccccccccccccccccc}
\tilde{X}{2} & \tilde{X}{3} & \tilde{X}{4} & \cdots & \tilde{X}{p-1} & \tilde{X}{p} &
0 & \multicolumn{16}{c}{\cdots} & 0 \\
\tilde{X}{1} & 0 & \multicolumn{3}{c}{\cdots} & 0 &
\tilde{X}{3} & \tilde{X}{4} & \tilde{X}{5} & \cdots & \tilde{X}{p-1} & \tilde{X}{p} &
0 & \multicolumn{10}{c}{\cdots} & 0 \\
0 & \tilde{X}{1} & 0 & \multicolumn{2}{c}{\cdots} & 0 & 
\tilde{X}{2} & 0 & \multicolumn{3}{c}{\cdots} & 0 &
\tilde{X}{4} & \tilde{X}{5} & \tilde{X}{6} & \cdots & \tilde{X}{p-1} & \tilde{X}{p} &
0 & \multicolumn{4}{c}{\cdots} & 0 \\
\multicolumn{24}{c}{\vdots} \\
0 & \multicolumn{3}{c}{\cdots} & 0 & \tilde{X}{1} & 0 & 
\multicolumn{3}{c}{\cdots} & 0 & \tilde{X}{2} & 0 & \multicolumn{3}{c}{\cdots} & 0 & \tilde{X}{3} & 0 & 
\multicolumn{3}{c}{\cdots} & 0 & \tilde{X}{p-1}
\end{array}
\right],
\end{equation*}
\ifwantbiomtabs
\else
\end{tiny}
\fi
where we write $\tilde{X}_j = \sqrt{\hat \Omega_{jj}^{(i-1)}} X_j$; so, $b_{\spcs}^{(i-1)} \in \reals^{np}$, $\omega_{\spcs} \in \reals^{p(p-1)/2}$, and $A_{\spcs}^{(i-1)} \in \reals^{np \times p(p-1)/2}$.  Applying \citet[Lemma 14]{tibshirani2013} as before, and noting that applying \eqref{eq:splice2} does not affect the sparsity pattern of $\hat{B}^{(i)}$ for \splice, gives the required results.

\section{Proof of Theorem \ref{thm:cons}}
\begin{proof}
Define $w_i = \hat{\Omega}_{ii}^2, \; i=1,\ldots,p$, where, as a reminder, the $\hat{\Omega}_{ii}$ are estimates of the diagonal entries of $\Omegatrue$ that are assumed in condition (iv) (see the statement of Theorem \ref{thm:cons}), and consider the change of variables for the off-diagonal entries of $\Omega$
\begin{equation*}
\omega_{ij} = - \theta_{ij} ( \hat{\Omega}_{ii} \hat{\Omega}_{jj} )^{1/2}, \quad i,j=1,\ldots,p, \; i \neq j, \label{eq:change}
\end{equation*}
where $\theta \in \reals^{p(p-1)}$ and again $\omega = \vect \Omega$; then we can express the trace term in the objective in the \ours~optimization problem \eqref{eq:ours} as 
\begin{align}
n \Tr S \Omega^2 & = \sum_{i=1}^p ( w_i / \hat{\Omega}_{ii}^2 ) \Omega_i^T X^T X \Omega_i \notag \\
& = \sum_{i=1}^p ( w_i / \hat{\Omega}_{ii}^2 ) \left\| \sum_{j=1}^p \omega_{ij} X_j \right\|_2^2 \notag \\
& = \sum_{i=1}^p w_i \left\| (1/\hat{\Omega}_{ii}) \left( \hat{\Omega}_{ii} X_i + \sum_{j \neq i}^p \omega_{ij} X_j \right) \right\|_2^2 \notag \\
& = \sum_{i=1}^p w_i \left\| X_i + \sum_{j \neq i}^p ( \omega_{ij} / \hat{\Omega}_{ii} ) X_j \right\|_2^2 \notag \\
& = \sum_{i=1}^p w_i \left\| X_i - \sum_{j \neq i}^p \theta_{ij} \left( \hat{\Omega}_{jj} / \hat{\Omega}_{ii} \right)^{1/2} X_j \right\|_2^2. \label{eq:space2}
\end{align}

Equation \ref{eq:space2} is equal to the objective of the \spacee~optimization problem (\cf~\citet[Equation 10]{peng2009partial} and/or the trace term in \citet[Equation 12]{khare2014convex}), up to constants and for fixed diagonal entries; thus, the $\log \det$ term (which is only a function of diagonal entries) plus the trace term in the objective in \eqref{eq:ours} are also equivalent to the corresponding terms in the \spacee's objective.  This implies that properties A1--A4 and B0--B3 in the supplement for \cite{peng2009partial} also apply to the $\log \det$ plus trace terms in the objective in \eqref{eq:ours}.

Now, let $L(\theta)$ denote the $\log \det$ plus trace terms in the objective in \eqref{eq:ours} (with variable off-diagonal entries $\theta \in {\bf R}^{p(p-1)}$ and fixed diagonal entries $\hat{\omega}_\D$), and let $B_{c_1}(\theta_\X^0, c_1 q_n^{1/2} \lambda_{1,n})$ be a ball of radius $c_1 q_n^{1/2} \lambda_{1,n}$, for a constant $c_1 > 0$, with center $\theta_\X^0$, \ie, $B_{c_1} = \{ \theta : \| \theta - \theta^0_\X \|_2 \leq c_1 q_n^{1/2} \lambda_{1,n} \}$, where $\theta_\X^0$ is the application of the same (strictly monotone) transformation in \eqref{eq:change} to the underlying off-diagonal entries $\omega_\X^0$.

First, we show that the unique, global solution (owing to the strong convexity of \eqref{eq:restr}) of the following ``restricted'' optimization problem lies in $B_{c_1}$ with probability tending to one as $n \to \infty$:
\begin{equation}
\begin{array}{ll}
\minimizewrt{ \theta : \theta_{\mathcal{A}_n^c} = 0 } & L(\theta) + \lambda_{1,n} \sum_{i \neq j}^p \left| ( \hat{\Omega}_{ii} \hat{\Omega}_{jj} )^{1/2} \theta_{ij} \right| + (\lambda_{2,n} / 2) \sum_{i \neq j}^p \hat{\Omega}_{ii} \hat{\Omega}_{jj} \theta_{ij}^2.
\end{array}
\label{eq:restr}
\end{equation}

Let $\alpha_n = q_n^{1/2} \lambda_{1,n}$, and let $u \in \reals^{p(p-1)}$ with $u_{\mathcal{A}_n^c} = 0$ and $\| u \|_2 = c$, for a constant $c > 0$.  Fix $\theta \in B_{c_1}$ to be equal to $\theta_\X^0 + \alpha_n u$.  Then we have that
\begin{align}
& \lambda_{1,n} \left( \sum_{i \neq j}^p \left| ( \hat{\Omega}_{ii} \hat{\Omega}_{jj} )^{1/2} \theta_{\X, ij}^0 \right| - \sum_{i \neq j}^p \left| ( \hat{\Omega}_{ii} \hat{\Omega}_{jj} )^{1/2} \theta_{ij} \right| \right) \notag \\
& \quad \leq \lambda_{1,n} \sum_{i \neq j}^p \left| ( \hat{\Omega}_{ii} \hat{\Omega}_{jj} )^{1/2} ( \theta_{\X, ij}^0 - \theta_{ij} ) \right| \notag \\
& \quad = \lambda_{1,n} \alpha_n \sum_{i \neq j}^p \left| ( \hat{\Omega}_{ii} \hat{\Omega}_{jj} )^{1/2} u_{ij} \right| \notag \\
& \quad = O( \lambda_{1,n} \alpha_n q_n^{1/2} \| u \|_2 ) \notag \\
& \quad = O(\alpha_n^2), \label{eq:onenormub}
\end{align}
with probability at least $1 - O(n^{-\beta})$, as the diagonal estimates $\hat{\Omega}_{ii}$ are uniformly bounded with high probability; the second line here follows by the triangle inequality, the third by the choice of $\theta$, the fourth by the Cauchy-Schwarz inequality and the definition of $u$, and the fifth by the definition $\alpha_n = q_n^{1/2} \lambda_{1,n}$.

We also have that
\begin{align}
& (\lambda_{2,n} / 2) \left( \sum_{i \neq j}^p \hat{\Omega}_{ii} \hat{\Omega}_{jj} (\theta_{\X, ij}^0)^2 - \sum_{i \neq j}^p \hat{\Omega}_{ii} \hat{\Omega}_{jj} \theta_{ij}^2 \right) \label{eq:twonormub:5} \\
& \quad = (\lambda_{2,n} / 2) \left( \sum_{i \neq j}^p \hat{\Omega}_{ii} \hat{\Omega}_{jj} (\theta_{\X, ij}^0)^2 - \sum_{i \neq j}^p \hat{\Omega}_{ii} \hat{\Omega}_{jj} (\theta_{\X, ij}^0 + \alpha_n u_{ij})^2 \right) \notag \\
& \quad = - \lambda_{2,n} \alpha_n \sum_{i \neq j}^p \hat{\Omega}_{ii} \hat{\Omega}_{jj} \theta_{\X, ij}^0 u_{ij} - (\lambda_{2,n} / 2) \alpha_n^2 \sum_{i \neq j}^p \hat{\Omega}_{ii} \hat{\Omega}_{jj} u_{ij}^2. \label{eq:twonormub:2}
\end{align}

We get for the first term in \eqref{eq:twonormub:2} that
\begin{align}
- \lambda_{2,n} \alpha_n \sum_{i \neq j}^p \hat{\Omega}_{ii} \hat{\Omega}_{jj} \theta_{\X, ij}^0 u_{ij} & \leq O( \lambda_{2,n} \alpha_n q_n^{1/2} ) \| u \|_2 \notag \\
& = o(\alpha_n^2) \| u \|_2, \label{eq:twonormub:3}
\end{align}
with probability at least $1 - O(n^{-\beta})$; the first line here follows by the Cauchy-Schwarz inequality, and the second by the assumption that $\lambda_{2,n} = o(\lambda_{1,n})$.

Similarly, we get for the second term in \eqref{eq:twonormub:2}
\begin{align}
- (\lambda_{2,n} / 2) \alpha_n^2 \sum_{i \neq j}^p \hat{\Omega}_{ii} \hat{\Omega}_{jj} u_{ij}^2 & \leq o(\alpha_n^2) \| u \|_2^2, \label{eq:twonormub:4}
\end{align}
with probability at least $1 - O(n^{-\beta})$.

Putting \eqref{eq:twonormub:3} and \eqref{eq:twonormub:4} together, we get for \eqref{eq:twonormub:5} that
\begin{align}
& (\lambda_{2,n} / 2) \left( \sum_{i \neq j}^p \hat{\Omega}_{ii} \hat{\Omega}_{jj} (\theta_{\X, ij}^0)^2 - \sum_{i \neq j}^p \hat{\Omega}_{ii} \hat{\Omega}_{jj} \theta_{ij}^2 \right) \leq o(\alpha_n^2) \left( \| u \|_2 + \| u \|_2^2 \right) \label{eq:twonormub}
\end{align}
with probability at least $1 - O(n^{-\beta})$.

Next, let $J_\ourss(\theta)$ equal the objective in \eqref{eq:ours} (with fixed diagonal entries $\hat{\omega}_\D$); combining \eqref{eq:onenormub} and \eqref{eq:twonormub}, we get
\begin{align*}
J_\ourss(\theta) - J_\ourss(\theta^0_\X) & \geq L(\theta) - L(\theta_\X^0) \\
& \quad - \lambda_{1,n} \left( \sum_{i \neq j}^p \left| ( \hat{\Omega}_{ii} \hat{\Omega}_{jj} )^{1/2} \theta_{\X, ij}^0 \right| - \sum_{i \neq j}^p \left| ( \hat{\Omega}_{ii} \hat{\Omega}_{jj} )^{1/2} \theta_{ij} \right| \right) \\
& \quad - (\lambda_{2,n} / 2) \left( \sum_{i \neq j}^p \hat{\Omega}_{ii} \hat{\Omega}_{jj} (\theta_{\X, ij}^0)^2 - \sum_{i \neq j}^p \hat{\Omega}_{ii} \hat{\Omega}_{jj} \theta_{ij}^2 \right) \\
& \geq L(\theta) - L(\theta_\X^0) - O(\alpha_n^2) - o(\alpha_n^2) \\
& = L(\theta) - L(\theta_\X^0) - O(\alpha_n^2).
\end{align*}

By the same arguments in the proof of Lemma S-3 in the supplement for \citet{peng2009partial}, it follows that the (unique, global) solution to the restricted problem \eqref{eq:restr} lies in $B_{c_1}$, with probability at least $1 - O(n^{-\beta})$; this also implies (by a simple contradiction argument) that the event $\sign \hat{\theta}_{\mathcal{A}_n} = \sign \theta^0_{\mathcal{A}_n}$ occurs with high probability.

By construction, the solution $\hat{\theta}$ to the restricted optimization problem \eqref{eq:restr} satisfies the support ``block'' of the optimality conditions for the unrestricted optimization problem \eqref{eq:ours}.  Next, we show that $\hat{\theta}$ satisfies the non-support (the complement of the support) block of the optimality conditions for the unrestricted optimization problem \eqref{eq:ours}.

The optimality conditions for the unrestricted optimization problem \eqref{eq:ours} are
\begin{equation}
\begin{array}{lll}
L_{ij}^{'}(\theta) + \lambda_{2,n} \hat{\Omega}_{ii} \hat{\Omega}_{jj} \theta_{ij} & = - \lambda_{1,n} ( \hat{\Omega}_{ii} \hat{\Omega}_{jj} )^{1/2} \sign \theta_{ij} & \quad \textrm{if } \theta_{ij} \neq 0 \\
| L_{ij}^{'}(\theta) + \lambda_{2,n} \hat{\Omega}_{ii} \hat{\Omega}_{jj} \theta_{ij} | & \leq \lambda_{1,n} ( \hat{\Omega}_{ii} \hat{\Omega}_{jj} )^{1/2} & \quad \textrm{if } \theta_{ij} = 0,
\end{array}
\end{equation}
where $L^{'}_{ij}(\theta) = \partial L(\theta) / \partial \theta_{ij}$; this establishes the analog of Lemma S-1 in the supplement for \citet{peng2009partial}, and also implies that Lemma S-2 there applies to the unrestricted optimization problem \eqref{eq:ours} here.  We wish to show that (with high probability)
\[
\max_{(i,j) \in \mathcal{A}_n^c} | L_{ij}^{'}(\hat{\theta}) + \lambda_{2,n} \hat{\Omega}_{ii} \hat{\Omega}_{jj} \hat{\theta}_{ij} | < \lambda_{1,n} ( \hat{\Omega}_{ii} \hat{\Omega}_{jj} )^{1/2}.
\]

We begin by taking an exact (since $L^{'}_{\mathcal{A}_n}$ is affine) first-order Taylor expansion of $L^{'}_{\mathcal{A}_n}(\hat{\theta})$ around $\theta^0$, \ie,
\begin{align}
L^{'}_{\mathcal{A}_n}(\hat{\theta}) & = L^{'}_{\mathcal{A}_n}(\theta^0) + L^{''}_{\mathcal{A}_n \mathcal{A}_n} \underbrace{( \hat{\theta} - \theta^0 )}_{v}  \notag \\
& = L^{'}_{\mathcal{A}_n}(\theta^0) + \underbrace{( L^{''}_{\mathcal{A}_n \mathcal{A}_n}(\theta^0) - \bar{L}^{''}_{\mathcal{A}_n \mathcal{A}_n}(\theta^0) )}_{ \Delta_{\mathcal{A}_n \mathcal{A}_n} } v + \bar{L}^{''}_{\mathcal{A}_n \mathcal{A}_n}(\theta^0) v. \label{eq:cons:grad}
\end{align}

However, we also have that, with probability at least $1 - O(n^{-\beta})$,
\begin{equation}
L^{'}_{\mathcal{A}_n}(\hat{\theta}) = - \lambda_{1,n} ( \hat{\Omega}_{ii} \hat{\Omega}_{jj} )^{1/2} \sign \theta_{\mathcal{A}_n}^0 \label{eq:cons:grad2}.
\end{equation}

Equating \eqref{eq:cons:grad} and \eqref{eq:cons:grad2} and rearranging, we get
\begin{equation}
v = - \left( \bar{L}^{''}_{\mathcal{A}_n \mathcal{A}_n}(\theta^0) \right)^{-1} \left( \lambda_{1,n} ( \hat{\Omega}_{ii} \hat{\Omega}_{jj} )^{1/2} \sign \theta_{\mathcal{A}_n}^0 + L^{'}_{\mathcal{A}_n}(\theta^0) + \Delta_{\mathcal{A}_n \mathcal{A}_n} v \right). \label{eq:v}
\end{equation}

Repeating a similar analysis for any $(i,j) \in \mathcal{A}_n^c$, we get
\begin{equation}
L^{'}_{ij}(\hat{\theta}) = L^{'}_{ij}(\theta^0) + \Delta_{ij, \mathcal{A}_n}(\theta^0) v + \bar{L}^{''}_{ij, \mathcal{A}_n}(\theta^0) v. \label{eq:cons:grad3}
\end{equation}

Now, plugging \eqref{eq:v} into the third term on the righthand side of \eqref{eq:cons:grad3}, we get
\begin{align*}
L^{'}_{ij}(\hat{\theta}) & = L^{'}_{ij}(\theta^0) + \Delta_{ij, \mathcal{A}_n}(\theta^0) v \\
& \quad - \lambda_{1,n} ( \hat{\Omega}_{ii} \hat{\Omega}_{jj} )^{1/2} \bar{L}^{''}_{ij, \mathcal{A}_n}(\theta^0) \left( \bar{L}^{''}_{\mathcal{A}_n \mathcal{A}_n}(\theta^0) \right)^{-1} \sign \theta_{\mathcal{A}_n}^0 \\
& \quad - \bar{L}^{''}_{ij, \mathcal{A}_n}(\theta^0) \left( \bar{L}^{''}_{\mathcal{A}_n \mathcal{A}_n}(\theta^0) \right)^{-1} L^{'}_{\mathcal{A}_n}(\theta^0) \\
& \quad - \bar{L}^{''}_{ij, \mathcal{A}_n}(\theta^0) \left( \bar{L}^{''}_{\mathcal{A}_n \mathcal{A}_n}(\theta^0) \right)^{-1} \Delta_{\mathcal{A}_n \mathcal{A}_n} v.
\end{align*}

Applying the triangle inequality and rearranging yields
\begin{align*}
| L^{'}_{ij}(\hat{\theta}) | & \leq \left| \lambda_{1,n} ( \hat{\Omega}_{ii} \hat{\Omega}_{jj} )^{1/2} \bar{L}^{''}_{ij, \mathcal{A}_n}(\theta^0) \left( \bar{L}^{''}_{\mathcal{A}_n \mathcal{A}_n}(\theta^0) \right)^{-1} \sign \theta_{\mathcal{A}_n}^0 \right| \\
& \quad + \left| \left( \Delta_{ij, \mathcal{A}_n}(\theta^0) - \bar{L}^{''}_{ij, \mathcal{A}_n}(\theta^0) \left( \bar{L}^{''}_{\mathcal{A}_n \mathcal{A}_n}(\theta^0) \right)^{-1} \Delta_{\mathcal{A}_n \mathcal{A}_n} \right) v \right| \\
& \quad + \left| \bar{L}^{''}_{ij, \mathcal{A}_n}(\theta^0) \left( \bar{L}^{''}_{\mathcal{A}_n \mathcal{A}_n}(\theta^0) \right)^{-1} L^{'}_{\mathcal{A}_n}(\theta^0) \right| \\
& \quad + | L^{'}_{ij}(\theta^0) |.
\end{align*}

The first term here is (strictly) less than $\lambda_{1,n} ( \hat{\Omega}_{ii} \hat{\Omega}_{jj} )^{1/2} / 2$ by condition (iii), and the remaining terms are $o(\lambda_{1,n})$, with probability at least $1 - O(n^{-\beta})$, by the same arguments in the proof of \citet[Theorem 2]{peng2009partial}.

Now, let $R_{ij}^{'}(\theta) = \lambda_{2,n} \hat{\Omega}_{ii} \hat{\Omega}_{jj} \theta_{ij}$; repeating a similar analysis as above, we get
\begin{align*}
R_{ij}^{'}(\hat{\theta}) & = R_{ij}^{'}(\theta^0) + \left( R_{ij, \mathcal{A}_n}^{''}(\theta^0) - \bar{R}_{ij, \mathcal{A}_n}^{''}(\theta^0) \right) v + \bar{R}^{''}_{ij, \mathcal{A}_n}(\theta^0) v \\
& = R_{ij}^{'}(\theta^0) + \bar{R}^{''}_{ij, \mathcal{A}_n}(\theta^0) v \\
& = \lambda_{2,n} \hat{\Omega}_{ii} \hat{\Omega}_{jj} \theta^0_{ij} + \lambda_{2,n} \hat{\Omega}_{ii} \hat{\Omega}_{jj} v_{ij} \\
& \leq o(\lambda_{1,n}) + \lambda_{2,n} \hat{\Omega}_{ii} \hat{\Omega}_{jj} c_1 q_n^{1/2} \lambda_{1,n} \\
& = o(\lambda_{1,n}),
\end{align*}
where the penultimate line follows since $\| v \|_2 = \| \hat{\theta} - \theta^0 \|_2 \leq c_1 q_n^{1/2} \lambda_{1,n} \implies v_{ij} \leq c_1 q_n^{1/2} \lambda_{1,n}$, and the last line since $q_n^{1/2} \lambda_{1,n} \to 0$ by condition (v).

Putting these findings together, we get, with probability at least $1 - O(n^{-\beta})$,
\[
\max_{(i,j) \in \mathcal{A}_n^c} | L_{ij}^{'}(\hat{\theta}) + R^{'}_{ij}(\hat{\theta}) | < \lambda_{1,n} ( \hat{\Omega}_{ii} \hat{\Omega}_{jj} )^{1/2} / 2 + o(\lambda_{1,n}),
\]as required.

Thus, since the (unique, global) solution to the restricted optimization problem \eqref{eq:restr} satisfies the optimality conditions for the unrestricted optimization problem \eqref{eq:ours} (which also admits a unique, global solution), and since the restricted solution lies in $B_{c_1}$, we obtain the required results.
\end{proof}

\section{Proof of Theorem \ref{thm:diags}}
We start by considering the estimation of the $p$th diagonal entry for ease of exposition. As discussed later, the argument below 
(all the way to Equation \eqref{eq39}) can be repeated verbatim for estimation of the $i$th diagonal entry with obvious 
notational changes.

Note that, since $d_n = O(q_n)$, conditions (i), (ii), (v), and (vi) imply that $d_n^{1/2} \lambda_{1,n} \to 0$, $d_n (\log n / n)^{1/2} \to 0$, and $(1 / \lambda_{1,n}) ( (d_n / n) \log n )^{1/2} \to 0$.

Let $(\eta^T, 1) = \Omega_{p \cdot} / \Omega_{pp}$, \ie, $\eta$ is the 
$p$th (off-diagonal) row of $\Omega$ divided by the $p$th diagonal entry. Let $S$ again denote the sample 
covariance matrix.  Consider the function 
$$
J_p(\eta) =  (\eta^T, 1) S (\eta^T, 1)^T + \lambda_{1,n} \sum_{i=1}^{p-1} |\eta_i|, 
$$
where again $\lambda_{1,n}$ is the tuning parameter. This a convex function, and any global minimizer of this function will be sparse 
in $\eta$. This will immediately lead to an estimate of the sparsity in the $p$th row of $\Omega$. The function 
$J_p$ is the same objective function used by \citet{meinshausen2006high} in their 
neighborhood selection procedure (up to a simple transformation of the parameter $\eta$). Note that \citet{meinshausen2006high} provide 
a consistency proof for the sparsity pattern obtained by minimizing $J_p$ under a set of regularity assumptions (for example, 
Gaussianity).\footnote{Note that, by combining the sparsity patterns for all the rows of $\Omega$ using the neighborhood 
selection procedure, one can obtain an estimate for the sparsity pattern in $\Omegatrue$. However, a drawback is that the resulting 
pattern is not necessarily symmetric. On the other hand, our goal in this section is to show consistency of a procedure, which 
uses the sparsity pattern for neighborhood selection solely for estimating the diagonal entries of $\Omegatrue$.}
We provide a proof of sparsity selection consistency for $J_p$ below under a set of related but different assumptions from those 
in \citet{meinshausen2006high} (for example, under a general sub-Gaussian tail setting). 

Let $\eta^0$ denote the true value of the parameter $\eta$. Also, for ease of exposition, we use 
$\eta_p = \eta^0_p = 1$ below, but the vector $\eta$ will always refer to the $(p-1)$-dimensional parameter 
defined above. We now obtain the required result through a sequence of lemmas.

\begin{lemma} \label{lem:consistency:1}
For any $\gamma > 0$, there exists a constant $C_{\gamma} > 0$ such that, with probability at least $1 - 
O(n^{-\gamma})$,
$$
\max_{1 \leq i,j, \leq p} |S_{ij} - \Sigma^0_{ij}| \leq C_{\gamma} \sqrt{\frac{\log n}{n}},
$$
for large enough $n$. 
\end{lemma}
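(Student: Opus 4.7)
The plan is to prove this elementwise concentration bound via a Bernstein-style tail inequality for sub-exponential random variables, followed by a union bound over the $p^2$ entries of $S - \Sigma^0$. First, I would use the sub-Gaussian assumption on the rows of $X$ (condition (i)) to conclude that each coordinate $X_{ki}$ is a mean-zero sub-Gaussian random variable with sub-Gaussian norm uniformly bounded in $i$ and $k$. Since products of sub-Gaussians are sub-exponential, the centered variables $Z_k^{(ij)} := X_{ki} X_{kj} - \Sigma^0_{ij}$ are i.i.d.\ (in $k$) sub-exponential with sub-exponential norm bounded by some constant $K$ that does not depend on $i, j$ (here I would also invoke condition (ii), which uniformly bounds $\Sigma^0_{ij}$).

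Next, I would apply Bernstein's inequality for sums of i.i.d.\ centered sub-exponential random variables to each fixed pair $(i,j)$, giving
\[
\P\!\left( |S_{ij} - \Sigma^0_{ij}| > t \right) \;\leq\; 2 \exp\!\left( - c_1 n \min\!\left( t^2/K^2,\, t/K \right) \right),
\]
for an absolute constant $c_1 > 0$. Setting $t = C \sqrt{(\log n)/n}$, which lies in the Gaussian regime of the Bernstein bound for all large $n$, reduces this to
\[
\P\!\left( |S_{ij} - \Sigma^0_{ij}| > C \sqrt{(\log n)/n} \right) \;\leq\; 2 \exp\!\left( - c_2 C^2 \log n \right) \;=\; 2 n^{-c_2 C^2},
\]
where $c_2 = c_1 / K^2$.

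Finally, I would take a union bound over all $p^2$ index pairs. Under the assumption $p = O(n^\kappa)$ from Theorem \ref{thm:cons}, this yields
\[
\P\!\left( \max_{1 \leq i,j \leq p} |S_{ij} - \Sigma^0_{ij}| > C \sqrt{(\log n)/n} \right) \;\leq\; 2 p^2 n^{-c_2 C^2} \;=\; O\!\left( n^{2\kappa - c_2 C^2} \right).
\]
Choosing $C = C_\gamma$ large enough that $c_2 C_\gamma^2 \geq 2\kappa + \gamma$ gives the stated rate $1 - O(n^{-\gamma})$.

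The argument is entirely routine; there is no real obstacle beyond being careful about the sub-exponential norms of the products $X_{ki} X_{kj}$ (which requires only Cauchy--Schwarz applied to the definition of the sub-Gaussian/sub-exponential Orlicz norms) and verifying that $C_\gamma \sqrt{(\log n)/n}$ indeed falls in the quadratic regime of Bernstein's inequality, which holds for all sufficiently large $n$ since this quantity tends to zero.
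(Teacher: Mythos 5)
Your proposal is correct and follows essentially the same route as the paper: an elementwise sub-exponential tail bound at $t = C\sqrt{(\log n)/n}$ (in the quadratic regime), followed by a union bound over the $p^2$ entries using $p = O(n^\kappa)$ and a choice of $C_\gamma$ large enough to absorb the $2\kappa$. The only cosmetic difference is that the paper establishes the per-entry bound by polarizing $4X_{\ell i}X_{\ell j} = (X_{\ell i}+X_{\ell j})^2 - (X_{\ell i}-X_{\ell j})^2$ and applying the Hanson--Wright inequality to each sum of squares, whereas you invoke directly that products of sub-Gaussians are sub-exponential and apply Bernstein's inequality; these yield the same tail.
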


\begin{proof}
Fix $1 \leq i,j \leq p$. Let $\mu_+ = \Expect_{\Sigma^0_n} \left[ (X_{1i} + X_{1j})^2 \right]$ and 
$\mu_- =  \Expect_{\Sigma^0_n} \left[ (X_{1i} - X_{1j})^2 \right]$. It follows that 
\begin{align}
& \P(|S_{ij} - \Sigma^0_{ij}| > t) \nonumber\\
& \quad = \P \left( \left| \frac{1}{n} \sum_{\ell=1}^n (X_{\ell i} + X_{\ell j})^2 - (X_{\ell i} - X_{\ell j})^2 - (\mu_+ - \mu_-) 
\right| > 4t \right) \nonumber\\
& \quad \leq \P \left( \left| \frac{1}{n} \sum_{\ell=1}^n (X_{\ell i} + X_{\ell j})^2 - \mu_+ \right| > 2t \right) + 
\P \left( \left| \frac{1}{n} \sum_{\ell=1}^n (X_{\ell i} - X_{\ell j})^2 - \mu_- \right| > 2t \right). \label{eqbound1} 
\end{align}

Note that $X_{\ell i} + X_{\ell j}$ are sub-Gaussian random variables (by condition (i)), and their variances 
are uniformly bounded in $i$, $j$, and $n$ (by condition (ii)). For any $c_3 > 0$, it follows, by 
(\ref{eqbound1}) and \citet[Theorem 1.1]{rudelson2013hanson}, that there exist constants $K_1$ and 
$K_2$ independent of $i$, $j$, and $n$ such that 
$$
\P \left( |S_{ij} - \Sigma^0_{ij}| > C \sqrt{\frac{\log n}{n}} \right) \leq K_1 e^{-K_2 n \left( c_3 
\sqrt{\frac{\log n}{n}} \right)^2} = K_1 e^{-K_2 C^2 \log n}, 
$$
for large enough $n$. Using the union bound and the fact that $p = O(n^\kappa)$, for some $\kappa > 0$, 
gives us the required result.
\end{proof}

Next, let 
$$
\tilde{L} (\eta) =  (\eta^T, 1) S (\eta^T, 1)^T,
$$
and let 
\begin{equation} \label{eqbound2}
d_i (\eta) = 2 \sum_{j=1}^p \eta_j S_{ij},
\end{equation}
for $1 \leq i \leq p-1$, denote the elements of the gradient of $\tilde{L}$.  Then we obtain the following results.

\begin{lemma}[Optimality conditions] \label{lem:consistency:2}
$\eta$ minimizes $J_p$ if and only if
\begin{equation}
\begin{array}{lll}
d_i (\eta) & = - \lambda_{1,n} \sign \eta_i & \quad \textrm{if } \eta_i \neq 0, \; 1 \leq i \leq p-1 \\
|d_i (\eta)| & \leq \lambda_{1,n} & \quad \textrm{if } \eta_i = 0, 1 \leq i \leq p-1. \label{eq16}
\end{array}
\end{equation}

Also, if $|d_i (\hat\eta)| < \lambda_{1,n}$, for any minimizer $\hat\eta$, then by 
the continuity of $d_i$ and the convexity of $J_p$, it follows that $\tilde{\eta}_i = 0$, for every 
minimizer $\tilde\eta$ of $J_p$. 
\end{lemma}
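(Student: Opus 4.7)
The plan is to treat the first assertion as a direct application of Fermat's rule for convex subdifferentials. The function $J_p$ is convex as the sum of the convex quadratic $\tilde L(\eta) = (\eta^T, 1) S (\eta^T, 1)^T$ (since $S \succeq 0$) and the convex penalty $\lambda_{1,n} \sum_i |\eta_i|$, so $\eta$ is a global minimizer if and only if $0 \in \partial J_p(\eta)$. Since $\tilde L$ is smooth, I read off from \eqref{eqbound2} that its $i$th partial derivative is exactly $d_i(\eta) = 2\sum_{j=1}^p \eta_j S_{ij}$. The subdifferential of $\lambda_{1,n}|\cdot|$ is $\{\lambda_{1,n} \sign \eta_i\}$ when $\eta_i \ne 0$ and the interval $[-\lambda_{1,n}, \lambda_{1,n}]$ when $\eta_i = 0$. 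Combining these componentwise and rearranging gives precisely the KKT system \eqref{eq16}.

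For the second assertion, I would argue by contradiction. Suppose $\hat\eta$ is a minimizer with $|d_i(\hat\eta)| < \lambda_{1,n}$, and suppose some other minimizer $\tilde\eta$ has $\tilde\eta_i \neq 0$. Since the minimizer set of a convex function is convex, the whole segment $\eta_t = t\tilde\eta + (1-t)\hat\eta$, $t \in [0,1]$, consists of minimizers of $J_p$. Because $d_i$ is a linear functional of $\eta$ (from \eqref{eqbound2}), it is in particular continuous, and
\begin{equation*}
d_i(\eta_t) = t\, d_i(\tilde\eta) + (1-t)\, d_i(\hat\eta)
\end{equation*}
is an affine function of $t$. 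By the strict inequality at $t=0$ and continuity, there is an interval $[0, t_0)$ on which $|d_i(\eta_t)| < \lambda_{1,n}$. Applying the first part of the lemma to each minimizer $\eta_t$ in this interval forces $(\eta_t)_i = 0$ throughout $[0, t_0)$. But $(\eta_t)_i = t\tilde\eta_i + (1-t)\hat\eta_i$ is affine in $t$, and an affine function of one variable that vanishes on an interval vanishes identically, yielding $\tilde\eta_i = \hat\eta_i = 0$ and contradicting $\tilde\eta_i \neq 0$.

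The main subtlety, and the only part that is not purely mechanical, is the second assertion: one must use not just continuity of $d_i$ but also the fact that the minimizer set of the (non-strictly) convex $J_p$ is convex, so that the strict inequality at one minimizer propagates to a neighborhood of minimizers. Once that observation is in place, linearity of $d_i$ together with the optimality conditions of the first part forces the $i$th coordinate of every minimizer to be zero. Everything else is standard subdifferential calculus.
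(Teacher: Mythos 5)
Your proposal is correct and matches the paper's (essentially unwritten) argument: the paper states this lemma without a separate proof, justifying the first part as the standard subgradient optimality conditions for the convex function $J_p$ and the second part precisely ``by the continuity of $d_i$ and the convexity of $J_p$'' --- the same two ingredients you use, with the convexity of the minimizer set and the affinity of $d_i$ along the segment spelled out explicitly. Your write-up simply fills in the details the paper treats as standard, and it does so correctly.
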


\begin{lemma} \label{lem:consistency:2.1}
For every $1 \leq i \leq p-1$,
$$
{\Expect}_{\Sigma^0_n} \left[ d_i (\eta^0) \right] = 0. 
$$
\end{lemma}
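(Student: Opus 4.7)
The plan is to show this by direct computation, exploiting the fact that $\eta^0$ is defined precisely in terms of the $p$th row of $\Omegatrue$, together with the identity $\Sigma^0 \Omegatrue = I$.

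First, I would use linearity of expectation and the fact that the rows of $X$ have mean zero and covariance $\Sigma^0$, which gives $\Expect S_{ij} = \Sigma^0_{ij}$. Plugging this into the definition \eqref{eqbound2} yields
\begin{equation*}
\Expect d_i(\eta^0) \;=\; 2\sum_{j=1}^p \eta^0_j \, \Sigma^0_{ij},
\end{equation*}
where the $p$-vector $\eta^0$ is extended by setting $\eta^0_p = 1$, as stipulated in the text preceding the lemma.

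Second, I would recall the definition $(\eta^{0T}, 1) = \Omegatrue_{p \cdot}/\Omegatrue_{pp}$, so that the extended vector satisfies $(\eta^0, 1)^T = (1/\Omegatrue_{pp}) (\Omegatrue_{p \cdot})^T$. Substituting this gives
\begin{equation*}
\Expect d_i(\eta^0) \;=\; \frac{2}{\Omegatrue_{pp}} \bigl(\Sigma^0 (\Omegatrue_{p \cdot})^T\bigr)_i.
\end{equation*}

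Finally, I would invoke $\Sigma^0 \Omegatrue = I$, which implies that $\Sigma^0 (\Omegatrue_{p \cdot})^T = e_p$, the $p$th standard basis vector in $\reals^p$. Its $i$th entry is zero for $1 \leq i \leq p-1$, and the result follows. There is no real obstacle here — the lemma is essentially a restatement, in sample-covariance gradient form, of the population-level orthogonality $\Expect[X_{1i}(X_{1p} - \sum_{j<p}(-\eta^0_j) X_{1j})] = 0$, which is the standard fact that in the conditional regression of $X_p$ on the remaining coordinates, the residual is uncorrelated with the regressors.
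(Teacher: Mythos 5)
Your proposal is correct and follows essentially the same route as the paper: both reduce $\Expect d_i(\eta^0)$ to $2\sum_{j=1}^p \eta^0_j \Sigma^0_{ij} = \frac{2}{\Omegatrue_{pp}}\sum_{j=1}^p (\Sigma^0)^{-1}_{pj}\Sigma^0_{ij}$ and conclude from $\Sigma^0\Omegatrue = I$ that this vanishes for $i \neq p$. Your closing remark about the residual being uncorrelated with the regressors is a correct interpretation, but it is not needed for, nor different from, the computation itself.
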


\begin{proof}
Let $\Sigma^0_{r}$ denote the submatrix of $\Sigma^0$ formed by using the first $r$ rows and 
columns. It follows, by the definition of $\eta^0$, that, for every $1 \leq i < p$,
$$
{\Expect}_{\Sigma^0} \left[ d_i (\eta^0) \right] = 2 \sum_{j=1}^p \eta^0_j \Sigma^0_{ij} = 
\frac{2}{\Omegatrue_{pp}} \sum_{j=1}^p (\Sigma^0)^{-1}_{pj} \Sigma^0_{ij} = 
0. 
$$
\end{proof}

\begin{lemma} \label{lem:consistency:3}
For any $\gamma > 0$, there exists a constant $C_{1, \gamma} > 0$ such that, with probability at least $1 - 
O(n^{-\gamma})$, 
$$
\max_{1 \leq i \leq p} |d_i (\eta^0)|  \leq C_{1, \gamma} \sqrt{\frac{\log n}{n}}. 
$$
\end{lemma}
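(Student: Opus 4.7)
I would prove this by recognizing $d_i(\eta^0)$ as a centered sum of products of sub-Gaussian random variables, which is sub-exponential, and then applying a Bernstein-type concentration inequality together with a union bound. Concretely, since $(\eta^{0T}, 1) = \Omega^0_{p \cdot}/\Omega^0_{pp}$, let $W_\ell := \sum_{j=1}^p \eta^0_j X_{\ell j}$, so that
\[
d_i(\eta^0) \;=\; \tfrac{2}{n}\sum_{\ell=1}^n X_{\ell i} W_\ell,
\]
and by Lemma \ref{lem:consistency:2.1} we have $\Expect[X_{\ell i} W_\ell] = \sum_j \eta^0_j \Sigma^0_{ij} = 0$. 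Thus $d_i(\eta^0)$ is already centered, and the task reduces to showing concentration of a sample average of i.i.d.\ zero-mean random variables around zero.

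The first key step is to check that $W_\ell$ is sub-Gaussian with a parameter bounded uniformly in $n$ and $p$. By condition (i), each row $X_{\ell \cdot}$ is sub-Gaussian with parameter $c$, so $W_\ell$, being a linear combination, is sub-Gaussian with parameter at most $c\,\|\eta^0\|_2$. Since $\eta^0_j = -\Omega^0_{pj}/\Omega^0_{pp}$ for $j < p$ and $\eta^0_p = 1$, we have $\|\eta^0\|_2^2 \leq 1 + \|\Omega^0_{p \cdot}\|_2^2/(\Omega^0_{pp})^2$, which is uniformly bounded by condition (ii) on the eigenvalues of $\Sigma^0$ (and hence of $\Omega^0$). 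In particular, $\mathrm{Var}(W_\ell) = (\eta^{0T},1)\Sigma^0(\eta^{0T},1)^T = 1/\Omega^0_{pp}$ is bounded both above and away from zero uniformly.

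The second step is concentration: since both $X_{\ell i}$ and $W_\ell$ are sub-Gaussian with uniformly bounded parameters, their product $X_{\ell i} W_\ell$ is sub-exponential with $\psi_1$-norm bounded by an absolute constant $K$ (independent of $n$, $p$, and $i$). A standard Bernstein inequality for sub-exponential sums then yields, for all $t > 0$ sufficiently small,
\[
\P\!\left(|d_i(\eta^0)| > t\right) \;\leq\; 2\exp\!\left(-c' n \min\{t^2/K^2,\; t/K\}\right).
\]
Setting $t = C_{1,\gamma}\sqrt{(\log n)/n}$ places us in the Gaussian tail regime for large $n$, giving a per-coordinate bound of $2n^{-c' C_{1,\gamma}^2/K^2}$.

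The final step is a union bound over $i = 1, \ldots, p-1$. Since $p = O(n^\kappa)$, we get
\[
\P\!\left(\max_{1\leq i \leq p-1} |d_i(\eta^0)| > C_{1,\gamma}\sqrt{(\log n)/n}\right) \;\leq\; 2\,n^{\kappa}\cdot n^{-c' C_{1,\gamma}^2/K^2},
\]
and choosing $C_{1,\gamma}$ large enough so that $c' C_{1,\gamma}^2/K^2 \geq \kappa + \gamma$ gives the claimed $1 - O(n^{-\gamma})$ probability. The main obstacle is not the concentration inequality itself, which is standard, but the careful bookkeeping to confirm that the sub-Gaussian parameter of $W_\ell$ is truly uniform in $n$ and $p$; this hinges on using condition (ii) to control $\|\eta^0\|_2$ via $\Omega^0$. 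Note that the cruder approach of combining Lemma \ref{lem:consistency:1} with a naive $\ell_1$ bound $|d_i(\eta^0)| \leq 2\|\eta^0\|_1 \max_j |S_{ij}-\Sigma^0_{ij}|$ would introduce an extraneous factor of $d_n$ (since $\|\eta^0\|_1 = O(d_n)$), which is why the direct sub-exponential route is needed to match the stated $\sqrt{(\log n)/n}$ rate.
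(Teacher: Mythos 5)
Your proof is correct and follows essentially the same route as the paper's: both identify $d_i(\eta^0)$ as the (already centered, by Lemma \ref{lem:consistency:2.1}) empirical covariance of $X_i$ with $W = \sum_j \eta^0_j X_j$, verify via conditions (i) and (ii) that $W$ is sub-Gaussian with uniformly bounded parameter, and conclude by sub-exponential concentration plus a union bound over $p = O(n^\kappa)$ coordinates. The only cosmetic difference is the concentration tool: the paper reuses the Lemma \ref{lem:consistency:1} machinery (the polarization identity $4XY = (X+Y)^2 - (X-Y)^2$ followed by Hanson--Wright on the squared sub-Gaussians), whereas you invoke the product-of-sub-Gaussians-is-sub-exponential fact and Bernstein directly, which is equivalent.
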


\begin{proof}
It follows, by Lemma \ref{lem:consistency:2}, that 
$$
d_i (\eta^0) = \frac{2}{n} \sum_{\ell=1}^n X_{\ell i} \left( \sum_{j=1}^p \eta^0_j X_{\ell j} \right) 
$$
is the difference between the sample covariance and population covariance of $X_i$ and $\sum_{j=1}^p 
\eta^0_j X_j$. It follows, by condition (ii) and the definition of $\eta^0$, that the variance of 
$\sum_{j=1}^p \eta^0_j X_j$, given by $\left( (\eta^0)^T, 1 \right) \Sigma^0 \left( (\eta^0)^T, 1 \right)^T$, 
is uniformly bounded over $n$. The proof now follows along the same lines as the proof of 
Lemma \ref{lem:consistency:1}.
\end{proof}

Note that $\mathcal{A}_n^p$ is the set of indices 
corresponding to the nonzero entries of $\eta^0_n$. Also note that $|\mathcal{A}_n^p| \leq d_n$.  Next, we
establish properties for the following ``restricted'' minimization problem:
\begin{equation}
\label{eq20}
\begin{array}{ll}
\minimizewrt{ \eta : \eta_j = 0, \; j \notin \mathcal{A}_n^p } & J_p (\eta).
\end{array}
\end{equation}

\begin{lemma} \label{lem:consistency:6.1}
There exists $C > 0$ such that, for any $\gamma > 0$, a global minimum of the restricted minimization problem (\ref{eq20}) exists within the ball $\{\eta: \|\eta - \eta^0\|_2 < 
C \sqrt{d_n} \lambda_{1,n}\}$, with probability at least $1 - O(n^{-\gamma})$ for sufficiently large $n$. 
\end{lemma}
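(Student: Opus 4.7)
}
The plan is to follow the now-standard ``boundary argument'' used in the proofs of Theorem \ref{thm:cons} and in \citet{peng2009partial}, adapted to this restricted lasso-type problem. The restricted problem (\ref{eq20}) is the minimization of a convex function over a linear subspace $V = \{\eta : \eta_j = 0,\; j \notin \mathcal{A}_n^p\}$, so it suffices to show that, with probability at least $1 - O(n^{-\gamma})$, the difference $J_p(\eta^0 + \alpha_n u) - J_p(\eta^0)$ is strictly positive for every unit-norm $u \in V$, where $\alpha_n = C\sqrt{d_n}\,\lambda_{1,n}$ and $C$ is chosen sufficiently large. Convexity of $J_p$ restricted to $V$ then forces any minimizer to lie in the open ball of radius $\alpha_n$.

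Write $\tilde{L}(\eta) = (\eta^T,1) S (\eta^T,1)^T$ and decompose, for $\eta = \eta^0 + \alpha_n u$ with $u \in V$ and $\|u\|_2 = 1$,
\[
J_p(\eta) - J_p(\eta^0) = \alpha_n \nabla\tilde{L}(\eta^0)^T u + \tfrac{\alpha_n^2}{2} u^T \nabla^2 \tilde{L}(\eta^0)\, u + \lambda_{1,n}\bigl(\|\eta^0+\alpha_n u\|_1 - \|\eta^0\|_1\bigr).
\]
The Taylor expansion is exact since $\tilde{L}$ is quadratic. I would then bound the three pieces in turn. For the gradient term, the components $[\nabla\tilde{L}(\eta^0)]_i = d_i(\eta^0)$ satisfy $\max_i |d_i(\eta^0)| \leq C_{1,\gamma}\sqrt{(\log n)/n}$ by Lemma \ref{lem:consistency:3}, so by Cauchy--Schwarz (using that $u$ is supported on $\mathcal{A}_n^p$, a set of size at most $d_n$),
\[
\alpha_n |\nabla\tilde{L}(\eta^0)^T u| \leq \alpha_n \sqrt{d_n}\, C_{1,\gamma}\sqrt{(\log n)/n} = C_{1,\gamma}\, d_n \lambda_{1,n} \sqrt{(\log n)/n} = o(\alpha_n^2),
\]
using the consequence $\sqrt{(\log n)/n}/\lambda_{1,n} \to 0$ of condition (v). For the $\ell_1$ term, the reverse triangle inequality and Cauchy--Schwarz give $\lambda_{1,n}(\|\eta^0+\alpha_n u\|_1 - \|\eta^0\|_1) \geq -\lambda_{1,n}\alpha_n \sqrt{d_n} = -\alpha_n^2/C$.

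The key step, and the main obstacle, is obtaining a uniform lower bound on the quadratic form $u^T \nabla^2\tilde{L}(\eta^0)\, u = 2 u^T S_{\mathcal{A}_n^p,\mathcal{A}_n^p}\, u$. Condition (ii) gives $\lambda_{\min}(\Sigma^0_{\mathcal{A}_n^p,\mathcal{A}_n^p}) \geq \lambda_{\min}(\Sigma^0) \geq c_0 > 0$ uniformly in $n$. Combining Lemma \ref{lem:consistency:1} with Weyl's inequality and the crude bound $\|S_{\mathcal{A}_n^p,\mathcal{A}_n^p} - \Sigma^0_{\mathcal{A}_n^p,\mathcal{A}_n^p}\|_2 \leq d_n \max_{i,j} |S_{ij} - \Sigma^0_{ij}| = O_P(d_n\sqrt{(\log n)/n}) = o_P(1)$ (via the first consequence of conditions (v),(vi) noted at the start of the proof of Theorem \ref{thm:diags}), we get $\lambda_{\min}(S_{\mathcal{A}_n^p,\mathcal{A}_n^p}) \geq c_0/2$ on an event of probability at least $1 - O(n^{-\gamma})$. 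This delivers $\tfrac{\alpha_n^2}{2} u^T \nabla^2\tilde{L}(\eta^0)\, u \geq c_0\,\alpha_n^2$.

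Putting the three bounds together, on the intersection of the high-probability events,
\[
J_p(\eta) - J_p(\eta^0) \geq \alpha_n^2 \bigl( c_0 - o(1) - 1/C \bigr),
\]
which is strictly positive for every $u$ on the unit sphere of $V$ once $C$ is taken larger than $2/c_0$ and $n$ is large enough. By the convexity of $J_p$ restricted to $V$, this boundary inequality forces every global minimizer of (\ref{eq20}) to lie in the open ball $\{\eta : \|\eta - \eta^0\|_2 < C\sqrt{d_n}\,\lambda_{1,n}\}$, establishing the claim.
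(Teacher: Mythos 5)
Your proposal is correct and follows essentially the same route as the paper's proof: a boundary argument on the sphere of radius $C\sqrt{d_n}\,\lambda_{1,n}$ around $\eta^0$, with an exact second-order expansion of the quadratic loss, the gradient term controlled by Lemma \ref{lem:consistency:3} and Cauchy--Schwarz, the $\ell_1$ term by the (reverse) triangle inequality, and the Hessian term split into its population part (bounded below via condition (ii)) plus a sample-deviation part that is $o(1)$ by Lemma \ref{lem:consistency:1} and $d_n\sqrt{(\log n)/n}\to 0$. Your phrasing of the last step as a Weyl-type eigenvalue perturbation of $S_{\mathcal{A}_n^p\mathcal{A}_n^p}$ is just a repackaging of the paper's entrywise bound on the bilinear form, so the two arguments coincide.
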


\begin{proof}
Let $\tilde{\alpha}_n = \sqrt{d_n} \lambda_{1,n}$. Then, for any constant $C > 0$ and any 
$u \in \reals^{p-1}$ satisfying $u_j = 0$ for every $j \notin \mathcal{A}_n^p$ and $\|u\|_2 = C$, we 
get by the triangle inequality that 
\begin{equation} \label{eq21}
\sum_{j=1}^{p-1} |\eta^0_j| - \sum_{j=1}^{p-1} |\eta^0_j + \tilde{\alpha}_n u_j| \leq \tilde{\alpha}_n \sum_{j=1}^{p-1} 
|u_j| \leq C \tilde{\alpha}_n \sqrt{d_n}. 
\end{equation}

Again, let
$$
\tilde{L} (\eta) =  (\eta^T, 1)^T S (\eta^T, 1)^T. 
$$

By (\ref{eq21}) and a second-order Taylor series expansion around $\eta^0$, we get 
\begin{align}
& J_p (\eta^0 + \tilde{\alpha}_n u) - J_p (\eta^0) 
\nonumber\\
& \quad = \tilde{L} (\eta^0 + \tilde{\alpha}_n u) - \tilde{L} (\eta^0)  - \lambda_{1,n} \left( 
\sum_{j=1}^{p-1} |\eta^0_j| - \sum_{j=1}^{p-1} |\eta^0_j + \tilde{\alpha}_n u_j| \right) \nonumber\\
& \quad \geq \tilde{\alpha}_n \sum_{j \in \mathcal{A}_n^p} u_j d_j (\eta^0) + \tilde{\alpha}_n^2 
\sum_{j \in \mathcal{A}_n^p} \sum_{k \in \mathcal{A}_n^p} u_j u_k S_{jk} - C \tilde{\alpha}_n \sqrt{d_n} \lambda_{1,n} \nonumber\\
& \quad \geq \tilde{\alpha}_n \sum_{j \in \mathcal{A}_n^p} u_j d_j (\eta^0) + \tilde{\alpha}_n^2 
\sum_{j \in \mathcal{A}_n^p} \sum_{k \in \mathcal{A}_n^p} u_j u_k (S_{jk} - \Sigma^0_{jk}) + \tilde{\alpha}_n^2 \sum_{j \in 
\mathcal{A}_n^p} \sum_{k \in \mathcal{A}_n^p} u_j u_k \Sigma^0_{ jk} - C \tilde{\alpha}_n^2. \label{eq22}
\end{align}

Note that $\lambda_{1,n} \sqrt{\frac{n}{\log n}} \rightarrow \infty$ and $d_n \sqrt{\frac{\log n}{n}} \rightarrow 0$ as 
$n \rightarrow \infty$, since $(1 / \lambda_{1,n}) ( (d_n / n) \log n )^{1/2} \to 0$ and $d_n^{1/2} \lambda_{1,n} \to 0$. It follows, by the Cauchy-Schwarz inequality, Lemma \ref{lem:consistency:1}, 
and Lemma \ref{lem:consistency:3}, that for any $\gamma > 0$ there exist constants $C_\gamma$ and 
$C_{1, \gamma} > 0$ such that, with probability at least $1 - O(n^{-\gamma})$, 
\begin{equation} \label{eq23}
\tilde{\alpha}_n \sum_{j \in \mathcal{A}_n^p} u_j d_j (\eta^0) \leq C C_{1, \gamma} 
\sqrt{\frac{d_n \log n}{n}} \tilde{\alpha}_n = o(\tilde{\alpha}_n^2) 
\end{equation}
and 
\begin{equation} \label{eq24}
\frac{\tilde{\alpha}_n^2}{2} \left| \sum_{j \in \mathcal{A}_n^p} \sum_{k \in \mathcal{A}_n^p} u_j u_k (S_{jk} - \Sigma^0_{jk}) \right| 
\leq C_\gamma C^2 d_n \sqrt{\frac{\log n}{n}} = o(\tilde{\alpha}_n^2). 
\end{equation}

Also, by condition (ii), it follows that 
\begin{equation} \label{eq25}
\sum_{j \in \mathcal{A}_n^p} \sum_{k \in \mathcal{A}_n^p} u_j u_k \Sigma^0_{ jk} \geq \frac{C^2 
\tilde{\alpha}_n^2}{2 \lambda_{\max}(\Omegatrue)}. 
\end{equation}

Combining (\ref{eq22}), (\ref{eq23}), (\ref{eq24}), and (\ref{eq25}), we get that 
$$
J_p (\eta^0 + \tilde{\alpha}_n u) - J_p (\eta^0)  > 
\frac{C^2 \tilde{\alpha}_n^2}{2 \lambda_{\max}(\Omegatrue)} - 2C \tilde{\alpha}_n^2, 
$$
with probability at least $1 - O(n^{-\gamma})$, for large enough $n$.

Choosing $C = 4 \lambda_{\max}(\Omegatrue) + 1$, 
we obtain that 
$$
\inf_{ u: u_{ \left( {\mathcal{A}_n^p} \right)^c } = 0, \; \|u\|_2 = C } J_p (\eta^0 + 
\tilde{\alpha}_n u) > J_p (\eta^0),  
$$
with probability at least $1 - O(n^{-\gamma})$, for large enough $n$. Hence, for every $\eta > 0$, a local 
minimum (in fact a global minimum due to convexity) of the restricted minimization problem (\ref{eq20}) exists 
within the ball $\{\eta: \|\eta - \eta^0\|_2 < C 
\sqrt{d_n} \lambda_{1,n}\}$, with probability at least $1 - O(n^{-\eta})$, for sufficiently large $n$. 
\end{proof}

\begin{lemma} \label{lem:consistency:6}
There exists a constant $C_1 > 0$ such that, for any $\gamma > 0$, the following holds with probability at least $1 - O(n^{-\gamma})$.

For any $\eta$ in the set 
$$
S = \{\eta: \|\eta - \eta^0\|_2 \geq C_1 \sqrt{d_n} 
\lambda_{1,n}, \; \eta_j = 0 \; \forall j \notin \mathcal{A}_n^p\}, 
$$
we have $\left\| d_{\mathcal{A}_n^p} (\eta) \right\|_2 > \sqrt{d_n} \lambda_{1,n}$, 
where $d_{\mathcal{A}_n^p} (\eta) = \left( d_j (\eta) \right)_{j \in 
\mathcal{A}_n^p}$. 
\end{lemma}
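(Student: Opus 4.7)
The plan is to exploit the fact that $\tilde{L}(\eta) = (\eta^T,1)\,S\,(\eta^T,1)^T$ is quadratic, so its gradient $d(\eta)$ is affine in $\eta$; this turns the claim into a linear-algebra statement about the sample covariance on the support block $\mathcal{A}_n^p$. First, I would record the exact identity
$$
d_{\mathcal{A}_n^p}(\eta) = d_{\mathcal{A}_n^p}(\eta^0) + 2\, S_{\mathcal{A}_n^p\mathcal{A}_n^p}\,(\eta-\eta^0)_{\mathcal{A}_n^p},
$$
which holds for every $\eta$ in the set $S$ because the restriction $\eta_j = 0$ for $j \notin \mathcal{A}_n^p$ (and $\eta_p = \eta^0_p = 1$) kills all other contributions from the affine map. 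The reverse triangle inequality then gives the uniform lower bound
$$
\|d_{\mathcal{A}_n^p}(\eta)\|_2 \;\geq\; 2\,\lambda_{\min}\!\left(S_{\mathcal{A}_n^p\mathcal{A}_n^p}\right)\,\|\eta-\eta^0\|_2 \;-\; \|d_{\mathcal{A}_n^p}(\eta^0)\|_2.
$$

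Next I would control the two ingredients on the right separately. For the minimum eigenvalue, condition (ii) supplies a uniform positive lower bound on $\lambda_{\min}(\Sigma^0_{\mathcal{A}_n^p\mathcal{A}_n^p})$ (as a principal submatrix of a matrix with eigenvalues bounded away from $0$). By Lemma \ref{lem:consistency:1} and Weyl's inequality,
$$
\left|\lambda_{\min}(S_{\mathcal{A}_n^p\mathcal{A}_n^p}) - \lambda_{\min}(\Sigma^0_{\mathcal{A}_n^p\mathcal{A}_n^p})\right| \;\leq\; \|S_{\mathcal{A}_n^p\mathcal{A}_n^p} - \Sigma^0_{\mathcal{A}_n^p\mathcal{A}_n^p}\|_2 \;\leq\; |\mathcal{A}_n^p|\, C_\gamma\sqrt{\tfrac{\log n}{n}} \;=\; O\!\left(d_n\sqrt{\tfrac{\log n}{n}}\right),
$$
with probability at least $1 - O(n^{-\gamma})$; the right-hand side is $o(1)$ by condition (v), so $\lambda_{\min}(S_{\mathcal{A}_n^p\mathcal{A}_n^p}) \geq c > 0$ for all $n$ large enough. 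For the second term, Lemma \ref{lem:consistency:3} combined with $|\mathcal{A}_n^p| \leq d_n$ yields
$$
\|d_{\mathcal{A}_n^p}(\eta^0)\|_2 \;\leq\; \sqrt{d_n}\,\max_{i}\,|d_i(\eta^0)| \;=\; O\!\left(\sqrt{d_n}\,\sqrt{\tfrac{\log n}{n}}\right) \;=\; o\!\left(\sqrt{d_n}\,\lambda_{1,n}\right),
$$
since $\lambda_{1,n}\sqrt{n/\log n}\to\infty$ by condition (v).

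Combining these bounds, for any $\eta \in S$ one has $\|\eta-\eta^0\|_2 \geq C_1\sqrt{d_n}\,\lambda_{1,n}$, and therefore
$$
\|d_{\mathcal{A}_n^p}(\eta)\|_2 \;\geq\; 2c\,C_1\sqrt{d_n}\,\lambda_{1,n} - o\!\left(\sqrt{d_n}\,\lambda_{1,n}\right),
$$
so choosing any $C_1 > 1/c$ gives $\|d_{\mathcal{A}_n^p}(\eta)\|_2 > \sqrt{d_n}\,\lambda_{1,n}$ for all sufficiently large $n$, uniformly over $\eta\in S$, on an event of probability at least $1-O(n^{-\gamma})$. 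The main obstacle is the operator-norm perturbation bound on the $\mathcal{A}_n^p$ block of $S$: the only accessible entrywise bound is the $\ell_\infty$ control from Lemma \ref{lem:consistency:1}, and one is forced into the crude estimate $\|\cdot\|_2 \leq |\mathcal{A}_n^p|\,\|\cdot\|_\infty$, which is sharp enough only thanks to the rate $d_n\sqrt{\log n/n}\to 0$ imposed by condition (v); without that scaling, $\lambda_{\min}(S_{\mathcal{A}_n^p\mathcal{A}_n^p})$ could not be controlled and the argument would break down.
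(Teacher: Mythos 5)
Your proposal is correct and follows essentially the same route as the paper's proof: both exploit that $d(\eta)$ is affine in $\eta$ to write $d_{\mathcal{A}_n^p}(\eta)=d_{\mathcal{A}_n^p}(\eta^0)+2S_{\mathcal{A}_n^p\mathcal{A}_n^p}(\eta-\eta^0)_{\mathcal{A}_n^p}$, control $\|d_{\mathcal{A}_n^p}(\eta^0)\|_2$ via Lemma \ref{lem:consistency:3}, and reduce the leading term to the uniform eigenvalue bound $\lambda_{\min}(\Sigma^0_{\mathcal{A}_n^p\mathcal{A}_n^p})\geq 1/\lambda_{\max}(\Omegatrue)$ plus the entrywise perturbation bound $\|S_{\mathcal{A}_n^p\mathcal{A}_n^p}-\Sigma^0_{\mathcal{A}_n^p\mathcal{A}_n^p}\|_2\leq d_n\max_{i,j}|S_{ij}-\Sigma^0_{ij}|=o(1)$ from Lemma \ref{lem:consistency:1} and the rate conditions. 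The only cosmetic difference is that you package the perturbation through Weyl's inequality applied to $\lambda_{\min}(S_{\mathcal{A}_n^p\mathcal{A}_n^p})$, whereas the paper splits $S=\Sigma^0+(S-\Sigma^0)$ directly inside the triangle inequality; these are the same estimate rearranged, and your choice of $C_1$ (a constant multiple of $\lambda_{\max}(\Omegatrue)$) matches the paper's.
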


\begin{proof}
Recall that $\tilde{\alpha}_n = \sqrt{d_n} \lambda_{1,n}$. Choose $\eta \in S$ 
arbitrarily. Let $u = \eta - \eta^0/\tilde{\alpha}_n$. It follows that $u_j = 0$, 
for every $j \notin \mathcal{A}_n^p$ and $\|u\| \geq C_1$. By a first-order Taylor series 
expansion of $d_{\mathcal{A}_n^p}$, it follows that 
\begin{eqnarray}
d_{\mathcal{A}_n^p} (\eta) 
&=& d_{\mathcal{A}_n^p} (\eta^0) + 2 \tilde{\alpha}_n S_{\mathcal{A}_n^p \mathcal{A}_n^p} 
u_{\mathcal{A}_n^p} \nonumber\\
&=& d_{\mathcal{A}_n^p} (\eta^0) + 2 \tilde{\alpha}_n \Sigma^0_{ \mathcal{A}_n^p 
\mathcal{A}_n^p} u_{\mathcal{A}_n^p} + 2 \tilde{\alpha}_n 
\left( S_{\mathcal{A}_n^p \mathcal{A}_n^p} - \Sigma^0_{ \mathcal{A}_n^p \mathcal{A}_n^p} \right) 
u_{\mathcal{A}_n^p}. \label{eq26}
\end{eqnarray}

By Lemma \ref{lem:consistency:1} and Lemma \ref{lem:consistency:3}, it follows that, for any $\gamma > 0$, 
there exist constants $C_{2, \gamma}$ and $C_{3, \gamma}$ such that 
\begin{align*}
& \| d_{\mathcal{A}_n^p} (\eta) \|_2\\
& \quad \geq 2 \tilde{\alpha}_n \left\| \Sigma^0_{ \mathcal{A}_n^p \mathcal{A}_n^p} u_{\mathcal{A}_n^p} \right\|_2 - 
C_{2, \gamma} \sqrt{\frac{d_n \log n}{n}} - C_{3, \gamma} 
\|u\|_2 \frac{\tilde{\alpha}_n d_n \sqrt{\log n}}{\sqrt{n}}\\
& \quad \geq \frac{\tilde{\alpha}_n}{\lambda_{\max}(\Omegatrue)} \|u\|_2\\
& \quad = \sqrt{d_n} \lambda_{1,n} 
\frac{C_1}{\lambda_{\max}(\Omegatrue)}, 
\end{align*}
with probability at least $1 - O(n^{-\gamma})$ for large enough $n$. The last inequality follows by 
condition (iii) and since $d_n (\log n / n)^{1/2} \to 0$.

Choosing $C_1 = \lambda_{\max}(\Omegatrue) + 1$ leads to the required result. 
\end{proof}

The next lemma establishes estimation and model selection (sign) consistency for the restricted 
minimization problem (\ref{eq20}). 
\begin{lemma} \label{lem:consistency:7}
There exists $C_2 > 0$ such that, for any $\gamma > 0$, the following holds with probability at least $1 - 
O(n^{-\gamma})$ for large enough $n$:
\begin{enumerate}[a.]
\item there exists a solution to the restricted minimization problem (\ref{eq20})
\item (estimation consistency) any global minimum of the restricted minimization problem (\ref{eq20}) lies within the ball $\{\eta: \|\eta - \eta^0\|_2 < 
C_2 \sqrt{d_n} \lambda_{1,n}\}$
\item (sign consistency) for any solution 
$\hat\eta$ of the restricted minimization problem (\ref{eq20}), $\sign \hat{\eta}_j = 
\sign \eta^0_j$, for every $1 \leq j \leq r$.
\end{enumerate}
\end{lemma}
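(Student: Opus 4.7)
The plan is to derive all three conclusions essentially as immediate consequences of the two preceding lemmas, using convexity plus the Karush-Kuhn-Tucker conditions, together with the signal strength assumption (vi) to upgrade estimation consistency to sign consistency.

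First, I would establish existence (conclusion (a)). The objective $J_p$ is convex, and the restricted problem (\ref{eq20}) is a minimization over the linear subspace of $\reals^{p-1}$ indexed by $\mathcal{A}_n^p$. Lemma \ref{lem:consistency:6.1} already exhibits a global minimizer inside the ball $\{\eta : \|\eta-\eta^0\|_2 < C\sqrt{d_n}\lambda_{1,n}\}$ with probability at least $1 - O(n^{-\gamma})$, so existence is automatic on this event.

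Next, for estimation consistency (conclusion (b)), let $\hat\eta$ be any global minimizer of (\ref{eq20}). Because the constraint $\eta_j=0$ for $j \notin \mathcal{A}_n^p$ is enforced by restriction rather than by a penalty, the KKT conditions of Lemma \ref{lem:consistency:2} apply coordinatewise on $\mathcal{A}_n^p$: for every $j \in \mathcal{A}_n^p$, $|d_j(\hat\eta)| \leq \lambda_{1,n}$, with equality whenever $\hat\eta_j \neq 0$. Summing squares across the (at most $d_n$) indices in $\mathcal{A}_n^p$ gives $\|d_{\mathcal{A}_n^p}(\hat\eta)\|_2 \leq \sqrt{d_n}\lambda_{1,n}$. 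On the event of Lemma \ref{lem:consistency:6}, this forces $\hat\eta \notin S$, which means precisely $\|\hat\eta - \eta^0\|_2 < C_1 \sqrt{d_n}\lambda_{1,n}$. Taking $C_2 = C_1$ then yields the required bound, uniformly over all global minimizers.

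For sign consistency (conclusion (c)), I would combine the bound $\|\hat\eta-\eta^0\|_2 < C_2 \sqrt{d_n}\lambda_{1,n}$ with the signal-strength hypothesis in condition (vi). The assumption implies a lower bound of order larger than $\sqrt{q_n}\lambda_{1,n}$ on the minimum nonzero magnitude of $\omega^0_{\X}$; translating through $\eta^0_j = -\omega^0_{pj}/\omega^0_{pp}$ and using the uniform upper bound on $1/\omega^0_{pp}$ (a consequence of condition (ii)), the same type of lower bound transfers to $\min_{j \in \mathcal{A}_n^p} |\eta^0_j|$. Since $d_n = O(q_n)$, the componentwise estimation error $|\hat\eta_j - \eta^0_j| \leq \|\hat\eta-\eta^0\|_2 = O(\sqrt{d_n}\lambda_{1,n})$ is asymptotically dominated by $|\eta^0_j|$ for every $j \in \mathcal{A}_n^p$, forcing $\sign \hat\eta_j = \sign \eta^0_j$. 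For $j \notin \mathcal{A}_n^p$ both sides are zero by the restriction.

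The main obstacle is the sign consistency step: conclusions (a) and (b) are essentially mechanical consequences of Lemmas \ref{lem:consistency:6.1}, \ref{lem:consistency:2}, and \ref{lem:consistency:6}, but turning the $\ell_2$ estimation error into a coordinatewise sign guarantee requires carefully translating the signal-strength assumption from the $\omega$-parametrization to the $\eta$-parametrization and verifying that $\sqrt{d_n}\lambda_{1,n}$ is of smaller order than the minimum nonzero $|\eta^0_j|$ under the tuning-parameter and support-size restrictions of condition (v).
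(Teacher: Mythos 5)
Your proposal is correct and follows essentially the same route as the paper: existence from Lemma \ref{lem:consistency:6.1}, the KKT bound $\|d_{\mathcal{A}_n^p}(\hat\eta)\|_2 \leq \sqrt{d_n}\lambda_{1,n}$ combined with the contrapositive of Lemma \ref{lem:consistency:6} for estimation consistency, and condition (vi) with $d_n \leq q_n$ (and the eigenvalue bound on $\Omega^0_{pp}$) to dominate the componentwise error and obtain sign consistency. Your reading of condition (vi) as a minimum-signal-strength ("beta-min") condition matches the way the paper itself uses $s_n$ in the displayed inequality $\eta^0_j \geq s_n/\lambda_{\max}(\Omegatrue)$.
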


\begin{proof}
The existence  of a solution follows from Lemma \ref{lem:consistency:6}.

By the optimality conditions 
for the restricted minimization problem (\ref{eq20}) (along the lines of Lemma \ref{lem:consistency:2}), it 
follows that, for any solution $\hat\eta$ of (\ref{eq20}), $|d_j (\hat\eta)| \leq 
\lambda_{1,n}$, for every $j \in \mathcal{A}_n^p$. It follows that $\left\| d_{\mathcal{A}_n^p} 
(\hat\eta) \right\|_2 \leq \sqrt{d_n} \lambda_{1,n}$. Estimation consistency now 
follows from Lemma \ref{lem:consistency:7}.

Note that, by condition (vi) and the fact that $d_n \leq q_n$,
$$
\eta^0_j \geq \frac{s_n}{\lambda_{\max}(\Omegatrue)} > 2 C_2 \sqrt{d_n} \lambda_{1,n},
$$
for every $j \in \mathcal{A}_n^p$ and for sufficiently large $n$. Sign consistency now follows by combining this fact with 
$\|\eta - \eta^0\|_2 < C_2 \sqrt{d_n} \lambda_{1,n}$. 
\end{proof}

The next lemma will be instrumental in showing that the solution set of the restricted minimization problem 
(\ref{eq20}) is the same as the solution set of the unrestricted minimization problem for $J_p$ 
with high probability. 

\begin{lemma} \label{lem:consistency:8}
For any $\gamma > 0$, any solution $\hat\eta$ of (\ref{eq20}) satisfies 
$$
\max_{j \notin \mathcal{A}_n^p} \left| d_j (\hat\eta) \right| < \lambda_{1,n}, 
$$
with probability at least $1 - O(n^{-\gamma})$ for large enough $n$. 
\end{lemma}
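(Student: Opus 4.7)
The plan is to carry out the primal-dual-witness style argument: use the restricted KKT conditions from Lemma S-2 (restated for problem \eqref{eq20}) to solve for $\hat\eta_{\mathcal{A}_n^p} - \eta^0_{\mathcal{A}_n^p}$ explicitly, substitute into the gradient $d_j(\hat\eta)$ for $j \notin \mathcal{A}_n^p$, and show by the incoherence condition \eqref{eq:incoh2} that the resulting expression is strictly bounded above by $\lambda_{1,n}$ with high probability.

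In more detail, since $d_i$ is affine in $\eta$, an exact Taylor expansion at $\eta^0$ gives, for every $j \in \{1,\ldots,p-1\}$,
\begin{equation*}
d_j(\hat\eta) = d_j(\eta^0) + 2\, S_{j, \mathcal{A}_n^p}(\hat\eta_{\mathcal{A}_n^p} - \eta^0_{\mathcal{A}_n^p}),
\end{equation*}
using that $\hat\eta_k = 0$ for $k \notin \mathcal{A}_n^p$. Applying this identity with $j \in \mathcal{A}_n^p$, together with the restricted optimality condition $d_j(\hat\eta) = -\lambda_{1,n}\, \sign \eta^0_j$ (valid because of sign consistency from Lemma S-7), yields, on the event that $S_{\mathcal{A}_n^p \mathcal{A}_n^p}$ is invertible (which holds with high probability by Lemma S-1 and condition (ii) since $d_n \sqrt{(\log n)/n} \to 0$),
\begin{equation*}
\hat\eta_{\mathcal{A}_n^p} - \eta^0_{\mathcal{A}_n^p} = -\tfrac{1}{2} S_{\mathcal{A}_n^p \mathcal{A}_n^p}^{-1} \bigl( d_{\mathcal{A}_n^p}(\eta^0) + \lambda_{1,n}\, \sign \eta^0_{\mathcal{A}_n^p} \bigr).
\end{equation*}
Plugging this back into the expansion of $d_j(\hat\eta)$ for $j \notin \mathcal{A}_n^p$ gives the decomposition
\begin{equation*}
d_j(\hat\eta) = d_j(\eta^0) - S_{j, \mathcal{A}_n^p} S_{\mathcal{A}_n^p \mathcal{A}_n^p}^{-1} d_{\mathcal{A}_n^p}(\eta^0) - \lambda_{1,n}\, S_{j, \mathcal{A}_n^p} S_{\mathcal{A}_n^p \mathcal{A}_n^p}^{-1} \sign \eta^0_{\mathcal{A}_n^p}.
\end{equation*}

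The first two terms are the ``noise'' pieces. By Lemma S-3 each coordinate of $d_{\mathcal{A}_n^p}(\eta^0)$ is $O(\sqrt{(\log n)/n})$ with probability at least $1-O(n^{-\gamma})$, and since $\sqrt{(\log n)/n} = o(\lambda_{1,n})$ by condition (v) (which implies $\lambda_{1,n} \sqrt{n/\log n} \to \infty$), the first term is $o(\lambda_{1,n})$. For the second term I would combine Lemma S-1 (uniform concentration of $S$ around $\Sigma^0$) with condition (ii) (bounds on the eigenvalues of $\Sigma^0$) to control $S_{\mathcal{A}_n^p \mathcal{A}_n^p}^{-1}$ and $S_{j,\mathcal{A}_n^p}$ in operator/$\ell_2$ norm (with dimension factor $\sqrt{d_n}$), giving again a bound of order $\sqrt{d_n (\log n)/n} \cdot \sqrt{d_n}\cdot O(1) = o(\lambda_{1,n})$ using $d_n \sqrt{(\log n)/n}\to 0$ and $d_n^{1/2}\lambda_{1,n}\to 0$.

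The third term is the ``bias'' term and is where incoherence does its work. The population analog $\Sigma^0_{j, \mathcal{A}_n^p}(\Sigma^0_{\mathcal{A}_n^p, \mathcal{A}_n^p})^{-1} \sign \eta^0_{\mathcal{A}_n^p}$ is bounded by $\delta < 1$ in absolute value by \eqref{eq:incoh2} (noting $\sign \eta^0_k = \sign \Omega^0_{pk}$ because $\Omega^0_{pp} > 0$). Replacing $\Sigma^0$ by $S$ incurs an error that I would again bound via Lemma S-1 and a standard matrix-perturbation identity for $S_{\mathcal{A}_n^p \mathcal{A}_n^p}^{-1} - (\Sigma^0_{\mathcal{A}_n^p, \mathcal{A}_n^p})^{-1}$, producing an additive $o(1)$ term. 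Thus with probability at least $1 - O(n^{-\gamma})$,
\begin{equation*}
\max_{j \notin \mathcal{A}_n^p} |d_j(\hat\eta)| \leq \bigl(\delta + o(1)\bigr)\lambda_{1,n} + o(\lambda_{1,n}) < \lambda_{1,n}
\end{equation*}
for all sufficiently large $n$, which is the desired strict inequality. The main obstacle will be carrying out the perturbation bounds on $S_{\mathcal{A}_n^p \mathcal{A}_n^p}^{-1}$ uniformly in $j$ with the correct $d_n$-dependence, since naive elementwise applications of Lemma S-1 lose factors of $d_n$; using the Frobenius-to-operator-norm inequality together with $d_n \sqrt{(\log n)/n} \to 0$ is the cleanest route and I would follow that here.
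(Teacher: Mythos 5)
Your overall strategy is the paper's: an exact first-order expansion of the affine gradient $d_j$, substitution of the restricted KKT conditions on the sign-consistency event from Lemma \ref{lem:consistency:7}, and the incoherence condition \eqref{eq:incoh2} to control the bias term, followed by a union bound over $j \notin \mathcal{A}_n^p$. The one structural difference is that you invert the sample submatrix $S_{\mathcal{A}_n^p \mathcal{A}_n^p}$, whereas the paper inverts the population matrix $H_n = 2\Sigma^0_{\mathcal{A}_n^p \mathcal{A}_n^p}$ and carries the discrepancy $\bigl( S_{\mathcal{A}_n^p \mathcal{A}_n^p} - \Sigma^0_{\mathcal{A}_n^p \mathcal{A}_n^p} \bigr) \hat{u}_n$ along as an explicit remainder (see (\ref{eq27})--(\ref{eq28})). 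This is not merely cosmetic: in the paper's version the sample--population discrepancy only ever multiplies $\hat{u}_n$, whose norm is already $O(\sqrt{d_n}\,\lambda_{1,n})$ by Lemma \ref{lem:consistency:7}, and the resulting term is handled by the fixed-vector concentration step (\ref{eq33}), giving $O\bigl( \sqrt{d_n (\log n)/n}\, \cdot \sqrt{d_n}\,\lambda_{1,n} \bigr) = o(\lambda_{1,n})$ using only $d_n \sqrt{(\log n)/n} \to 0$.

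The gap in your version is in the rate bookkeeping of the remainder terms. For the bias perturbation you must control $\bigl( S_{j,\mathcal{A}_n^p} S_{\mathcal{A}_n^p \mathcal{A}_n^p}^{-1} - \Sigma^0_{j,\mathcal{A}_n^p} ( \Sigma^0_{\mathcal{A}_n^p \mathcal{A}_n^p} )^{-1} \bigr) \sign \eta^0_{\mathcal{A}_n^p}$, where the sign vector has $\ell_2$ norm up to $\sqrt{d_n}$; the Frobenius-to-operator route you propose gives $\| S_{\mathcal{A}_n^p \mathcal{A}_n^p}^{-1} - ( \Sigma^0_{\mathcal{A}_n^p \mathcal{A}_n^p} )^{-1} \|_2 = O\bigl( d_n \sqrt{(\log n)/n} \bigr)$, hence a contribution of order $d_n^{3/2} \sqrt{(\log n)/n}$ after hitting the sign vector, and $d_n^{3/2} \sqrt{(\log n)/n} \to 0$ does \emph{not} follow from condition (v) (which only yields $d_n \sqrt{(\log n)/n} \to 0$). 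A similar spurious $\sqrt{d_n}$ appears in your bound $O\bigl( d_n \sqrt{(\log n)/n} \bigr)$ for the noise term $S_{j,\mathcal{A}_n^p} S_{\mathcal{A}_n^p \mathcal{A}_n^p}^{-1} d_{\mathcal{A}_n^p}(\eta^0)$: note that $d_n \sqrt{(\log n)/n} \to 0$ does not imply $d_n \sqrt{(\log n)/n} = o(\lambda_{1,n})$, since $\lambda_{1,n} \to 0$ as well (take $d_n \asymp n^{0.4}$ and $\lambda_{1,n} \asymp n^{-0.21}$: all stated conditions hold, yet $d_n n^{-1/2}/\lambda_{1,n} \to \infty$). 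Both issues are repairable: bound $\| S_{j,\mathcal{A}_n^p} S_{\mathcal{A}_n^p \mathcal{A}_n^p}^{-1} \|_2$ by an $O(1)$ operator-norm quantity as in (\ref{eq31}), which gives the correct rate $O(1) \cdot \| d_{\mathcal{A}_n^p}(\eta^0) \|_2 = O\bigl( \sqrt{d_n (\log n)/n} \bigr) = o(\lambda_{1,n})$ as in (\ref{eq32}); and for the inverse perturbation, apply the componentwise sub-Gaussian concentration of $(S - \Sigma^0) v$ for a fixed vector $v$ of controlled norm (the device used at (\ref{eq33})) rather than the operator-norm bound on $S - \Sigma^0$. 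With those substitutions your argument closes under the stated conditions.
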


\begin{proof}
Let $\gamma > 0$ be given, and let $\hat\eta$ be a solution of (\ref{eq20}). If 
$C_n = \{ \sign \hat\eta = \sign \eta^0 \}$, then $\P(C_n) 
\geq 1 - O(n^{-\gamma-\kappa})$ for large enough $n$ (by Lemma \ref{lem:consistency:7}). Now, on 
$C_n$, it follows by a first-order expansion of $d_{\mathcal{A}_n^p}$ around 
$\eta^0$ and the optimality conditions for (\ref{eq20}), that 
\begin{eqnarray}
- \lambda_{1,n} \sign \eta^0_{\mathcal{A}_n^p} 
&=& d_{\mathcal{A}_n^p} (\hat\eta) \nonumber\\
&=& d_{\mathcal{A}_n^p} (\eta^0) + 2 S_{\mathcal{A}_n^p \mathcal{A}_n^p} \hat{u}_n 
\nonumber\\
&=& H_n \hat{u}_n + d_{\mathcal{A}_n^p} (\eta^0) + 2 \left( S_{\mathcal{A}_n^p 
\mathcal{A}_n^p} - \Sigma^0_{ \mathcal{A}_n^p \mathcal{A}_n^p} \right) \hat{u}_n, \label{eq27} 
\end{eqnarray}
where $\hat{u}_n = \hat\eta - \eta^0$, and $H_n = 2 
\Sigma^0_{ \mathcal{A}_n^p \mathcal{A}_n^p}$.

Hence,
\begin{equation} \label{eq28}
\hat{u}_n = - \lambda_{1,n} H_n^{-1} \sign \eta^0_{\mathcal{A}_n^p} - H_n^{-1} 
d_{\mathcal{A}_n^p} (\eta^0) - 2 H_n^{-1} \left( S_{\mathcal{A}_n^p \mathcal{A}_n^p} - 
\Sigma^0_{ \mathcal{A}_n^p \mathcal{A}_n^p} \right) \hat{u}_n. 
\end{equation}

Now, let us fix $j \notin \mathcal{A}_n^p$. By a first-order Taylor series expansion of $d_j$, it follows 
that 
$$
d_j (\hat\eta) = d_j (\eta^0) + 2 S_{i,\mathcal{A}_n^p}^T \hat{u}_n. 
$$

Using (\ref{eq28}), we get that 
\begin{eqnarray} \label{eq29}
d_j (\hat\eta) 
&=& d_j (\eta^0) + 2 (S_{j,\mathcal{A}_n^p} - \Sigma^0_{j,\mathcal{A}_n^p})^T 
\hat{u}_n + 2 ( \Sigma^0_{j,\mathcal{A}_n^p} )^T \hat{u}_n \nonumber\\
&=&  -2 \lambda_{1,n} ( \Sigma^0_{j,\mathcal{A}_n^p} )^T H_n^{-1} \sign \eta^0_{\mathcal{A}_n^p} + d_j (\eta^0) - 2 
( \Sigma^0_{j,\mathcal{A}_n^p} )^T H_n^{-1} d_{\mathcal{A}_n^p} (\eta^0) + 
\nonumber\\
& & - 4 ( \Sigma^0_{j,\mathcal{A}_n^p} )^T H_n^{-1} \left( S_{\mathcal{A}_n^p \mathcal{A}_n^p} - 
\Sigma^0_{ \mathcal{A}_n^p \mathcal{A}_n^p} \right) \hat{u}_n + 2 (S_{i,\mathcal{A}_n^p} - 
\Sigma^0_{i,\mathcal{A}_n^p})^T \hat{u}_n. 
\end{eqnarray}

We now individually analyze all the terms in (\ref{eq29}).

It follows, by \eqref{eq:incoh2}, that the first term satisfies 
\begin{equation} \label{eq30}
\left| -2 \lambda_{1,n} ( \Sigma^0_{j,\mathcal{A}_n^p} )^T H_n^{-1} \sign \eta^0_{\mathcal{A}_n^p} \right| \leq \delta \lambda_{1,n} < \lambda_{1,n}. 
\end{equation}

It follows, by Lemma \ref{lem:consistency:3} and since $(1 / \lambda_{1,n}) ( (d_n / n) \log n )^{1/2} \to 0$ and $d_n^{1/2} \lambda_{1,n} \to 0$, that the second term $d_j 
(\eta^0)$ is $o(\lambda_{1,n})$ with probability at least $1 - O(n^{-\gamma-\kappa})$ for 
large enough $n$.

Also, by condition (ii) and the definition of $H_n$, we get that 
\begin{equation} \label{eq31}
\left\| 2 ( \Sigma^0_{j,\mathcal{A}_n^p} )^T H_n^{-1} \right\|_2 \leq \left\| \Sigma^0_{j,\mathcal{A}_n^p} 
\right\|_2 \| 2 H_n^{-1} \|_2 \leq \frac{1}{\lambda_{\min}(\Omegatrue)} \left\| \left( \Sigma_{ \mathcal{A}_n^p \mathcal{A}_n^p}^0 \right)^{-1} 
\right\|_2 \leq \frac{\lambda_{\max}(\Omegatrue)}{\lambda_{\min}(\Omegatrue)},
\end{equation}
where $\| \cdot \|_2$ here denotes the $\ell_2$ operator norm (maximum singular value).  It follows, by Lemma \ref{lem:consistency:3} and since $(1 / \lambda_{1,n}) ( (d_n / n) \log n )^{1/2} \to 0$ and $d_n^{1/2} \lambda_{1,n} \to 0$, that the third term in (\ref{eq29}) 
satisfies 
\begin{equation} \label{eq32}
\left| 2 ( \Sigma^0_{j,\mathcal{A}_n^p} )^T H_n^{-1} d_{\mathcal{A}_n^p} (\eta^0) \right| \leq \frac{\lambda_{\max}(\Omegatrue)}{\lambda_{\min}(\Omegatrue)} \sqrt{d_n} \max_{j \in \mathcal{A}_n^p} 
|d_j (\eta^0)| = o(\lambda_{1,n}). 
\end{equation}

Let $b = 2 H_n^{-1} \Sigma_{j,\mathcal{A}_n^p}$. Note that, by (\ref{eq31}), the norm of 
$b$ is uniformly bounded in $n$ and $r$. Also note that the $j$th element of the vector $\left( 
S_{\mathcal{A}_n^p \mathcal{A}_n^p} - \Sigma^0_{ \mathcal{A}_n^p \mathcal{A}_n^p} \right) b$ 
is the difference between the sample and the population covariance of $X_j$ and $\sum_{k \in 
\mathcal{A}_n^p} b_k X_k$. Using the same line of arguments as in the proof of 
Lemma \ref{lem:consistency:3}, it follows that there exists a constant $C_{4, \gamma} > 0$ such 
that 
\begin{equation} \label{eq33}
\max_{j \in \mathcal{A}_n^p} \left| \left( \left( S_{\mathcal{A}_n^p \mathcal{A}_n^p} - \Sigma^0_{ 
\mathcal{A}_n^p \mathcal{A}_n^p} \right) b \right)_j \right| \leq C_{4, \gamma} 
\sqrt{\frac{\log n}{n}}, 
\end{equation}
with probability at least $1 - O(n^{-\gamma-\kappa})$ for large enough $n$. By (\ref{eq31}), (\ref{eq33}), 
claim (b) in Lemma \ref{lem:consistency:7}, and since $(1 / \lambda_{1,n}) ( (d_n / n) \log n )^{1/2} \to 0$ and $d_n^{1/2} \lambda_{1,n} \to 0$, we have that the 
fourth term in (\ref{eq29}) satisfies 
\begin{eqnarray} \label{eq34}
\left| 4 ( \Sigma^0_{j,\mathcal{A}_n^p} )^T H_n^{-1} \left( S_{\mathcal{A}_n^p \mathcal{A}_n^p} - 
\Sigma^0_{ \mathcal{A}_n^p \mathcal{A}_n^p} \right) \hat{u}_n \right| 
&\leq& 2 \left\| \left( S_{\mathcal{A}_n^p \mathcal{A}_n^p} - \Sigma^0_{ \mathcal{A}_n^p \mathcal{A}_n^p} 
\right) b \right\|_2 \|\hat{u}_n\|_2 \nonumber\\ 
&=& O \left( \sqrt{\frac{d_n \log n}{n}} \sqrt{d_n} \lambda_{1,n} \right) \\
&=& o(\lambda_{1,n}),
\end{eqnarray}
with probability at least $1 - O(n^{-\gamma-\kappa})$ for large enough $n$.

By Lemma \ref{lem:consistency:1}, 
claim (b) in Lemma \ref{lem:consistency:7}, and condition (ii), the fifth term in (\ref{eq29}) 
satisfies 
\begin{equation} \label{eq36}
\left| 2 (S_{i,\mathcal{A}_n^p} - \Sigma^0_{i,\mathcal{A}_n^p})^T \hat{u}_n \right| \leq 2 \left\| 
S_{i,\mathcal{A}_n^p} - \Sigma^0_{i,\mathcal{A}_n^p} \right\|_2 \|\hat{u}_n\|_2 = O \left( 
\sqrt{\frac{d_n \log n}{n}} \sqrt{d_n} \lambda_{1,n} \right) = o(\lambda_{1,n}). 
\end{equation}

It follows, by (\ref{eq29}), (\ref{eq30}), (\ref{eq32}), and (\ref{eq34})-(\ref{eq36}), that, for any $j \notin 
\mathcal{A}_n^p$, 
$$
\left| d_j (\hat\eta) \right| < \lambda_{1,n}, 
$$
with probability at least $1 - O(n^{-\gamma-\kappa})$ for large enough $n$. The result now 
follows by the union bound, and from the fact that $p = O(n^\kappa)$. 
\end{proof}

Let $\gamma > 0$ be chosen arbitrarily. Let $C_{p,n}$ denote the event on which 
Lemma \ref{lem:consistency:7} and Lemma \ref{lem:consistency:8} hold. It follows that 
$\P(C_{p,n}) \geq 1 - O(n^{-\gamma-\kappa})$, for large enough $n$. Now, on $C_{p,n}$, any solution 
of the restricted problem (\ref{eq20}) is also a global minimizer of $J_p$ (by 
Lemma \ref{lem:consistency:2}). Hence, there is at least one global minimizer of $J_p$ 
for which the components corresponding to $(\mathcal{A}_n^p)^c$ are zero. It again follows, by 
Lemma \ref{lem:consistency:2}, that these components are zero for all global minimizers 
of $J_p$. Hence, the solution set of the restricted minimization problem (\ref{eq20}) is the same as the solution set for the unrestricted problem (\ie, the set of 
global minimizers of $J_p$). Hence, on $C_{p,n}$, the assertions of 
Lemma \ref{lem:consistency:7} hold for the solutions of the unrestricted minimization problem 
for $J_p$. 

Now, let $\mathcal{B}_n^p = \mathcal{A}_n^p \cup \{p\}$. Using the sparsity in $\Omegatrue$ it can be shown that 
$\Omegatrue_{pp}$ is also the diagonal entry corresponding to the index $p$ in $\left( 
\Sigma^0_{\mathcal{B}_n^p \mathcal{B}_n^p} \right)^{-1}$. Let $\hat{\mathcal{A}}_n^p$ be the set 
of indices corresponding to the nonzero entries of any minimizer $\hat\eta$ of $J_p$, let $\hat{\Omega}_{pp}$ be the diagonal entry corresponding to the index $p$ for $\left( 
S_{\hat{\mathcal{B}}_n^p \hat{\mathcal{B}}_n^p} \right)^{-1}$, and let $\hat{\mathcal{B}}_n^p = 
\hat{\mathcal{A}}_n^p \cup \{p\}$. It follows that $\hat{\mathcal{B}}_n^p = 
{\mathcal{B}}_n^p$ on $C_{p,n}$, and that
\begin{eqnarray}
|\hat{\Omega}_{pp} - \Omegatrue_{pp}| 
&\leq& \left\| \left( S_{{\mathcal{B}}_n^p {\mathcal{B}}_n^p} 
\right)^{-1} - \left( \Sigma^0_{{\mathcal{B}}_n^p {\mathcal{B}}_n^p} \right)^{-1} \right\|_2 \nonumber\\
&\leq& \left\| \left( S_{{\mathcal{B}}_n^p {\mathcal{B}}_n^p} \right)^{-1} \right\|_2 \left\| 
S_{{\mathcal{B}}_n^p {\mathcal{B}}_n^p} - \Sigma^0_{{\mathcal{B}}_n^p 
{\mathcal{B}}_n^p} \right\|_2 \left\| \left( \Sigma^0_{{\mathcal{B}}_n^p 
{\mathcal{B}}_n^p} \right)^{-1} \right\|_2 \nonumber\\
&\leq& \lambda_{\max}(\Omegatrue) \left\| \left( S_{{\mathcal{B}}_n^p {\mathcal{B}}_n^p} \right)^{-1} \right\|_2 \left\| 
S_{{\mathcal{B}}_n^p {\mathcal{B}}_n^p} - \Sigma^0_{{\mathcal{B}}_n^p 
{\mathcal{B}}_n^p} \right\|_2 \nonumber\\
&\leq& d_n \lambda_{\max}(\Omegatrue) \left\| \left( S_{{\mathcal{B}}_n^p {\mathcal{B}}_n^p} \right)^{-1} \right\|_2 
\max_{1 \leq i,j \leq p} |S_{ij} - \Sigma^0_{ij}|. \label{eq37} 
\end{eqnarray}

Note that, by Lemma \ref{lem:consistency:1}, there exists a constant $C_{\gamma + \kappa}$ such that 
\begin{eqnarray*}
\| S - \Sigma^0_n \|_{\max} = \max_{1 \leq i,j \leq p} |S_{ij} - \Sigma^0_{ij}|  \leq 
C_{\gamma+\kappa} \sqrt{\frac{\log n}{n}}, 
\end{eqnarray*}
with probability at least $1 - O(n^{-\gamma-\kappa})$ for large enough $n$. Let $D_n$ denote the 
event on which the above inequality holds. Hence, on $D_n$, we get 
\begin{eqnarray}
\left\| \left( S_{{\mathcal{B}}_n^p {\mathcal{B}}_n^p} \right)^{-1} \right\|_2 
&\leq& \left\| \left( \Sigma^0_{{\mathcal{B}}_n^p {\mathcal{B}}_n^p} \right)^{-1} \right\|_2 + \left\| 
\left( S_{{\mathcal{B}}_n^p {\mathcal{B}}_n^p} \right)^{-1} - \left( \Sigma^0_{{\mathcal{B}}_n^p 
{\mathcal{B}}_n^p} \right)^{-1} \right\|_2 \nonumber\\
&\leq& \lambda_{\max}(\Omegatrue) + d_n \lambda_{\max}(\Omegatrue) \left\| \left( S_{{\mathcal{B}}_n^p {\mathcal{B}}_n^p} \right)^{-1} \right\|_2 
\max_{1 \leq i,j \leq p} |S_{ij} - \Sigma^0_{ij}| \nonumber\\
&\leq& \lambda_{\max}(\Omegatrue) + \lambda_{\max}(\Omegatrue) C_{\gamma+\kappa} d_n \sqrt{\frac{\log n}{n}} \label{eq38}
\end{eqnarray}
for large enough $n$. It follows, by (\ref{eq37}), (\ref{eq38}), and since $d_n (\log n / n)^{1/2} \to 0$, that on $C_{p,n} \cap D_n$ 
\begin{equation} \label{eq39}
|\hat{\Omega}_{pp} - \Omegatrue_{pp}| \leq 2 \lambda_{\max}^2(\Omegatrue) C_{\gamma+\kappa} d_n \sqrt{\frac{\log n}{n}} 
\end{equation}
for large enough $n$.

For every $1 \leq i \leq p$, the above argument can be repeated verbatim by 
considering $\eta$ to be the $i$th (off-diagonal) row of $\Omegatrue$ normalized by the 
corresponding entry, and constructing the $J_i$, $\mathcal{A}_n^i$, \etc~accordingly. Then, by maximizing 
$J_i$, we can obtain $\hat{\mathcal{A}}_n^i$ such that there exists a set $C_{i,n}$ with $\P(C_{i,n}) = 1 - O(n^{-\gamma-\kappa})$ for large enough $n$, and $\hat{\mathcal{A}}_n^i = 
\mathcal{A}_n^i$ on $C_{i,n}$. Again, it can be shown in exactly the same way as above (for the case of 
the $p$th row), that if $\hat{\Omega}_{ii}$ is the diagonal entry corresponding to the index $i$ for $\left( 
S_{\hat{\mathcal{B}}_n^i \hat{\mathcal{B}}_n^i} \right)^{-1}$, then on $C_{i,n} \cap D_n$
\begin{equation} \label{eq40}
|\hat{\Omega}_{ii} - \Omegatrue_{ii}| \leq 2 \lambda_{\max}(\Omegatrue)^2 C_{\gamma+\kappa} d_n 
\sqrt{\frac{\log n}{n}}. 
\end{equation}

It follows, by (\ref{eq39}) and (\ref{eq40}), that on $\left( \cap_{i=1}^p C_{i,n} \right) \cap D_n$
\begin{equation} \label{eq41}
\max_{1 \leq i \leq p} |\hat{\Omega}_{ii} - \Omegatrue_{ii}| \leq 2 \lambda_{\max}^2(\Omegatrue) C_{\gamma+\kappa} 
d_n \sqrt{\frac{\log n}{n}}. 
\end{equation}

Since 
$$
\P \left( \left( \cap_{i=1}^p C_{i,n} \right) \cap D_n \right) \geq 1 - (p+1) O(n^{-\gamma-\kappa}) = 1 - 
O(n^{-\gamma}) 
$$
for large enough $n$, we have achieved our goal. 

Note that the estimation accuracy in Lemma \ref{lem:consistency:7} is $\sqrt{d_n} \lambda_{1,n}$. Hence, an 
estimate of $\Omega_{pp}$ based on $\hat\eta$ has estimation accuracy larger than or 
equal to $\sqrt{d_n} \lambda_{1,n}$. Since 
$$
d_n \sqrt{\frac{\log n}{n}} = \sqrt{d_n} \sqrt{\frac{d_n \log n}{n}} = o(\sqrt{d_n} \lambda_{1,n}), 
$$
$(1 / \lambda_{1,n}) ( (d_n / n) \log n )^{1/2} \to 0$, and $d_n^{1/2} \lambda_{1,n} \to 0$, it follows that a two-step procedure gives a provably better estimation accuracy 
than direct lasso based estimates of the diagonal entries of $\Omegatrue$.

\end{document}